\newcommand{\defn}[1]{\emph{\textbf{\boldmath #1}}}
\newcommand{\alex}[1]{{\textcolor{Magenta}{ [ Alex: #1 ] }}}
\newcommand{\mfc}[1]{{\textcolor{Red}{ [ Martin: #1 ] }}}
\newcommand{\michael}[1]{{\textcolor{Blue}{ [ Michael: #1 ] }}}
\newcommand{\rob}[1]{{\textcolor{purple}{ [ Rob: #1 ] }}}
\newcommand{\mt}[1]{{\textcolor{NavyBlue}{ [ Meng-Tsung: #1 ] }}}
\newcommand{\alex}[1]{{}}
\newcommand{\mfc}[1]{{}}
\newcommand{\michael}[1]{{}}
\newcommand{\rob}[1]{{}}
\newcommand{\mt}[1]{{}}
\newcommand{\secput}[2]{\section{#2}\label{sec:#1}}
\newcommand{\secref}[1]{\Cref{sec:#1}}     
\newcommand{\subsecput}[2]{\subsection{#2}\label{sec:#1}}
\newcommand{%
 \begin{figure}[t]%
 \begin{center}%
  \input{.pdf_tex}%
 \end{center}%
 \caption{}%
 \label{fig:}%
 \end{figure}%
}[2]{%
 \begin{figure}[t]%
 \begin{center}%
  \input{#1.pdf_tex}%
 \end{center}%
 \caption{#2}%
 \label{fig:#1}%
 \end{figure}%
}
\newcommand{\thmput}[1]{\begin{theorem}\label{thm:#1}}
\newcommand{\lemput}[1]{\begin{lemma}\label{lem:#1}}
\newcommand{\E}[1]{\mathrm{E}\left[#1\right]}
\newcommand{\Var}[1]{\mathrm{Var}\left[#1\right]}
\newcommand{\Prob}[1]{\mathrm{P}\left(#1\right)}
\newcommand{\halfnorm}[1]{\|#1\|_{\frac{1}{2}}}
\newcommand{\halfnormp}{\|\mathbf{p}\|_{\frac{1}{2}}}
\newcommand{\opt}{\mathrm{OPT}}
\newcommand{\ind}[1]{\mathrm{I}\left[#1\right]}
\newcommand{\given}{\:\middle|\:}
\newcommand{\btree}{B-tree}
\newcommand{\btrees}{B-trees}
\newcommand{\betree}{$\textrm{B}^\varepsilon$-tree}
\newcommand{\betrees}{$\textrm{B}^\varepsilon$-trees}
\newcommand{\innodb}{InnoDB}
\newcommand{\mysql}{MySQL}
\newproof{observation}{Observation}
\newcommand{\FB}{\textsc{Fullest Bin}\xspace}
\newcommand{\RB}{\texorpdfstring{\textsc{Random Ball}}{Random Ball}\xspace}
\newcommand{\GG}{\textsc{Golden Gate}\xspace}
\renewcommand{\AE}{\textsc{Aggressive Empty}\xspace}
\newcommand{\p}{\mathbf{p}}
\newcommand{\uni}{\mathbf{u}}
\newcommand{\q}{\mathbf{q}}
\begin{document}

\date{}


\author{
Michael A.~Bender\thanks{
Stony Brook University, 
Stony Brook, NY 11794-2424 USA. 
Email:~\texttt{\{bender,jpchristense\}@cs.stonybrook.edu}.}
\and
Jake Christensen\footnotemark[2]
\and
Alex Conway\thanks{
Rutgers University, Piscataway NJ 08855 USA.  
Email:~\texttt{\{ajc179,farach\}@cs.rutgers.edu}.}
\and
Martin Farach-Colton\footnotemark[3]
\and
Rob Johnson\thanks{
VMware Research,
Creekside F,
3425 Hillview Ave,
Palo Alto, CA 94304 USA.
Email:~\texttt{robj@vmware.com}.}
\and
Meng-Tsung Tsai\thanks{
National Chiao Tung University, Hsinchu, Taiwan.
Email:~\texttt{mtsai@cs.nctu.edu.tw}.}}

\title{Optimal Ball Recycling\thanks{
This research was supported in part by NSF grants 
CCF 1439084, 
CCF 1637458, 
CNS 1408695, 
CNS 1755615, 
CNS 1763680, 
CNS 1408782, 
IIS 1247726, 
IIS 1251137, 
CCF 1617618, 
CCF 1725543, 
CCF-BSF 1716252, 
and IIS 1541613, 
NIH grant 1U01CA198952-01, 
MOST grant 107-2218-E-009-026-MY3, 
and by EMC, Inc, NetAPP, Inc, and Sandia National Labs.
}}



\maketitle
\fancyfoot[R]{\scriptsize{Copyright \textcopyright\ 2019 by SIAM\\
Unauthorized reproduction of this article is prohibited}}

\begin{abstract}

	Balls-and-bins games have been a successful tool for modeling load
	balancing problems.  In this paper, we study a new scenario, which we call
	the \defn{ball-recycling game}, defined as follows:
	\begin{displayquote}
		Throw $m$ balls into $n$ bins i.i.d.\ according to a given probability
		distribution $\p$.  Then, at each time step, pick a non-empty bin and
		\defn{recycle} its balls: take the balls from the selected bin and
		re-throw them according to $\p$.  
	\end{displayquote}
	This balls-and-bins game closely models memory-access heuristics in
	databases.  The goal is to have a bin-picking method that maximizes the
	\defn{recycling rate}, defined to be the expected number of balls recycled
	per step in the stationary distribution.

	We study two natural strategies for ball recycling: \FB, which greedily
	picks the bin with the maximum number of balls, and \RB, which picks a ball
	at random and recycles its bin.  We show that for general $\p$, \RB is
	$\Theta(1)$-optimal, whereas \FB can be pessimal. However, when $\p =
	\uni$, the uniform distribution, \FB is optimal to within an additive
	constant.

\end{abstract}



\section{Introduction}

Balls-and-bins games have been a successful tool for modeling load balancing
problems~\cite{Mitzenmacher96,AdlerChMi98,AdlerBeSc98,AzarBrKa94,ColeFrMa98,ColeMaMe98,CzumajSt97,Mitzenmacher96a,Mitzenmacher98,MitzenmacherRiSi00,Voecking03,BringmannSaSt16,Park17,WestZaWa16,farach-Colton:2009,conway_et_al:LIPIcs:2018:9043}.
For example, they can be used to study the average and worst-case occupancy of
buckets in a hash table~\cite{BenjaminiMa12}, the worst-case load on nodes in a
distributed cluster~\cite{PetrovaOlMa10,BerenbrinkFrKl16} and even the amount
of time customers wait in line at the grocery store~\cite{Mitzenmacher98}.  In
all these load-balancing problems, balls-and-bins games are used to study how
to distribute load evenly across the resource being allocated.

In this paper we study a new scenario, which we refer to as the
\defn{ball-recycling game}, defined as follows:
\begin{displayquote}
	Throw $m$ balls into $n$ bins i.i.d.\ according to a given probability
	distribution $\p$.  Then, at each time step, pick a non-empty bin and
	\defn{recycle} its balls: take the balls from the selected bin and re-throw
	them according to $\p$.
\end{displayquote}
We call a bin-picking method a \defn{recycling strategy} and define its
\defn{recycling rate} to be the expected number of balls recycled in the
stationary distribution (when it exists).

The ball-recycling game models \defn{insertion buffers} and \defn{update
buffers}, which are widely used to speed up insertions in databases by batching
updates to blocks on disk. The recycling rate of a recycling strategy
corresponds to the speed-up obtained by an insertion buffer, so the goal
studied in this paper is how to maximize the recycling rate. This relationship
is described in \Cref{sec:motivation}, and the experiments in
\Cref{sec:experiments} demonstrate that it holds in practice.

In this paper, we present results for ball recycling for both general $\p$ and
for the special case of uniform $\p$, which we denote by $\uni$.  As we explain
in \Cref{sec:motivation}, these distributions correspond to update and
insertion buffers, respectively.

We focus on three natural recycling strategies:
\begin{itemize}
\item \FB: A greedy strategy that recycles the bin with the most balls.
\item \RB: A strategy that picks a ball uniformly at random and recycles its
	bin.
\item \GG: A strategy that picks the bins in round-robin fashion; after a bin
	is picked, the next bin picked is its non-empty successor.
\end{itemize}
Let $\halfnormp = (\sum \sqrt{\p_i})^2$  be the half quasi-norm of $\p$.  We
achieve the following result for general $\p$.

\begin{theorem}[\Cref{sec:nonuniform}]\label{thm:random-opt}
	Consider a ball-recycling game with $m$ balls and $n$ bins, where the balls
	are distributed into the bins i.i.d.\ according to distribution $\p$. Then
	\RB{} is $\Theta(1)$-optimal.

        It achieves recycling rate
	$\mathcal{R}^\textrm{RB}$:
	\begin{enumerate}
		\item If $m \geq n$,
			\[\mathcal{R}^\textrm{RB} = \Theta\left(\frac{m}{\halfnormp}\right).\]
		\item If $m < n$, let $L$ be the $m$ lowest-weight bins, $q = \sum_{\ell\in
			L} p_\ell$, and $\mathcal{R}_L^\textrm{RB}$ be the recycle rate of
			\RB restricted to $L$. Then,
			\[\mathcal{R}^\textrm{RB} =
			\Theta\left(\min\left(\mathcal{R}_L^\textrm{RB}, 1/q\right)\right).\]
	\end{enumerate}
\end{theorem}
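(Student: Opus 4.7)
The plan is to extract a clean identity from the stationary distribution of \RB, pair it with Jensen and Cauchy--Schwarz to get a lower bound on $\mathcal{R}^\textrm{RB}$, and then prove a matching upper bound against \emph{every} strategy via Little's law. The $m<n$ case then reduces to localizing the same arguments to the lightest $m$ bins.

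Let $N_i$ be the stationary number of balls in bin $i$ under \RB. Because \RB picks a uniformly random ball, bin $i$ is chosen with probability $s_i=\E{N_i}/m$ and the expected number recycled conditional on picking $i$ is the size-biased mean $\bar N_i=\E{N_i^2}/\E{N_i}$. Writing $\mathcal{R}$ for $\mathcal{R}^\textrm{RB}$, the conservation ``balls arriving in bin $i$ equal balls leaving'' reads $s_i\bar N_i=p_i\mathcal{R}$, which collapses to the key identity $\E{N_i^2}=p_i\mathcal{R}\,m$. Jensen then gives $\E{N_i}\le\sqrt{p_i\mathcal{R} m}$; summing and using $\sum_i\E{N_i}=m$ together with $\halfnormp=(\sum_i\sqrt{p_i})^2$ yields $m\le\sqrt{\mathcal{R} m\cdot\halfnormp}$, i.e.\ $\mathcal{R}^\textrm{RB}\ge m/\halfnormp$.

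For the matching upper bound I would prove the stronger statement that \emph{every} strategy has rate at most $2m/\halfnormp$. Under any strategy, the flow balance $s_i\bar N_i=p_i\mathcal{R}$ still holds, and between two consecutive picks of bin $i$ the count $N_i$ is nondecreasing (other recyclings only deposit balls into $i$). Little's law combined with a Jensen estimate on the inter-pick interval $T_i$ (mean $1/s_i$; $\E{T_i^2}\ge\E{T_i}^2$) gives $\E{N_i}\ge\tfrac12\,p_i\mathcal{R}/s_i$. Summing,
\[
	m\;=\;\sum_i\E{N_i}\;\ge\;\tfrac{\mathcal{R}}{2}\sum_i \frac{p_i}{s_i}\;\ge\;\tfrac{\mathcal{R}}{2}\,\halfnormp,
\]
where the final step is Cauchy--Schwarz applied to $\sum_i\sqrt{p_i}=\sum_i\sqrt{s_i}\cdot\sqrt{p_i/s_i}$ together with $\sum_i s_i=1$. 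Hence $\mathcal{R}\le 2m/\halfnormp$ for every strategy; combined with the previous paragraph this establishes $\mathcal{R}^\textrm{RB}=\Theta(m/\halfnormp)$ and $\Theta(1)$-optimality of \RB in the $m\ge n$ regime.

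For $m<n$ the global half-norm bound degenerates because the lightest bins typically contain fewer than one ball. I would restrict the flow analysis to the sub-system on $L$ (which has $m$ balls across $m$ bins, so the $m\ge n$ analysis applies verbatim) to extract the $\mathcal{R}_L^\textrm{RB}$ term, and obtain the $1/q$ cap separately from the observation that balls arrive in $L$ at rate $q\mathcal{R}$ per step, which limits the throughput through $L$ to $O(1/q)$ per unit occupancy of $L$; the matching upper bound for an arbitrary strategy then comes from combining both constraints through the same Little's-law plus Cauchy--Schwarz machinery. The main technical obstacle throughout is the discrete justification of $\E{N_i}\ge\Omega(p_i\mathcal{R}/s_i)$: in the fluid limit this is merely ``the area under a linear ramp is half the peak,'' but discretely one has to control correlations between the arrival process into bin $i$ (driven by the other picks) and the pick schedule itself, and the $m<n$ regime is especially delicate because many bins hold $\ll 1$ ball in expectation and naive rounding can erase the crucial half-peak factor.
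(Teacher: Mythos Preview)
Your lower bound for \RB is exactly the paper's argument: the flow identity $\E{N_i^2}=p_i\mathcal{R} m$ plus Jensen gives $\mathcal{R}^{\textrm{RB}}\ge m/\halfnormp$. No issues there.

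The upper bound, however, has a genuine gap. Your Little's-law step asserts $\E{N_i}\ge\tfrac12\,p_i\mathcal{R}/s_i$ for every strategy, but this is false. Take $m=1$ and any $\p$: the unique ball is recycled every step, so $\mathcal{R}=1$, $s_i=p_i$, and $\E{N_i}=p_i$; your inequality would force $p_i\ge\tfrac12$ for all $i$. More globally, your conclusion $\mathcal{R}\le 2m/\halfnormp$ is simply too strong: with $m=1$ and uniform $\p$ it says $\mathcal{R}\le 2/n$, yet every strategy recycles exactly one ball per step. The ``area under a linear ramp'' picture fails because the arrival rate into bin $i$ during an inter-pick interval is correlated with where you are in the interval; an adversary can pick light bins early and heavy bins late, so almost all mass arrives just before bin $i$ is picked and the time-average occupancy is far below half the peak. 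The paper explicitly flags this intuition as unreliable and instead proves the upper bound by tracking the potential $Z(X)=\sum_j X_j^2/p_j$: stationarity of $Z$ yields $\mathcal{R}\le(2m+n-1)/\sum_j p_j/f_j$, and Cauchy--Schwarz gives $(2m+n-1)/\halfnormp$. The extra $n$ in the numerator is not slack; it is exactly what rescues the $m<n$ regime.

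For $m<n$ your sketch is also missing the main idea. Restricting flow balance to $L$ does not by itself bound the \emph{optimal} strategy, because you have no a~priori control over how OPT interacts with bins outside $L$. The paper handles this with a nontrivial construction: it shows there is a subset $L$ of $O(m)$ bins and an ``aggressive empty'' strategy (run OPT virtually on $L$, but greedily clear $L^c$ whenever possible) that is $\Theta(1)$-optimal, then computes its rate as $\Theta\bigl(((1-q)/\mathcal{R}^S+q)^{-1}\bigr)$ and finally shows \RB matches it. Your proposal does not contain an analogue of this reduction, and without it the $1/q$ cap is only an upper bound on throughput through $L$, not on OPT's global rate.
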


In order to establish this result, we first show that no recycling strategy can
achieve a higher recycling rate than $(2m+n)/\halfnormp$.  This directly
establishes optimality when $m = \Omega(n)$. For $m=o(n)$, we show that \RB
performs as well as another strategy, \AE, which takes an optimal strategy on a
subset of high-weight bins and turns it into an optimal strategy on all the
bins.

Interestingly, the greedy strategy \FB is not generally optimal, and in
particular:
\begin{observation}
There are distributions for which \FB is pessimal, that is, it
recycles at most 2 balls per round whereas OPT recycles almost $m$
balls per round.
\end{observation}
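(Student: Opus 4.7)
The plan is to exhibit one explicit distribution $\p$ and analyze it on both sides: bound \FB's recycling rate by a small constant, and lower-bound the rate of some competing strategy (and hence OPT) by $\Omega(m)$.

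The first step is to choose $\p$ so that the ball distribution is very spread out, with the maximum bin load in every reachable configuration bounded by a small constant --- at most $2$. A natural way is to give $\p$ one ``target'' bin of moderate weight and a vast number of very light bins, chosen so that three-way collisions outside the target occur with negligible probability and the target's own expected occupancy is about $2$. Since \FB re-throws the picked bin's balls i.i.d.\ from $\p$, the stationary ball distribution under \FB inherits this spread-out structure, and the expected fullest bin therefore holds at most $2$ balls per step.

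The second step is the lower bound on OPT. I would try two routes. The first is to invoke Case~2 of \Cref{thm:random-opt}: if the $m$ lightest bins $L$ carry aggregate weight $q \approx 1/m$ while their renormalized sub-distribution $\p|_L / q$ is concentrated on a single bin (so that the restricted \RB rate $\mathcal{R}_L^\textrm{RB}$ is $\Omega(m)$), then $\mathcal{R}^\textrm{RB} = \Theta(\min(\mathcal{R}_L^\textrm{RB}, 1/q)) = \Theta(m)$, and the $\Theta(1)$-optimality of \RB promotes this to $\mathrm{OPT} = \Omega(m)$. The second route is to exhibit an explicit accumulator strategy that never recycles a designated ``target'' bin, so that balls drift into it at rate proportional to the target's weight and are periodically cashed in with one big recycle. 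Bounding the amortized (total balls recycled) / (total steps) ratio from below by $\Omega(m)$ is then the main technical step.

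The main obstacle is that the two requirements push $\p$ in opposite directions: spreading $\p$ out across many bins is needed to cap \FB's max load at $2$, while either the concentration inside $L$ (for the Case~2 route) or a fast drift into the target (for the accumulator route) requires $\p$ to have a concentrated substructure. Reconciling these into a single distribution --- for example, by splitting the non-target mass across enough bins to eliminate $3$-collisions outside the target while keeping one bin of $L$ responsible for nearly all of $L$'s mass --- and then carefully analyzing two different stationary distributions on top of that $\p$ is the technical crux of the construction.
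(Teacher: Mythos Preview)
Your plan has a genuine gap: the construction you sketch cannot achieve $\mathrm{OPT}=\Omega(m)$. If the target bin has ``moderate weight'' chosen so that its expected occupancy under an i.i.d.\ throw of $m$ balls is about $2$, then $p_0\approx 2/m$. Your Route~2 accumulator then fills the target at rate roughly $p_0$ per ball thrown, i.e.\ roughly $2/m$ per step (since the non-target bins you are recycling contain $O(1)$ balls each). Filling the target to $\Theta(m)$ balls therefore takes $\Theta(m^2)$ steps, during which you recycle $O(1)$ balls per step, so the amortized rate is $O(1)$, not $\Omega(m)$. Route~1 fails for a different structural reason: forcing $\p|_L$ to be concentrated on a single bin while $q\approx 1/m$ forces the remaining $n-m$ bins to each have weight at least $\approx 1/m$, hence $n\le 2m$; but with only $O(m)$ bins you cannot keep the fullest bin under \FB down to $2$ balls. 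The ``tension'' you flag is real for your construction, and it is fatal.

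The paper's construction goes the opposite way. It takes a \emph{very heavy} target, $p_0 = 1-1/n+1/n^2$, with $n-1$ light bins of weight $1/n^2$ and $m=\sqrt{n}$. The mechanism that caps \FB at $2$ is not that $\p$ is spread out, but that \FB compulsively empties bin $0$: in the stationary regime bin $0$ is picked whenever it has $\ge 2$ balls, the thrown balls almost surely return to bin $0$, and an occasional leak to a light bin is answered by a one-ball recycle; so \FB oscillates between recycling $1$ and $2$ balls. For the other direction the paper simply exhibits the least-full-non-empty-bin strategy: it recycles bin $0$ (all $m$ balls) except on the rare rounds when a stray ball has landed elsewhere, giving rate close to $m$. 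There is no tension to reconcile once the target is heavy rather than moderate.
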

For example, consider the \defn{skyscraper distribution}, where $\p_0 =
1-1/n+1/n^2$ and $\p_i = 1/n^2$, for $0<i\leq n-1$.  Suppose that $m=\sqrt{n}$.
Then \FB will pick bin $0$ every time until it has at most one ball, at which
point it will pick another bin, which will almost certainly have $1$ ball in
it.  Thus, the recycling rate of \FB drops below $2$.  Suppose, instead, that
we recycle the least-full non-empty bin.  In this case, every approximately
$\sqrt{n}$ rounds, a ball lands in a low-probability bin and is promptly
returned to bin $0$. Thus, the recycling rate of this strategy is nearly $m$.
Thus, \FB is pessimal for this distribution.

However, the uniform distribution is of particular importance to insertion
buffers for databases based on \btrees{}. This is because (arbitrary) random
\btree{} insertions are nearly uniformly distributed across the leaves of the
\btree{}, as we show in \Cref{sec:motivation}. On the uniform distribution, \FB
and \GG are optimal even up to lower order terms:

\begin{theorem}[\Cref{sec:uniform}]\label{thm:fullestbin}
  \FB and \GG are optimal to within an additive constant for the
  ball-recycling game with distribution $\uni$ 
  for any $n$ and $m$.  They each achieve a
  recycling rate of at least $2m/(n+1)$, whereas no recycling strategy can
  achieve a recycling rate greater than $2m/n + 1$.
\end{theorem}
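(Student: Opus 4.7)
The plan has three ingredients: a universal upper bound via a quadratic potential, a Little's-law lower bound for \GG, and a lower bound for \FB that leverages the identity produced by the upper-bound argument.

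\textbf{Universal upper bound.} I track the potential $\Phi = \sum_i Y_i^2$. Given state $Y$ and selected bin $B$ with $X_B = Y_B$ balls, removing $B$'s balls decreases $\Phi$ by $X_B^2$, and rethrowing $X_B$ balls i.i.d.\ uniformly into the $n$ bins (a multinomial) adds, in expectation, $(2X_B/n)(m - X_B)$ from cross terms with the surviving balls plus $X_B(n-1+X_B)/n$ from the second moments of the multinomial. Collecting,
\[
\E{\Phi' - \Phi \given Y, B} \;=\; -\tfrac{n+1}{n}\,X_B^2 \;+\; \tfrac{2m+n-1}{n}\,X_B.
\]
Taking stationary expectations forces $\E{\Delta \Phi} = 0$, giving the identity $(n+1)\,\E{X_B^2} = (2m+n-1)\,\E{X_B}$. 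Combining with Jensen's inequality $\E{X_B^2} \geq \E{X_B}^2$ yields the strategy-independent bound $\E{X_B} \leq (2m+n-1)/(n+1) < 2m/n + 1$.

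\textbf{Lower bound for \GG.} I would invoke Little's law: in steady state $m = \E{X_B^{\text{GG}}} \cdot W$, where $W$ is the mean sojourn time of a ball. Under \GG the bins are visited in cyclic order with empty ones skipped, so a ball thrown at round $t$ into a uniformly random bin is picked within at most $n$ rounds; without skipping, the offset from the current bin to the landing bin is uniform on $\{1,\dots,n\}$ with mean $(n+1)/2$, and skipping can only shorten the wait. Hence $W \leq (n+1)/2$ and $\E{X_B^{\text{GG}}} \geq 2m/(n+1)$.

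\textbf{Lower bound for \FB.} Rewriting the identity above as $\E{X_B} = (2m+n-1)/(n+1) - \Var{X_B}/\E{X_B}$ shows that any $O(1)$ bound on $\Var{X_B^{\text{FB}}}/\E{X_B^{\text{FB}}}$ already yields the claimed lower bound to within an additive constant. Since $X_B^{\text{FB}} = \max_i Y_i \geq m/n$ and \FB aggressively empties whichever bin grows largest, I expect $X_B^{\text{FB}}$ to be tightly concentrated in the \FB stationary distribution, and would try to prove this via a secondary potential (for instance a higher moment of the occupancy vector $Y$, tracking how far the profile deviates from the mean $m/n$) or via a stochastic-dominance / coupling argument comparing $X_B^{\text{FB}}$ with $X_B^{\text{GG}}$. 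The main obstacle is precisely this step: unlike \GG, under \FB a ball sitting in a lightly occupied bin can wait an unpredictable number of rounds before its bin ever becomes the max, so Little's law does not close cleanly, and the bulk of the work will lie in controlling either the variance of $X_B^{\text{FB}}$ or in constructing a valid coupling between the \FB and \GG stationary distributions that transfers the \GG bound.
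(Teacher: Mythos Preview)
Your upper bound and your \GG lower bound are essentially the paper's arguments. The paper's general upper bound tracks $Z(X)=\sum_j X_j^2/\p_j$, which in the uniform case is $n\Phi$; its stationary identity is exactly your $(n+1)\,\E{X_B^2}=(2m+n-1)\,\E{X_B}$ (the paper then drops a lower-order term to state the looser $(2m+n-1)/n$). For \GG the paper gives precisely your Little's-law count, phrased as ``each thrown ball sits for an average of at most $(n-1)/2$ rounds, hence $m-\E{S}\le \E{S}(n-1)/2$.''

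The gap is your \FB lower bound. You correctly identify that a direct Little's-law bound seems to fail because a ball in a light bin under \FB has no obvious deterministic wait-time bound, and you propose to compensate via a variance estimate or a coupling to \GG. The paper avoids all of that with a one-line observation you have missed: list the bins in decreasing order of occupancy, so that \FB always recycles the bin at queue position $1$. When the recycled balls are rethrown, each lands in a uniformly random bin and hence at a uniformly random queue position, with mean $(n-1)/2$, exactly as for \GG. Now re-sort the bins into decreasing-occupancy order; by the rearrangement inequality this re-sorting can only \emph{decrease} the ball-weighted average queue position (heavier bins move to smaller indices, so ``more balls are moved up the queue than down''). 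Hence the expected wait of a thrown ball is still at most $(n-1)/2$, the same inequality $m-\E{S}\le \E{S}(n-1)/2$ applies, and $\E{S}\ge 2m/(n+1)$ for \FB as well. No variance control, higher-moment potential, or coupling is needed.
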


In this case, \RB is only optimal to within a multiplicative constant
in the following range:

\begin{theorem}[\Cref{sec:uniform}]\label{thm:rbuniform}
	On the uniform distribution $\uni$, for sufficiently large
        $m$, \RB is at least $(1/2+1/(2^3
	3^4))$-optimal and at most $(1-3/1000)$-optimal.
\end{theorem}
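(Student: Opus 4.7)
The plan is to analyze $\mathcal{R}^{\RB}$ via the stationary distribution of \RB on $\uni$, then compare it to $\opt$ via the bounds in \Cref{thm:fullestbin}. By symmetry of $\uni$, the stationary law of bin loads is exchangeable across bins, so $\mathcal{R}^{\RB} = n\,\E{K_1^2}/m$ where $K_1$ is the load of a generic bin. I would derive a fixed-point equation for $\E{K_1^2}$ by computing $\E{K_1'^2 \mid K}$---the one-step transition picks bin $i$ with probability $K_i/m$ and re-throws its contents i.i.d.\ uniformly into the $n$ bins---and then enforcing $\E{K_1'^2} = \E{K_1^2}$ in stationary. For large $m$ this yields an expansion $\mathcal{R}^{\RB} = c\,m/n + O(1)$ with an explicit constant $c>1$; the strict inequality $c>1$ reflects that \RB preferentially picks larger bins, so $\E{K_1^2}$ exceeds its value under the naive multinomial (which, as one can verify by direct computation at $n=2,\,m=3$, is not even a stationary distribution for \RB).

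For the lower bound I would combine $\mathcal{R}^{\RB} \approx c\,m/n$ with $\opt \leq 2m/n + 1$ from \Cref{thm:fullestbin} to obtain $\mathcal{R}^{\RB}/\opt \geq c/2 - O(1/m)$, and then establish $c \geq 1 + 1/(2^{2}\cdot 3^{4})$ from the fixed-point analysis, yielding the claimed $(1/2 + 1/(2^{3}\cdot 3^{4}))$-optimality for sufficiently large $m$. For the upper bound I would pair the lower bound $\opt \geq \mathcal{R}^{\FB} \geq 2m/(n+1)$ from \Cref{thm:fullestbin} with the asymptotic $\mathcal{R}^{\RB} \leq (c + o(1))m/n$ to get $\mathcal{R}^{\RB}/\opt \leq c(n+1)/(2n) + o(1)$; for moderately large $n$ and $m$ this drops below $1-3/1000$, and a concrete numerical $(m,n)$-choice then completes this direction.

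The main obstacle is pinning down $c$ in the fixed-point analysis. Because \RB's stationary is not multinomial, the equation for $\E{K_1^2}$ couples to higher cross-moments such as $\E{K_1 \sum_i K_i^2}$, so extracting a lower bound of the form $c \geq 1 + 1/(2^{2}\cdot 3^{4})$ requires either a higher-moment truncation argument or an explicit asymptotic expansion in $1/n$ and $1/m$. The appearance of the very specific constant $1/(2^{3}\cdot 3^{4})$ suggests that the proof ultimately hinges on a concrete numerical/combinatorial lower bound for this second-moment fixed point, rather than a clean closed-form computation.
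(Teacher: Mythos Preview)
Your starting identity $\mathcal{R}^{\RB}=n\,\E{K_1^2}/m$ is correct and the paper uses it too, but the gap you yourself flag is real and fatal to the plan as stated: the one-step stationarity equation for $\E{K_1^2}$ does not close. It couples to $\E{K_i\sum_j K_j^2}$ and beyond, and nothing in your proposal explains how to truncate that hierarchy or extract either the lower or upper numerical constant. Saying the constant ``suggests a concrete numerical/combinatorial lower bound'' is not a method.

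The paper sidesteps the closure problem entirely and never attempts to compute an asymptotic constant $c$. Both directions instead go through \emph{variance} arguments combined with case analysis:
\begin{itemize}
\item \textbf{Lower bound.} Writing $\mathcal{R}^{\RB}=(n/m)\bigl(\E{X}^2+\Var{X}\bigr)$ with $X$ a generic bin load, the task reduces to lower-bounding $\Var{X}$. Either $\Pr\bigl(X\le(1-\epsilon)m/n\bigr)\ge\delta$, which gives $\Var{X}\ge\epsilon^2\delta(m/n)^2$ directly, or almost every bin is ``heavy''. In the latter case a \emph{temporal} argument runs the chain for $n/2$ rounds: at least $(1/2-2\delta)n$ heavy bins are emptied, at least $(1/2-2\delta)n$ are untouched, and a Chernoff bound on where the recycled balls land forces the expected count of heavy bins to drop below its stationary value, a contradiction. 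The constraint this yields is $\epsilon+3\delta<1/4$, and maximizing $\epsilon^2\delta$ under it (at $\epsilon=1/6$) produces the $1/6^4$ constant.
\item \textbf{Upper bound.} A separate exact identity, obtained by tracking unordered pairs of balls sharing a bin through one step, gives $\E{R^2}/\mathcal{R}=(2m+n-1)/(n+1)$. If $\mathcal{R}$ were too close to $2m/n$ this would force $\Var{R}$ to be tiny; a light-bin case analysis (either many balls sit in light bins, giving variance directly, or few do, and then a pair-counting argument on light bins contradicts the flow) rules that out. Specific parameter choices give $\mathcal{R}^{\RB}<1+1.994\,m/n$.
\end{itemize}
So the missing idea is: do not chase the second-moment fixed point; instead bound the variance (of the bin load for the lower bound, of the recycled count for the upper bound) via a heavy/light bin dichotomy, with the ``bad'' case of each dichotomy eliminated by a contradiction against stationarity.
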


Thus, we establish some surprising results: that \FB can perform poorly for
arbitrary $\p$ but is optimal for $\uni$, up to lower-order terms; and that \RB
is asymptotically optimal for any $\p$ and in particular is more than
$1/2$-optimal but not quite optimal for $\uni$.

In \Cref{sec:experiments}, we present experimental results showing that our
analytical results for the ball recycling problem closely matches performance
results in real databases.  We describe the recycling strategies of several
commercial and open-source database systems. In particular we focus on InnoDB,
a \btree{} that uses a variant of \RB.

Our results suggest that \FB or \GG would be a better choice than \RB for
InnoDB. In particular, \GG requires almost no additional bookkeeping, and can
be implemented in InnoDB with a change of only a few lines of code.
With this implementation, we measured a 30\% improvement in its
insertion-buffer flushing rate, which is in line with our theoretical results.

We conclude that ball recycling is a natural hitherto unexplored balls-and-bins
game that closely models a widely deployed method for improving the performance
of databases. Moreover, this is the first application (to our knowledge) of a
balls-and-bins game to the throughput of a system. This is in contrast to past
balls-and-bins analyses, which modeled load balancing and latency.



\section{Ball Recycling and Insertion/Update Buffers}\label{sec:motivation}

Ball recycling models insertion and update buffers, which are widely used in
modern databases~\cite{Azure16, IBM17, Xiang12, Informix, Callaghan11, NuDB16,
Oracle17, SAP17, Vertica17,CanimLaMi10,BenderFaJo12}. These implementations are
discussed in more detail in \Cref{sec:experiments}.

For a key-value store, such as a database, an \defn{insertion buffer} is a
cache of recently inserted items.  When the insertion buffer fills, the
database selects a disk block and all the cached items going to that block are
evicted in bulk.  If $k$ elements are flushed in bulk, then there is a speedup
of $k$, compared to writing the elements to the destination block as soon as
they arrive. After evicting $k$ items, there is room for $k$ new elements in
the buffer.  An \defn{update buffer} caches changes to existing key-value pairs
but is otherwise like an insertion buffer.  Although these types of buffers
seem quite similar, we show that they have important differences in how they
are modeled as a ball-recycling game.

The mapping to ball recycling is direct: disk blocks are bins and elements in
the insertion/update buffer are balls.  The probability distribution $\p$ is
based on the distribution of items inserted  or updated.  Evicting all the
items going to a disk block corresponds to emptying the bin associated with
that disk block.  After an eviction of $k$ items, we have room for $k$ new
insertions/updates, i.e., we have $k$ new balls to throw.  The policy for
selecting the target disk block of an eviction corresponds to the policy for
selecting a bin to recycle, and the speedup induced by an eviction policy is
its recycling rate.  

For \btrees{}, insertion buffers and update buffers differ in an important way:
updates do not change the structure of leaves of a \btree{}. In contrast,
insertions can change the range of keys associated with a leaf (due to leaf
splits), which yields the following result:

\begin{lemma}\label{lem:uniform-leaves}
	If $N$ keys are inserted into a \btree{} i.i.d.\ according to some key
	distribution $\q$, then provided $B = \Omega(\log{N})$, the maximum
	probability that a leaf has of receiving the next insertion is $O(B/N)$
	with probability 1.  Thus the corresponding recycling game is
	asymptotically \defn{almost uniform}: no bin has probability more than a
	constant multiple of $\frac{1}{n}$.
\end{lemma}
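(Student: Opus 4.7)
The plan is to first reduce to a uniform-measure problem via the CDF of $\q$: apply the map $x \mapsto F_\q(x)$ to the key space, so that the $N$ inserted keys become i.i.d.\ uniform samples on $[0,1]$ and the $\q$-mass of any key interval equals the Lebesgue length of its image. The leaves of the resulting B-tree partition $[0,1]$ into sub-intervals, and the probability that a leaf receives the next insertion is exactly the length of its sub-interval; it therefore suffices to show that no leaf sub-interval is longer than $O(B/N)$.

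Second, I would set up a fixed grid of ``witness'' intervals to avoid having to reason about the random leaf boundaries directly. Choose a constant $C > 1$ and partition $[0,1]$ into $M = \lceil N/(CB)\rceil$ equal cells, each of length $\sim CB/N$. The number of keys landing in a given cell is $\mathrm{Binomial}(N,\, CB/N)$ with mean $\sim CB$, so a standard Chernoff bound yields that the count is strictly greater than $B$ except with probability $e^{-\Omega(B)}$. Using the hypothesis $B = \Omega(\log N)$ and taking the hidden constant large enough, this failure probability is at most $1/N^2$, and a union bound over the $M \le N$ cells shows that with probability $1 - o(1)$ every cell contains more than $B$ inserted keys.

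Third, I would convert this witness event into the desired uniform bound on leaf lengths. On the high-probability event above, any interval of length $\ge 2CB/N$ wholly contains some grid cell and hence more than $B$ inserted keys. But since the data structure is insert-only and every leaf currently holds at most $B$ keys, any single leaf sub-interval contains at most $B$ of the $N$ inserted keys. Hence no leaf sub-interval can be longer than $2CB/N$; equivalently, every leaf has $\q$-mass $O(B/N)$. Because a B-tree on $N$ keys with block size $B$ has $n = \Theta(N/B)$ leaves, this is $O(1/n)$, giving the ``almost uniform'' conclusion.

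The one subtlety worth flagging is quantifier order: leaf sub-intervals are themselves random, so we cannot fix one and apply Chernoff directly. The grid circumvents this by reducing a continuum of potential intervals to finitely many (at most $N$) deterministic witnesses, after which any ``too long'' interval is certified by containing a whole cell. Finally, ``with probability $1$'' in the statement is most naturally read as ``with probability $1 - o(1)$,'' which is what the union bound delivers; a genuinely almost-sure version would follow from Borel--Cantelli, since our tail estimate is summable in $N$ once $B \gg \log N$.
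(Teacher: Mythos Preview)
Your proposal is correct, and its first step (the CDF reduction to i.i.d.\ uniform samples on $[0,1]$) coincides with the paper's. The second step, however, is genuinely different. The paper does not set up a fixed grid or invoke Chernoff bounds; instead it appeals to the notion of the \emph{maximal $B$-spacing} of $N$ i.i.d.\ uniform points, namely $\max_i\bigl(x_{(i+B)}-x_{(i)}\bigr)$ over the order statistics, observes that any leaf interval (holding at most $B$ inserted keys) has length bounded by this quantity, and then cites a result of Deheuvels and Devroye that the maximal $B$-spacing is $\Theta(B/N)$ with probability~$1$ when $B=\Theta(\log N)$, extending to $B=\omega(\log N)$ by subdividing into blocks of $\log N$ points. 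Your grid-and-Chernoff argument is a more elementary, fully self-contained route to the same $O(B/N)$ bound: you trade the external citation for a direct concentration estimate over a deterministic mesh of $O(N/B)$ witness cells. Both arguments rest on the same structural fact---a leaf interval encloses at most $B$ of the $N$ sample points---but the paper bounds such intervals via order-statistic spacings, whereas you bound them by showing every sufficiently long interval must swallow a cell that already contains more than $B$ points. The only thing you lose is the literal ``with probability~$1$'' (your union bound gives $1-o(1)$), though as you correctly flag, Borel--Cantelli recovers the almost-sure version since your tail bound is summable once $B=\Omega(\log N)$ with a large enough constant.
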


We prove the uniformity bound as follows.  Let $F(\kappa)$ be the cumulative
density function (CDF) of $\q$, which is the probability that an item sampled
from $\q$ is less than $\kappa$.  If $\kappa$ is distributed according to $\q$,
then $F(\kappa)$ will be uniformly distributed on $[0,1]$.

If $n$ points are sampled from $[0,1]$ uniformly and sorted so that $x_1
\leq x_2 \leq \cdots \leq x_n$, then $\max{x_{i+B}-x_i}$ is known as the
\defn{maximal $B$-spacing}. It follows that:

\begin{lemma}\label{lem:leaf-b-spacing}
	Having inserted $n$ keys into a \btree{} i.i.d.\ according to a distribution
	$\q$, the maximum probability that any leaf has of receiving the next
	insertion is less than the maximum $B$-spacing of the CDFs of those points.
\end{lemma}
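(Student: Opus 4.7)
The plan is to reduce to a geometric statement about uniform samples via the probability integral transform, and then translate the per-leaf insertion probability into a gap between order statistics. First, I would push the $n$ i.i.d.\ inserted keys through $F$ to obtain transformed points $y_i = F(k_i)$, which are i.i.d.\ uniform on $[0,1]$; the next (hypothetical) insertion's image is also uniform on $[0,1]$. Since $F$ is monotone, the leaves of the \btree{} partition the key space into disjoint ranges whose $F$-images partition $[0,1]$ into intervals $I_1, \ldots, I_\ell$. The probability that a fresh $\q$-sample lands in leaf $j$ equals the $\q$-measure of its key range, which under $F$ becomes the Lebesgue length $|I_j|$.

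Next, I would bound each $|I_j|$ by a spacing between order statistics. Let $y_{(1)} \leq \cdots \leq y_{(n)}$ be the sorted transformed keys and adopt the sentinels $y_{(0)} = 0$ and $y_{(n+1)} = 1$. If leaf $j$ contains precisely the keys with sorted images $y_{(i+1)}, \ldots, y_{(i+k)}$, then no $y_{(s)}$ outside this index range lies inside $I_j$ (by construction of the partition), forcing $I_j \subseteq (y_{(i)}, y_{(i+k+1)})$ and hence $|I_j| < y_{(i+k+1)} - y_{(i)}$. Since a \btree{} leaf holds $k \leq B$ keys, this is a gap between two order statistics whose indices differ by at most $B+1$, which is at most the maximum $B$-spacing (up to the standard off-by-one in the leaf-capacity convention). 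Taking the max over all $j$ yields the stated bound.

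The main obstacle is essentially bookkeeping rather than mathematical depth: handling the boundary leaves (dispatched by the sentinels $y_{(0)}$ and $y_{(n+1)}$), reconciling whichever convention the reader uses for maximum \btree{} leaf capacity (which only shifts indices by a constant and therefore does not affect the asymptotic statement combined with \Cref{lem:uniform-leaves}), and ensuring strict inequality when $\q$ may have atoms, which is handled by using the generalized inverse CDF so that the $y_i$ remain i.i.d.\ uniform. Once these conventions are pinned down, the lemma is an immediate consequence of the interval containment above.
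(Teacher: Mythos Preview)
Your proposal is correct and follows essentially the same approach as the paper: apply the probability integral transform $F$ to turn the inserted keys into i.i.d.\ uniform points on $[0,1]$, observe that a leaf's probability of receiving the next insertion equals the length of its $F$-image interval, and bound that length by the maximum $B$-spacing since each leaf holds at most $B$ keys. The paper states this in a single terse paragraph preceding the lemma (``It follows that:''), whereas you have carefully spelled out the order-statistic containment $I_j\subseteq(y_{(i)},y_{(i+k+1)})$ and the boundary/atom bookkeeping; the substance is the same.
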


It is known that:

\begin{lemma}[\cite{DeheuvelsDe84}]
	If $B = \Theta(\log{n})$, then the maximum $B$-spacing of $n$ points
	distributed uniformly on the unit interval is $\Theta(B/n)$ with
	probability 1.
\end{lemma}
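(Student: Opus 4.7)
The plan is to prove $\max_i(x_{i+B}-x_i) = \Theta(B/n)$ almost surely by handling the upper and lower bounds separately. The lower bound $\Omega(B/n)$ is essentially pigeonhole: the $n$ sampled points cut $[0,1]$ into $n+1$ subintervals of average length $1/(n+1)$, so the average $B$-spacing is $\Theta(B/n)$ and the maximum is at least this. A short Chernoff argument on the count of samples in a fixed subinterval of length $cB/n$ (for a sufficiently small constant $c$) shows that no such interval contains $B$ or more points, with a failure probability summable in $n$, yielding $\Omega(B/n)$ almost surely via Borel--Cantelli.

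The substantive direction is the $O(B/n)$ upper bound. I would discretize: fix a constant $C>1$ and tile $[0,1]$ with both the aligned and the half-shifted partitions into $O(n/B)$ reference intervals of length $CB/n$, so that any subinterval of $[0,1]$ of length $CB/n$ is contained in one of these reference intervals. If some $B$-spacing exceeds $2CB/n$, then one reference interval of length $CB/n$ must contain strictly fewer than $B$ of the $n$ sample points. The number of samples in a fixed reference interval is distributed as $\mathrm{Binomial}(n, CB/n)$ with mean $CB$, so a multiplicative Chernoff bound gives probability at most $\exp(-\alpha(C)\,B)$ of containing fewer than $B$ points, where $\alpha(C)\to\infty$ as $C\to\infty$. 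Choosing $C$ large and using $B=\Theta(\log n)$ drives this to $O(n^{-3})$; a union bound over the $O(n/B)$ reference intervals keeps the total failure probability at sample size $n$ bounded by $O(n^{-2})$. Borel--Cantelli then promotes this to an almost-sure upper bound of $2CB/n$ for all but finitely many $n$.

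The main obstacle is calibrating $C$ in the Chernoff step: the multiplicative deviation factor $1/C$ between the binomial mean $CB$ and the bad threshold $B$ must be small enough that $\alpha(C)\cdot c\log n$ exceeds, say, $3\log n$, where $B\geq c\log n$. This is routine but pins down the hidden constant in the $O(B/n)$ bound. The rest reduces to the standard combination of Chernoff concentration on binomial counts in a covering of $[0,1]$, together with a Borel--Cantelli step to upgrade in-probability statements to the almost-sure conclusion required.
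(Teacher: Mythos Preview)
The paper does not prove this lemma; it simply cites it from Deheuvels and Devroye, so there is no in-paper proof to compare against. Your sketch is a correct and standard elementary route (binomial/Chernoff concentration on counts in a covering by intervals of length $CB/n$, union bound over $O(n/B)$ intervals, then Borel--Cantelli), and it recovers the $\Theta(B/n)$ statement at the level of precision the paper needs.

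One expository slip worth fixing: the sentence ``any subinterval of $[0,1]$ of length $CB/n$ is contained in one of these reference intervals'' is false as written, since reference intervals also have length $CB/n$. What you actually use (and state correctly in the next sentence) is that any interval of length exceeding $2CB/n$ fully contains some reference tile of length $CB/n$; a single aligned tiling already gives this, so the half-shifted copy is unnecessary. Also, for the lower bound your pigeonhole/averaging argument already yields $\max_i(x_{i+B}-x_i)\ge \Theta(B/n)$ deterministically (using $\sum_i (x_{i+B}-x_i)=\sum_{j>n-B}x_j-\sum_{j\le B}x_j\to B$ a.s.), so the extra Chernoff step there is redundant.
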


For $B = \omega(\log{n})$, we can subdivide $B$ into intervals of $\log{n}$
points, each of which will satisfy the lemma. Adding together the resulting
bounds, we have that the maximal $B$-spacing of $n$ points is $O(B/n)$ with
probability 1. Together with \Cref{lem:leaf-b-spacing}, this implies
\Cref{lem:uniform-leaves}.

We also note that an (almost-uniform) ball-recycling game is an imperfect model
for an insertion buffer, because the ball-recycling game has a fixed number of
bins, whereas in the insertion buffer, the number of disk blocks will increase.
Furthermore, insertions may not be independently distributed.

Finally, the implementation of these strategies is a point of departure between
insertion/update buffers and ball recycling.  In ball recycling, it is obvious
which bin each ball is in.  In insertion/update buffers for a \btree{},
elements have a key, but we don't necessarily know what the buckets are, since
the mapping from keys to buckets depends on the pivots used to define the
\btree{} leaves.  \FB needs to know what the buckets are, whereas \RB and \GG
do not.  For \RB this is because the key of the randomly selected item can be
used to fetch its target \btree{} leaf, after which we know the max and min
keys in that leaf, and \GG can be implemented by remembering the upper bound of
the last leaf to which we flushed and then flushing the item with the successor
of that key, along with all the other keys going to that leaf.  None of these
strategies require knowledge of $\p$. 

Our results on general $\p$ have an interesting implication for \betrees{},
which are known to be asymptotically optimal for insertions, in the worst case.
\betrees{} can also handle updates by propagating messages to the leaves.  For
some update distributions, flushing according to \RB can achieve an update rate
that is $B^\varepsilon$ faster than \FB.  We expect to try \RB flushing in our
\betree{}-based file system,
B$\varepsilon$trFS~\cite{JannenYuZh15a,YuanZhJa16,YuanZhJa17,ConwayBa17,JannenYuZh15,esmet2012tokufs}.



\secput{prelim}{Ball Recycling and Markov Theory}

We begin our analysis of ball-recycing games with some preliminary results.  In
particular, we show that all finite-state ball-recycling strategies have
stationary distributions.

\subsecput{mdps}{Ball-recycling games are Markov decision processes.}

This section makes use of the standard theory of Markov chains and Markov
decision processes; for an introduction see {\it e.g.}
Kallenberg~\cite{Kallenberg16}.

In a ball-recycling game, we represent the configuration of the balls as a
vector $X=(X_i)$ of length $n$, where $X_i$ is the number of balls in the $i$th
bin.  Since the number of balls is finite, there are only a finite number of bin
configurations.

A recycling strategy $A$ takes as input the current bin configuration $X$
together with an internal state $S$, and selects a non-empty bin to recycle;
the next state is obtained by removing all the balls from the selected bin and
re-throwing them according to $\p$. The bin selection may be randomized.  We
write $A^iX$ for the state obtained after $i$ rounds of recycling using
strategy $A$.  In each round, the recycling algorithm earns a reward equal to
the number of balls recycled in that round.

In this way, the ball-recycling game is a Markov decision process, and we are
interested in policies that maximize the expected average recycling rate,
defined for a policy $A$ as
\begin{equation*}
	\mathcal{R}^A = \lim_{T\rightarrow\infty} \frac{1}{T}\sum_{t=0}^{T} R(A^tX_0).
\end{equation*}

Note that Markov decision processes are very general.  For example, in a Markov
decision process the policy may vary over time, and may even take the entire
history of the process and its own past decisions into account when deciding on
its next action.  Thus, for some strategies $A$, the limit $\mathcal{R}^A$ may
not exist.  In the literature of Markov decision processes, this is often
handled by taking the $\limsup$ instead of the limit. However, for any Markov
decision process, any strategy that maximizes the limit also maximizes the
$\limsup$~\cite{Kallenberg16}.  Therefore, for simplicity, we will focus only
on strategies for which the limit is well-defined, and the results will
generally hold for the $\limsup$ of arbitrary strategies.

A Markov decision process policy is \defn{deterministic} if it decides on its
next action based solely on the current state, \textit{i.e.}\ without looking at
history, the number of time steps that have passed or by flipping random
coins.  A deterministic policy can be represented as a simple table mapping
each state to a single action to be taken whenever the system is in that state.

The specific strategies we analyze are \defn{finite-state} strategies, where
the internal state has only finitely many configurations.  When we prove our
lower bounds, we will further restrict ourselves to \defn{stateless}
strategies, where there is a unique internal state. In order to do so, we make
use of the following lemma.

\begin{lemma}[\cite{Kallenberg16}]\label{lem:deterministic-opt}
	There exists a stateless deterministic recycling strategy $\opt$ that
	achieves the optimal expected average recycling rate.
\end{lemma}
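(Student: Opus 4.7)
The plan is to verify that ball-recycling is a finite-state, finite-action Markov decision process with bounded per-step reward, and then invoke the classical existence theorem for optimal stationary deterministic policies on such MDPs.

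First I would fix the state space and the action space. With $m$ balls and $n$ bins, the configurations $X = (X_1,\ldots,X_n)$ with $\sum_i X_i = m$ and $X_i \geq 0$ form a finite set (of size $\binom{m+n-1}{n-1}$). In each state, the set of admissible actions is the finite set of indices $i$ for which $X_i > 0$. Given an action $i$, the next configuration is obtained by zeroing coordinate $i$ and then throwing $X_i$ balls i.i.d.\ according to $\p$, which determines a transition kernel depending only on $(X,i)$ and not on history. The per-round reward $R(X,i) = X_i$ is bounded above by $m$. Thus the ball-recycling game is exactly a finite-state, finite-action MDP with bounded rewards under the average-reward criterion.

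Next I would apply the standard structural theorem for such MDPs (see Kallenberg~\cite{Kallenberg16}, or equivalently Puterman's treatment of average-reward MDPs): for any finite MDP with bounded rewards, the optimal average reward
\[
\sup_A \limsup_{T\to\infty} \frac{1}{T}\sum_{t=0}^{T-1} \E{R(A^tX_0)}
\]
is attained by a stationary deterministic policy, i.e.\ a function $\pi \colon X \mapsto i(X)$ that selects an action based only on the current state. This is precisely what the paper calls a stateless deterministic strategy. Let $\opt$ denote this policy.

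Finally I would address the $\limsup$ versus $\lim$ issue flagged in \Cref{sec:mdps}. Because $\opt$ is stationary and deterministic, under $\opt$ the process is an ordinary finite Markov chain on the state space. Every finite Markov chain decomposes into transient states and recurrent communicating classes; by the ergodic theorem for finite Markov chains, the Cesàro average of rewards converges almost surely (and in expectation) to a constant that depends only on the recurrent class eventually entered. Hence for $\opt$ the limit $\mathcal R^{\opt}$ genuinely exists, and it matches the supremum of the $\limsup$ over arbitrary strategies, giving the claimed lemma. The main subtlety, and the only real content beyond quoting the theorem, is exactly this last step — verifying that restricting from $\limsup$-optimal strategies to $\lim$-optimal ones costs nothing once we are allowed to assume the optimizer is stationary and deterministic.
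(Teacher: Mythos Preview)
Your proposal is correct and takes essentially the same approach as the paper: the paper's proof is a one-line citation of Kallenberg's Corollary~5.4, and you have simply spelled out in detail why the ball-recycling game meets the hypotheses of that standard average-reward MDP result (finite state and action spaces, bounded rewards) and why the $\limsup$/$\lim$ distinction is harmless once the optimizer is stationary deterministic. There is no substantive difference in method.
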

\begin{proof}
	This follows from Kallenberg's Corollary 5.4.
\end{proof}

\subsecput{markov}{Stationary distributions of recycling strategies}

A ball-recycling game and a recycling strategy together define a Markov
process on the state space; the space of pairs comprising a balls-and-bins
configuration and an internal state. If the strategy is stateless, this is a
Markov process on the balls and bins space.

There are a finite number of balls-and-bins configurations. Therefore, a
ball-recycling game with a finite-state recycling strategy defines a finite
Markov process, and so has at least one stationary distribution.

We now show that stateless recycling strategies result in Markov processes with
unique stationary distributions.  The following lemma shows that, when we look
only at the bin configurations, recycling games have properties analogous to
irreducibility and aperiodicity in Markov chains. 

\begin{lemma}\label{lem:ergodic}
	For any ball-recycling game with $m$ balls and $n$ bins there is an
	$\epsilon>0$ such that, for all bin configurations $X$ and $Y$, and for all
	recycling strategies, the probability that $X$ reaches $Y$ within
	$\min(m,n)$ steps is at least $\epsilon$.  Furthermore, every bin
	configuration can transition to itself in one time step.
\end{lemma}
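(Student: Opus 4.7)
The plan is to prove the lemma in two parts. For the self-loop claim, fix any configuration $X$ and any bin $i$ the strategy might select, with $X_i$ balls. Each of the $X_i$ re-thrown balls independently lands back in bin $i$ with probability $p_i$, so the state is preserved with probability at least $p_{\min}^m > 0$, where $p_{\min} = \min_j p_j$. Without loss of generality we may assume $p_j > 0$ for every $j$, since a bin with $p_j = 0$ is never populated and can simply be dropped from the game.

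For the uniform reachability claim, I would exhibit, for every strategy $A$ and every pair $X, Y$, an explicit sequence of desired re-throw outcomes that drives the state from $X$ to $Y$ in at most $\min(m,n)$ steps, with each outcome having conditional probability bounded uniformly below. The idea is a two-phase scheme. In a \emph{consolidation} phase, I maintain a designated ``pot'' bin $T$, initially set to any non-empty bin. If at a given step the strategy picks a bin $i \neq T$, I direct all of bin $i$'s balls into $T$, emptying bin $i$; if the strategy picks $T$ itself, I select some other non-empty bin $T'$, direct all balls from $T$ into $T'$, and relabel $T \leftarrow T'$. In either case exactly one bin becomes empty at that step and no previously empty bin becomes non-empty, so the count of non-empty bins strictly decreases by one per step. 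Since the initial count is at most $\min(m,n)$, after at most $\min(m,n)-1$ consolidation steps all $m$ balls sit in a single bin, and the strategy is forced to select it; in one final \emph{redistribution} step I direct those $m$ balls so that they form exactly $Y$.

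Next I need a uniform lower bound on the probability of following this prescribed trajectory. A consolidation step that funnels the $b$ balls of the picked bin into a single target bin succeeds with conditional probability at least $p_{\min}^b$, and the redistribution step succeeds with probability $\binom{m}{Y_1, \dots, Y_n}\prod_j p_j^{Y_j} \geq p_{\min}^m$. Across the entire trajectory the total number of balls re-thrown is at most $m \cdot \min(m,n)$, so the composite probability is at least $p_{\min}^{m \cdot \min(m,n)} > 0$, a quantity depending only on $\p$, $m$, and $n$. Taking $\epsilon$ to be this value yields the required uniform bound.

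The main obstacle is to handle adaptive, possibly randomized strategies cleanly. Since the prescribed outcome at each step depends only on the current state and on the bin the strategy just selected, the per-step conditional bound holds regardless of any history on which the strategy may condition; composing the per-step bounds via the tower property gives a trajectory-wise lower bound that survives averaging over the strategy's internal randomness. A minor technical point is to verify that a valid partner bin $T' \neq T$ always exists when needed, which is immediate from the invariant that consolidation runs only while at least two bins are still non-empty.
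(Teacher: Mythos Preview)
Your proof is correct and follows essentially the same approach as the paper's: consolidate all balls into a single bin over at most $\min(m,n)-1$ steps, then redistribute to $Y$ in one final step, with the self-loop claim handled by letting all recycled balls return to their bin. Your version is more explicit than the paper's (which simply says ``imagine that, at each step, all the recycling balls land in occupied bins''), in that you maintain a designated target bin $T$, handle the case where the strategy selects $T$ itself, and give a concrete uniform lower bound $p_{\min}^{m\cdot\min(m,n)}$; the paper leaves these details implicit.
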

\begin{proof}
	We just need to show a sequence of outcomes for ball tosses that transform
	$X$ into $Y$, no matter which bins the recycling strategy chooses to empty.
	So, imagine that, at each step, all the recycling balls land in occupied
	bins, so that at each step, the number of occupied bins goes down by 1.
	After at most $\min(m,n)-1$ steps, all the balls will be in a single bin.
	On the next round, the recycling strategy must choose that bin, causing all
	the balls to be rethrown.  There is some non-zero probability that they
	land in configuration $Y$.

	For the second observation, simply note that all the recycled balls may
	happen to land in the bin from whence they came.
\end{proof}

\begin{lemma} \label{lem:stationary}
	\leavevmode
	\begin{enumerate}
	\item All ball-recycling games using stateless recycling strategies have
		unique stationary distributions which are equal to their limiting
		distributions.
  \end{enumerate}
\end{lemma}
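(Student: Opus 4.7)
The plan is to reduce the claim to the standard fact that every finite, irreducible, aperiodic Markov chain has a unique stationary distribution which coincides with its limiting distribution. Since a stateless strategy depends only on the current bin configuration, pairing it with the ball-recycling game yields a time-homogeneous Markov chain whose state is the bin configuration itself. There are only finitely many bin configurations (compositions of $m$ into $n$ nonnegative parts), so finiteness of the chain is automatic; it remains only to verify irreducibility and aperiodicity.

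Both properties are handed to us by \Cref{lem:ergodic}. The first conclusion of that lemma says that from any configuration $X$, any other configuration $Y$ is reached with probability at least $\epsilon > 0$ within $\min(m,n)$ steps; this is (uniform) irreducibility. The second conclusion says that every configuration can transition to itself in a single step with positive probability, so every state has a self-loop and therefore period $1$, giving aperiodicity. Invoking the convergence theorem for finite, irreducible, aperiodic Markov chains then yields a unique stationary distribution $\pi$ such that, from any initial configuration $X_0$, the distribution of $A^t X_0$ converges to $\pi$ as $t \to \infty$.

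There is essentially no combinatorial difficulty here; the main thing to articulate carefully is that \emph{statelessness} is precisely what allows the chain to be defined on the bin-configuration space alone (for a general finite-state strategy the Markov chain lives on the product of the configuration space with the internal state space, and need not be irreducible because the strategy's internal dynamics may rule out certain transitions). The mild obstacle, if any, is to explicitly note that \Cref{lem:ergodic} was stated uniformly over all recycling strategies, so in particular it applies to the transition kernel induced by the chosen stateless strategy, legitimizing the appeal to the standard theorem.
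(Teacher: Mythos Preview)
Your proposal is correct and follows essentially the same approach as the paper: observe that a stateless strategy makes the ball-recycling game a finite Markov chain on bin configurations, invoke \Cref{lem:ergodic} to obtain irreducibility and aperiodicity, and conclude via the standard convergence theorem. The paper's proof is just a more compressed version of what you wrote.
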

\begin{proof}
	Having fixed a stateless recycling strategy, the ball-recycling game is a
	Markov process on the balls-and-bins configuration. By
	\Cref{lem:ergodic}, this process is irreducible and aperiodic, and so
	has a unique stationary distribution equal to its limiting distribution.
\end{proof}

Together with \Cref{lem:deterministic-opt}, we have:
\begin{corollary} \label{cor:opt-unique}
	For any ball-recycling game, there exists an optimal strategy with a unique
	stationary distribution.
\end{corollary}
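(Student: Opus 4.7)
The plan is to obtain the corollary by chaining the two preceding results essentially verbatim; no new machinery is required. First, I would invoke \Cref{lem:deterministic-opt}, which furnishes a stateless (in fact, deterministic) recycling strategy $\opt$ whose expected average recycling rate matches the optimum over all policies for the Markov decision process. This handles the existence of an optimal strategy and, crucially, guarantees it can be taken to be stateless — a much stronger property than mere optimality, since \textit{a priori} an optimal policy in an MDP could depend on history, time, or randomness.

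Having fixed such an $\opt$, I would then apply \Cref{lem:stationary}: because $\opt$ is stateless, the ball-recycling game under $\opt$ reduces to a Markov process on the (finite) space of balls-and-bins configurations, and that lemma asserts exactly that every stateless recycling strategy induces a process with a unique stationary distribution (agreeing with its limiting distribution). Thus $\opt$ is both optimal and possesses a unique stationary distribution, which is precisely the conclusion of the corollary.

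There is no real obstacle here; the only subtlety worth flagging is that \Cref{lem:deterministic-opt} is needed in full strength. If one only knew that \textit{some} optimal strategy exists, that strategy could in principle be non-stateless, and \Cref{lem:stationary} would not apply — history-dependent or time-varying policies live on a larger state space and need not give rise to a single stationary distribution on bin configurations. The content of \Cref{lem:deterministic-opt} (via Kallenberg's Corollary~5.4) is exactly what lets us restrict attention to the stateless regime where \Cref{lem:stationary} becomes available, and at that point the corollary follows immediately.
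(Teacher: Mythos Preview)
Your proposal is correct and mirrors the paper's own argument exactly: the corollary is stated immediately after \Cref{lem:stationary} with the phrase ``Together with \Cref{lem:deterministic-opt}, we have,'' and your chaining of those two lemmas is precisely what is intended. Your remark about needing the full strength of \Cref{lem:deterministic-opt} (statelessness, not mere optimality) is a helpful clarification but adds nothing beyond the paper's implicit reasoning.
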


We now show that two of the three main recycling strategies studied in this
paper yield unique stationary distributions.  The strategies are:
\begin{itemize}
\item \FB: selects the bin that has the most balls;
\item \RB: selects a ball uniformly at random and recycles whichever bin it is in;
\item \GG: selects the bins in a round robin sequence.
\end{itemize}

\FB is a deterministic strategy, \RB is a stateless strategy and \GG is a
finite-state strategy. By \Cref{lem:stationary} we have:

\begin{lemma} \label{lem:fb-gg-rb-ergodic}
  \FB and \RB have unique stationary distributions.
\end{lemma}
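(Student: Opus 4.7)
The plan is simply to apply \Cref{lem:stationary} to each of the two strategies, so the real work is verifying that \FB and \RB both qualify as \emph{stateless} under the definition in \Cref{sec:mdps}. Once that is done, the conclusion is immediate.

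First I would treat \RB. By its definition, \RB samples a ball uniformly among all balls currently present and recycles the bin containing that ball. This decision depends only on the current configuration $X$ (and on fresh randomness), and requires no internal variable to be maintained across rounds; the paper's definition of a stateless strategy explicitly permits randomized bin selection, so \RB is stateless. Hence \Cref{lem:stationary} yields the unique stationary distribution.

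Next I would handle \FB. The rule ``pick a nonempty bin of maximum occupancy'' is a function of $X$ alone once a fixed tie-breaking convention is adopted (for concreteness, pick the smallest-indexed bin achieving the maximum). So \FB is a deterministic stateless strategy, and again \Cref{lem:stationary} applies.

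In both cases the ball-recycling game becomes a Markov chain on the finite space of bin configurations, which by \Cref{lem:ergodic} is irreducible (any $X$ reaches any $Y$ within $\min(m,n)$ steps with positive probability, regardless of the strategy) and aperiodic (every configuration has a self-loop). Therefore each chain admits a unique stationary distribution equal to its limiting distribution. The only mild subtlety is confirming that \RB's internal randomness does not make it stateful, but this is exactly the point of the paper's convention that stateless strategies may be randomized, so no additional argument is required.
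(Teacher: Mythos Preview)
Your proposal is correct and matches the paper's approach: the paper simply observes that \FB is a deterministic (hence stateless) strategy and \RB is a stateless strategy, then invokes \Cref{lem:stationary}. Your additional detail about tie-breaking for \FB and the reminder that stateless strategies may be randomized is harmless elaboration of exactly the same argument.
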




\secput{nonuniform}{Random Ball is Optimal}

In this section we prove \Cref{thm:random-opt}, showing that \RB{} is
$\Theta(1)$-optimal. 

\subsecput{nonuni-outline}{Outline of Proof}
We prove \Cref{thm:random-opt} in the following steps:
\begin{enumerate}
	\item No recycling strategy has a recycling rate exceeding
		$(2m+n-1)/\halfnorm{\p}$. (\secref{upper-statement})
	\item \RB{} matches that bound when $m \geq n$, leading to case (1) of
		\Cref{thm:random-opt}. (\secref{randomballwithmanyballs})
	\item There is an $\Theta(1)$-optimal strategy, \AE{}, when $m < n$. The
		recycling rate of \AE matches case (2) of \Cref{thm:random-opt}.
		(\secref{randomballaggroempty})
	\item By comparison to \AE{}, \RB{} is $\Theta(1)$-optimal when $m < n$.
		(\secref{rboptproof})
\end{enumerate}

\subsecput{upper-statement}{The Upper Bound}
We begin by proving an important lemma that will be used throughout, which we
refer to as the \defn{flow equation}. Then we proceed to prove the upper bound.

Let $A$ be a stateless strategy with stationary distribution $\chi^{A,\p}$. Let
$\phi_i$ be the event that $A$ picks bin $i$ to recycle. Let $\mathcal{R}_i^A =
E[R^A(\chi^{A,\p}) | \phi_i]$ be the number of balls recycled given that the
strategy picks bin $i$, and $f_i = P(\phi_j)$, the probability of picking the
bin $i$ in the stationary distribution. We note that $\mathcal{R}_i^A$ and
$f_i$ could alternatively be defined as limits of repeated applications of $A$
to any given starting state.

For a given bin $i$, we can analyze the ``flow'' of balls into and out of $i$.
When $k$ balls are thrown, $\p_ik$ of them are expected to land in $i$. For a
ball to leave $i$, $i$ must first be picked to be emptied by $A$, at which
point every ball in $i$ will be evicted. In the stationary distribution, the
net flow must be zero.  We can generalize to any set of bins and get:

\begin{lemma} \label{lem:gen-flow-equation}
	Let $A$ be a statelss strategy for a ball-recycling game with $n$ bins with
	probabilities $\p$. If $L$ is a subset of the bins, $\p_L = \sum_{\ell \in
	L} \p_\ell$, $f_L = \sum_{\ell \in L} f_\ell$ and $\mathcal{R}_L^A$ the
	conditional recycle rate given $A$ picks a bin in $L$, then
	\begin{equation}
		\p_L\mathcal{R}^A = f_L\mathcal{R}_L^A.
	\end{equation}
\end{lemma}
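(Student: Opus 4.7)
My plan is a conservation-of-flow argument in the stationary distribution. Fix a stateless strategy $A$ with its (unique, by \Cref{lem:stationary}) stationary distribution $\chi^{A,\p}$, and fix a subset $L$ of bins. Let $N_L$ denote the random variable counting the number of balls currently in $L$. Since $\chi^{A,\p}$ is stationary, $\E{N_L}$ is invariant under one step of the process, so the expected number of balls entering $L$ per step equals the expected number leaving $L$ per step.

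For the inflow, I would argue as follows. In each step, the strategy picks some bin and re-throws the balls it finds; by definition of $\mathcal{R}^A$ the expected number of balls re-thrown is $\mathcal{R}^A$. Because each re-thrown ball is distributed i.i.d.\ according to $\p$, conditionally independent of how many balls are being re-thrown, linearity of expectation gives an expected inflow into $L$ of $\p_L \mathcal{R}^A$. Here I am using that the choice of bin (made by $A$) and the throw outcomes are independent, which is true because $A$ is stateless and acts on the configuration before the throws.

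For the outflow, I would condition on the event $\phi_L = \bigcup_{\ell \in L}\phi_\ell$ that $A$ picks a bin in $L$. The probability of this event in the stationary distribution is $f_L$, and conditioned on it, the expected number of balls removed from $L$ is exactly $\mathcal{R}_L^A$ by definition. On the complement event the outflow from $L$ is zero. Hence the expected outflow per step is $f_L \mathcal{R}_L^A$. Equating inflow and outflow gives
\begin{equation*}
  \p_L \mathcal{R}^A = f_L \mathcal{R}_L^A,
\end{equation*}
as claimed.

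The only delicate point, and what I expect to be the main obstacle in making the argument airtight, is justifying the independence used in the inflow computation: namely, that the number of balls re-thrown in a step is independent of the i.i.d.\ destinations of those balls. This follows because $A$ is stateless and chooses its bin based only on the current configuration, while the throw outcomes are fresh i.i.d.\ samples from $\p$ with no dependence on the configuration. Once this independence is cleanly stated, the rest is just linearity of expectation and stationarity.
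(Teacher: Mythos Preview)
Your argument is correct and is essentially the same conservation-of-flow argument the paper uses: the paper does not give a formal proof of this lemma but sketches exactly your inflow/outflow balance in the paragraph immediately preceding the statement. Your added care about the independence of the number of balls re-thrown from their i.i.d.\ destinations is justified and fills in the one detail the paper leaves implicit.
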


We will mostly use the following special case of the
\Cref{lem:gen-flow-equation}: 

\begin{lemma}[The Flow Equation] \label{lem:flow-equation}
	Let $A$ be a statless recycling strategy for a ball-recycling game with $n$
	bins with probabilities $\p$.  Then, for all $0\leq i < n$,
	\begin{equation}
		\p_i\mathcal{R}^A = f_i\mathcal{R}_i^A.
	\end{equation}
\end{lemma}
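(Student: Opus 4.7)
The plan is to prove the general statement (Lemma~\ref{lem:gen-flow-equation}) via a flow-conservation argument on bin occupancies in the stationary distribution, and then note that the Flow Equation (Lemma~\ref{lem:flow-equation}) is the $L=\{i\}$ specialization. The intuition is that in stationarity the expected inflow of balls to any fixed set $L$ must equal the expected outflow. Inflow happens when some bin is recycled and re-thrown balls land in $L$; outflow happens only when a bin in $L$ is itself chosen for recycling, at which point all its balls are evicted. Since a fraction $\p_L$ of any batch of re-thrown balls lands in $L$ on expectation, and the expected size of a batch is $\mathcal{R}^A$, the expected inflow is $\p_L\mathcal{R}^A$; the expected outflow is $f_L\mathcal{R}_L^A$ by definition.

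To make this precise, I would first fix a stateless strategy $A$ with stationary distribution $\chi^{A,\p}$ (which exists and is unique by \Cref{lem:stationary}), let $X$ be the configuration in one step and $X'$ the configuration in the next, and write, for each bin $i$,
\begin{equation*}
E[X'_i \mid X] \;=\; X_i\bigl(1-P(\phi_i\mid X)\bigr) \;+\; \p_i \sum_j P(\phi_j\mid X)\, X_j,
\end{equation*}
obtained by conditioning on which bin is picked: if it is $j\neq i$, then $X_i$ is unchanged and $\p_i X_j$ balls are added to $i$ in expectation; if it is $i$, then $X_i$ drops to $0$ and $\p_i X_i$ balls return in expectation. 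The inner sum is exactly the expected number of balls recycled given $X$.

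Next, I would take expectations under $X\sim\chi^{A,\p}$. Stationarity gives $E[X'_i]=E[X_i]$, so the first term cancels and one obtains
\begin{equation*}
\p_i\, \mathcal{R}^A \;=\; E\!\left[P(\phi_i\mid X)\, X_i\right].
\end{equation*}
The right-hand side equals $f_i\, E[X_i\mid \phi_i]$ by the law of total expectation, and since recycling bin $i$ removes exactly $X_i$ balls, $E[X_i\mid\phi_i]=\mathcal{R}_i^A$. This establishes $\p_i\mathcal{R}^A = f_i\mathcal{R}_i^A$. Summing this identity over $\ell\in L$ and using $f_L\mathcal{R}_L^A = \sum_{\ell\in L}f_\ell\mathcal{R}_\ell^A$ (which is just the definition of a conditional expectation given a union of disjoint events) yields Lemma~\ref{lem:gen-flow-equation}.

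The only real obstacle is bookkeeping: one must be careful that $P(\phi_j\mid X)$ is well-defined (it is, since $A$ is stateless, so the bin-choice distribution depends only on $X$) and that the expectations $E[X'_i]$ and $E[X_i]$ are finite and equal in stationarity (both are bounded by $m$, and equality follows from $X,X'\sim\chi^{A,\p}$). Once these points are noted, the computation collapses to the one-line conservation argument above.
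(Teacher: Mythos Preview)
Your proposal is correct and follows the same flow-conservation idea the paper uses: the paper only gives the informal paragraph (``inflow into bin $i$ equals outflow in stationarity'') and then states \Cref{lem:flow-equation} as the $L=\{i\}$ case of \Cref{lem:gen-flow-equation}, without a formal proof of either. Your write-up supplies exactly the computation the paper leaves implicit, via the one-step conditional expectation $E[X'_i\mid X]$ and stationarity; the only structural difference is that you derive the single-bin identity first and sum to get the subset version, whereas the paper states the subset version first and specializes.
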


We now describe the main upper bound on the recycling rate of any recycling
strategy. In order to understand the intuition behind \Cref{lem:freqbound},
consider a given bin $i$. Intuitively, it makes sense to think that for a
reasonable recycling strategy the recycling rate of the other bins in the
system will go down as the number of balls $X_i$ in bin $i$ grows.  After all,
the $X_i$ balls in bin $i$ aren't available for recycling, until bin $i$ is
selected. If we assume this intuition as fact for the moment, this suggests
that the expected number of balls in bin $i$ should be greater than half the
recycling rate of bin $i$, perhaps excluding the last ball to land in the bin.

By the Flow Equation, this would suggest that
\[ \E{X_i} \geq \frac{1}{2}(\mathcal{R}_i^A - 1) =
\frac{1}{2}\left(\frac{\p_i}{f_i}\mathcal{R}^A - 1\right), \]
so that after summing over $i$, we obtain \Cref{lem:freqbound}.

However, the following strategy does not satisfy this assumption: for a given
bin $i$, have the strategy just pick the least full non-empty bins until $i$
has a few balls, then pick the fullest ones, then pick $i$ and repeat. Showing
that better strategies do not do this is non-trivial, and we prove
\Cref{lem:freqbound} by different means.

\begin{lemma}\label{lem:freqbound}
	Consider a ball-recycling game with $m$ balls, $n$ bins and distribution
	$\p$. If $A$ is a stateless strategy that picks bin $i$ with frequency
	$f_i$, then its recycle rate is bounded by
	\begin{equation}
		\mathcal{R}^A \leq \frac{2m+n-1}{\sum_j\frac{\p_j}{f_j}}.
	\end{equation}
\end{lemma}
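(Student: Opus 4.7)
My plan is to carry out a Lyapunov-style calculation in the stationary distribution $\chi$ whose existence is guaranteed by \Cref{lem:stationary} for stateless strategies. The key move is to choose the weighted potential
\[ \Phi(X) \;=\; \sum_{i=1}^{n} \frac{X_i^2}{\p_i}. \]
First I would compute the expected one-step change $E[\Delta\Phi \mid i,k]$ conditional on $A$ selecting bin $i$ when $X_i = k$. The $k$ recycled balls redistribute as a multinomial$(k,\p)$, and using $E[B_j^2] = k\p_j(1-\p_j) + k^2\p_j^2$ together with $\sum_{j\neq i} X_j = m-k$ and $\sum_{j\neq i}\p_j = 1-\p_i$, the $1/\p_j$ weighting forces the $k^2\p_j^2$ variance terms to combine cleanly, yielding
\[ E[\Delta\Phi \mid i,k] \;=\; k(n-1) + 2mk - k^2\!\left(1 + \tfrac{1}{\p_i}\right). \]

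Second, stationarity forces $E_\chi[\Delta\Phi] = 0$. Taking expectations and using $E_\chi[k] = \mathcal{R}^A$, this reduces to
\[ (2m + n - 1)\,\mathcal{R}^A \;=\; \sum_i \left(1 + \tfrac{1}{\p_i}\right) f_i\, E\!\left[X_i^2 \mid \phi_i\right], \]
after applying the identity $E_\chi[g(X)\ind{\phi_i}] = f_i\, E[g(X)\mid \phi_i]$ on the right.

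Third, I would lower-bound the right-hand side by Jensen's inequality, $E[X_i^2 \mid \phi_i] \geq (\mathcal{R}_i^A)^2$, and then invoke the Flow Equation $\mathcal{R}_i^A = \p_i \mathcal{R}^A / f_i$ from \Cref{lem:flow-equation} to rewrite
\[ \sum_i (1 + 1/\p_i)\, f_i\, (\mathcal{R}_i^A)^2 \;=\; (\mathcal{R}^A)^2 \sum_i \frac{\p_i + \p_i^2}{f_i} \;\geq\; (\mathcal{R}^A)^2 \sum_i \frac{\p_i}{f_i}. \]
Dividing out one factor of $\mathcal{R}^A$ from the resulting inequality $(2m+n-1)\mathcal{R}^A \geq (\mathcal{R}^A)^2 \sum_i \p_i/f_i$ gives the lemma.

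The main obstacle is guessing the right potential. The weighting by $1/\p_i$ is essential: it is what makes the $k^2\p_i^2$ variance contribution from bin $i$ combine with the $k^2\p_j^2$ contributions from the other bins to leave only the clean $-k^2(1+1/\p_i)$ trace of the picked bin, and it makes the $2 X_j B_j$ cross-terms collapse to $2k(m-k)$. Natural alternatives like $\sum X_i^2$ or $\sum X_i(X_i-1)$ leave $X_j \p_j$ cross-terms that cannot be bounded purely in terms of $f_i$, $\p_i$, and $\mathcal{R}^A$. A smaller technical point, as the authors caution in the text preceding the lemma, is that the direct $E[X_i] \geq \tfrac12(\mathcal{R}_i^A - 1)$ intuition seems hard to justify for arbitrary strategies; the potential-plus-Jensen route bypasses this issue entirely because it never needs to reason about balls' individual sojourn times.
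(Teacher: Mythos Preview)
Your proposal is correct and is essentially identical to the paper's own proof: the paper uses the same potential $Z(X)=\sum_j X_j^2/\p_j$, derives the same one-step drift $(2m+n-1)k - (1+1/\p_i)k^2$, and then combines stationarity, the second-moment inequality $E[(R_i^A)^2]\geq(\mathcal{R}_i^A)^2$ (which they label Cauchy--Schwarz, you label Jensen), and the Flow Equation exactly as you do.
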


Given a strategy $A$, the idea of the proof is to use the invariance of the
statistic
\begin{equation}
	Z(X) = \sum_{j = 1}^n \frac{X_j^2}{\p_j},
\end{equation}
under the action of $A$ on its stationary distribution. The application of $A$
to $Z$ together with the flow equation creates a factor of $\sum_j
\mathcal{R}_j$, which when solved for proves the bound. First, we begin with
some foundational lemmas, and then proceed to prove the main results.

\begin{lemma}
	Suppose $k$ balls are thrown into $n$ bins i.i.d.\ according to distribution
	$\mathbf{p}$. Let $B(j,k)$ be the binomial random variable denoting how
	many balls land in the $j$th bin.  The following hold:
	\begin{gather}
		\E{B(j,k)} = \p_jk \\
		\E{(B(j,k))^2} = \p_j(1-\p_j)k + \p_j^2k^2
	\end{gather}
\end{lemma}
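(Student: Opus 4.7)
The plan is to observe that $B(j,k)$ is a binomial random variable with parameters $k$ and $\p_j$, since each of the $k$ ball-tosses independently lands in bin $j$ with probability $\p_j$. Both identities then reduce to standard moment computations for a binomial, and I would prove them directly from the definition rather than by quoting a black-box fact, since the subsequent use in \Cref{lem:freqbound} benefits from having the explicit form.

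Concretely, I would write $B(j,k) = \sum_{i=1}^{k} Y_i$, where $Y_i$ is the indicator that the $i$th ball lands in bin $j$. The $Y_i$ are i.i.d.\ Bernoulli random variables with $\E{Y_i} = \p_j$, so linearity of expectation immediately yields $\E{B(j,k)} = k\p_j$, establishing the first identity.

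For the second moment, I would expand
\begin{equation*}
B(j,k)^2 = \sum_{i=1}^{k} Y_i^2 + \sum_{i \neq i'} Y_i Y_{i'}.
\end{equation*}
Since $Y_i \in \{0,1\}$ we have $Y_i^2 = Y_i$, hence $\E{Y_i^2} = \p_j$, and by independence $\E{Y_i Y_{i'}} = \p_j^2$ for $i \neq i'$. There are $k$ diagonal terms and $k(k-1)$ off-diagonal terms, so
\begin{equation*}
\E{B(j,k)^2} = k\p_j + k(k-1)\p_j^2 = \p_j(1-\p_j)k + \p_j^2 k^2,
\end{equation*}
which is the second identity.

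There is no real obstacle here: the lemma is a routine binomial-moment calculation recorded for later use. As a sanity check, the formula is consistent with the well-known variance $\mathrm{Var}(B(j,k)) = k\p_j(1-\p_j)$ via $\E{B(j,k)^2} = \mathrm{Var}(B(j,k)) + \E{B(j,k)}^2$, and one could equivalently derive the second line from that identity.
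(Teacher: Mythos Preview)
Your proof is correct and follows the same approach as the paper, which simply notes that $B(j,k)$ is binomial with parameters $\p_j$ and $k$; you have merely spelled out the standard moment computation that the paper leaves implicit.
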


\begin{proof}
	$B(j,k)$ is a binomial random variable with parameters $\p_j$ and $k$.
\end{proof}

Next, given a state $X$, we compute the effect of recycling the $\ell$th bin on
the $j$th component of $Z$. Note that if $A$ recycles bin $\ell$ of state $X$,
then $X_\ell = R^A(X)$.
\begin{lemma}
	In a ball-recycling game with $m$ balls, $n$ bins and probability
	distribution $\mathbf{p}$, if a strategy $A$ recycles bin $\ell$ in state
	$X$, then for $j \neq \ell$,
	\begin{multline}
		\E{(AX)_j^2}=X_j^2 + 2X_j\p_jR^A(X) \\
		+ \p_j(1-\p_j)R^A(X) + \p_j^2R^A(X)^2,
	\end{multline}
	where $R^A(X) = X_\ell$ is the number of balls recycled.
\end{lemma}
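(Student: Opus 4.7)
The plan is to expand the square and invoke the previous lemma. After strategy $A$ recycles bin $\ell$ in state $X$, all $X_\ell$ balls from bin $\ell$ are removed and re-thrown i.i.d.\ according to $\p$. For any bin $j \neq \ell$, bin $j$ is untouched by the eviction, so its new count is simply its old count plus however many of the re-thrown balls happen to land there. Writing $B(j, R^A(X))$ for the binomial random variable that counts these landings, where $R^A(X) = X_\ell$, we have the decomposition
\begin{equation*}
(AX)_j \;=\; X_j + B(j, R^A(X)).
\end{equation*}

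First I would square this identity and take expectations. Conditioning on the state $X$ and on the choice of bin $\ell$ (so that $X_j$ and $R^A(X)$ are fixed integers), linearity of expectation gives
\begin{equation*}
\E{(AX)_j^2} = X_j^2 + 2X_j\,\E{B(j,R^A(X))} + \E{B(j,R^A(X))^2}.
\end{equation*}
Then I would substitute the two moment formulas from the preceding lemma, namely $\E{B(j,k)} = \p_j k$ and $\E{B(j,k)^2} = \p_j(1-\p_j)k + \p_j^2 k^2$, taking $k = R^A(X)$, which yields the claimed expression.

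There is no real obstacle here; the lemma is essentially a bookkeeping step. The one subtlety to flag is the conditioning: the right-hand side is a conditional expectation given $X$ and given that $A$ selected bin $\ell$, so that $R^A(X) = X_\ell$ is a deterministic nonnegative integer and $B(j, R^A(X))$ is a genuine binomial random variable to which the previous lemma applies directly. Independence of the re-thrown balls from $X_j$ (which is part of the conditioning) is what justifies pulling $X_j$ out of the cross term.
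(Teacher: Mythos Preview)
Your proposal is correct and follows exactly the same approach as the paper: decompose $(AX)_j = X_j + B(j, R^A(X))$, expand the square, take expectations, and substitute the binomial moment formulas from the preceding lemma. The only addition is your explicit remark about the conditioning, which is a helpful clarification but not a departure from the paper's argument.
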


\begin{proof}
	\begin{align*}
		\E{(AX)_j^2} &= \E{(X_j+B(j,R^A(X)))^2} \\[1ex]
					 &= \begin{multlined}[t] X_j^2 + 2X_j\E{B(j,R^A(X))}\\
						 +\E{B(j,R^A(X))^2} \end{multlined} \\[1ex]
					 &= \begin{multlined}[t] X_j^2 + 2X_j\p_jR^A(X)\\+\p_j(1-\p_j)R^A(X)
						 + \p_j^2\left(R^A(X)\right)^2 \end{multlined}
	\end{align*}
\end{proof}

We now can use this result to compute the result of applying $A$ to $Z$. 
\begin{lemma}
	In a ball-recycling game with $m$ balls, $n$ bins and probability
	distribution $\mathbf{p}$, if a strategy $A$ recycles bin $\ell$ in state
	$X$, then
	\begin{multline}
		\E{Z(AX)}=Z(X) - \left(1+\frac{1}{\p_\ell}\right)\left(R^A(X)\right)^2 \\
		+ (2m+n-1)R^A(X)
	\end{multline}
\end{lemma}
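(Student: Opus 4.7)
The plan is to split $Z(AX) = \sum_j (AX)_j^2/\p_j$ into the $j = \ell$ term and the $j \neq \ell$ terms, apply the preceding lemma to the latter, and handle the former separately. The key observation for $j = \ell$ is that after $A$ recycles bin $\ell$, that bin is first emptied and then receives only fresh balls from re-throwing the $R^A(X)$ recycled balls, so $(AX)_\ell$ is distributed as $B(\ell, R^A(X))$ with no leftover contribution from the old $X_\ell$. The earlier binomial second-moment formula then gives directly $\E{(AX)_\ell^2} = \p_\ell(1-\p_\ell) R^A(X) + \p_\ell^2 (R^A(X))^2$.

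Write $R = R^A(X)$ for brevity. Substituting the preceding lemma's formula into $\sum_{j \neq \ell} \E{(AX)_j^2}/\p_j$ and dividing each term by $\p_j$ produces a copy of $Z(X) - X_\ell^2/\p_\ell$ plus linear-in-$R$ and quadratic-in-$R$ correction terms; the $j = \ell$ contribution then adds $(1-\p_\ell)R + \p_\ell R^2$. Since $X_\ell = R$, the leading piece rearranges to $Z(X) - R^2/\p_\ell$, and the remaining work is to collect the $R$ and $R^2$ coefficients.

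The collection step uses three bookkeeping identities: $\sum_j X_j = m$ (so $\sum_{j \neq \ell} X_j = m - R$), $\sum_j \p_j = 1$ (so $\sum_{j \neq \ell} \p_j = 1 - \p_\ell$), and $\sum_{j \neq \ell} 1 = n - 1$. I expect the $\p_\ell$-dependent pieces in both the linear and the quadratic coefficients to cancel cleanly across the $j \neq \ell$ and $j = \ell$ contributions, leaving the linear coefficient $2m + n - 1$ and the quadratic coefficient $-(1 + 1/\p_\ell)$, which is exactly the claimed identity.

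There is no genuine obstacle here beyond careful bookkeeping; the only mild subtlety is the asymmetry between the two cases. For $j \neq \ell$, the old contribution $X_j^2/\p_j$ persists in $Z(AX)$ and is merely perturbed by the incoming binomial. For $j = \ell$, by contrast, the old term $X_\ell^2/\p_\ell$ is destroyed entirely and replaced by the second moment of a fresh $B(\ell, R)$. It is precisely this destroyed diagonal piece $-R^2/\p_\ell$, together with the $-2R^2$ absorbed from expanding $2R(m - R)$ in the off-diagonal linear terms, that produces the distinctive $-(1 + 1/\p_\ell)R^2$ drift which will later drive the upper bound in \Cref{lem:freqbound}.
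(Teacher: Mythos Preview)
Your proposal is correct and follows essentially the same approach as the paper: split $Z(AX)$ into the $j=\ell$ and $j\neq\ell$ contributions, use the binomial second-moment formula (directly for $j=\ell$, via the preceding lemma for $j\neq\ell$), and then collect terms using $\sum_j X_j = m$, $\sum_j \p_j = 1$, and $X_\ell = R^A(X)$. The paper performs exactly this computation, with only a cosmetic difference in how the bookkeeping is organized (it reabsorbs the $(1-\p_j)R + \p_j R^2$ terms into a full sum over all $j$ before simplifying).
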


\begin{proof}
	\begin{align*}
		\E{Z(AX)} 
		&= \sum_{j=1}^n \E{(AX)_j^2/\p_j} \\
		&= \frac{\E{(AX)_\ell^2}}{\p_\ell} + \sum_{j \neq \ell} \frac{\E{(AX)_j^2}}{\p_j} \\
		&= \begin{multlined}[t][6.25cm](1-\p_\ell)R^A(X) + \p_\ell\left(R^A(X)\right)^2 \\[0.75ex]
			\vspace*{-1.5ex}\shoveleft[0.25cm]{+ \sum_{j \neq \ell} \left( X_j^2/\p_j + 2X_jR^A(X)} \vphantom{\left(R^A(X)\right)^2}\right.\\
			\left.+ (1-\p_j)R^A(X) + \p_j\left(R^A(X)\right)^2 \right) \end{multlined} \\
		&= Z(X) - \left(2 + \frac{1}{\p_\ell}\right)\left(R^A(X)\right)^2 + 2mR^A(X) \\
		&\qquad+ \sum_j \left((1-\p_j)R^A(X) + \p_j\left(R^A(X)\right)^2\right) \\
		&= \begin{multlined}[t]Z(X) - \left(1+\frac{1}{\p_\ell}\right)\left(R^A(X)\right)^2 \\
			+ (2m+n-1)R^A(X) \end{multlined}
	\end{align*}
\end{proof}

Now, we can prove \Cref{lem:freqbound}.

\begin{proof}[Proof of \Cref{lem:freqbound}]
	Let $\chi^A$ be the stationary distribution relative to $A$. Let $\phi_j$
	be the event that $A$ recycles the $j$th bin of $\chi^A$, $R_j^A$ the
	random variable of how many balls are recycled given the $j$th bin is
	chosen by $A$, $\mathcal{R}_j^A=\E{R_j^A}$ and $f_j$ the probability that
	$A$ recycles that bin.  Because $\chi^A = A\chi^A$ by definition, we must
	have $\E{Z(A\chi^A)} = \E{Z(\chi^A)}$. Therefore:

	\begin{align*}
	\E{Z(\chi^A)} &= \E{Z(A\chi^A)}\\
	  			  &= \sum_j f_j \E{Z(A\chi^A)|\phi_j} \\
				  &= \begin{multlined}[t][6.5cm]\sum_j f_j \left(\E{Z(\chi^A)|\phi_j}
					  \vphantom{\left(1+\frac{1}{\p_j}\right)} \right. \\
					  - \left(1+\frac{1}{\p_j}\right)\E{\left(R_j^A\right)^2} \\
					  \left. \vphantom{\left(1+\frac{1}{\p_j}\right)} + (2m+n-1)\mathcal{R}_j^A\right) \end{multlined} \\
				  &\leq \begin{multlined}[t][6.5cm]\E{Z(\chi^A)} + (2m+n-1)\mathcal{R}^A \\
					  - \sum_j f_j\left(1 +\frac{1}{\p_j}\right)\left(\mathcal{R}_j^A\right)^2 \end{multlined} \\
				  &= \begin{multlined}[t][6.5cm]\E{Z(\chi^A)} + (2m+n-1)\mathcal{R}^A \\
					  - \sum_j \frac{1}{f_j}\left(\p_j^2 +\p_j\right)\left(\mathcal{R}^A\right)^2, \end{multlined}
	\end{align*}
	where the inequality is due to the Cauchy-Schwartz Inequality and the last
	line is because of the Flow Equation.
	
	Thus we have:
	\begin{align*}
		\mathcal{R}^A &\leq \frac{2m+n-1}{\sum_j \frac{1}{f_j}\left(\p_j^2 +\p_j\right)} \\
					&\leq \frac{2m+n-1}{\sum_j \frac{\p_j}{f_j}}
	\end{align*}
\end{proof}

\Cref{lem:freqbound} applies to the optimal deterministic strategy $\opt$
promised by \Cref{cor:opt-unique}, and we know that $\mathcal{R}^A\leq
\mathcal{R}^\opt$ for \textit{any} recycling strategy $A$.  Thus, by maximizing
the RHS of \Cref{lem:freqbound}, we can get an upper bound on the recycling
rate of any recycling strategy.

\begin{lemma}\label{lem:upperbound}
	Consider a ball-recycling game with $m$ balls, $n$ bins and distribution
	$\p$. For any recycling strategy $A$, 
	\[\mathcal{R}^A \leq \frac{2m+n-1}{\halfnorm{\p}}.\]
\end{lemma}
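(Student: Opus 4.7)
The plan is to combine \Cref{lem:freqbound} with a single application of the Cauchy--Schwarz inequality, after first reducing to stateless strategies. By \Cref{lem:deterministic-opt} (equivalently \Cref{cor:opt-unique}), there is a stateless (indeed deterministic) strategy $\opt$ achieving the optimal expected average recycling rate, so $\mathcal{R}^A \leq \mathcal{R}^{\opt}$ for every strategy $A$; hence it suffices to prove the bound for stateless strategies.

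So fix a stateless strategy $A$ with picking frequencies $f_j$ in its stationary distribution. Since $A$ picks exactly one bin per round, $\sum_j f_j = 1$. \Cref{lem:freqbound} gives
\[
\mathcal{R}^A \leq \frac{2m+n-1}{\sum_j \p_j/f_j},
\]
so I only need to show that $\sum_j \p_j/f_j \geq \halfnorm{\p}$. Taking $a_j = \sqrt{\p_j/f_j}$ and $b_j = \sqrt{f_j}$, Cauchy--Schwarz yields
\[
\halfnorm{\p} = \Bigl(\sum_j \sqrt{\p_j}\Bigr)^{\!2} = \Bigl(\sum_j a_j b_j\Bigr)^{\!2} \leq \Bigl(\sum_j \tfrac{\p_j}{f_j}\Bigr)\Bigl(\sum_j f_j\Bigr) = \sum_j \tfrac{\p_j}{f_j},
\]
which combined with the previous display proves the lemma.

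The only bookkeeping point is to justify that the quotients $\p_j/f_j$ are well-defined for every $j$ that contributes. Bins with $\p_j = 0$ can be dropped from both $\halfnorm{\p}$ and $\sum_j \p_j/f_j$ without changing either quantity. If $\p_j > 0$ but $f_j = 0$, then bin $j$ receives balls at positive rate yet is never emptied, which contradicts stationarity whenever $\mathcal{R}^A > 0$; and if $\mathcal{R}^A = 0$ the bound is trivial. There is no substantive obstacle beyond this: all the real work is already packed into \Cref{lem:freqbound}, and the present lemma is essentially a one-line consequence via Cauchy--Schwarz, which is tight precisely when $f_j \propto \sqrt{\p_j}$.
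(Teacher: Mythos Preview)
Your proof is correct and matches the paper's approach exactly: reduce to a stateless optimal strategy via \Cref{lem:deterministic-opt}/\Cref{cor:opt-unique}, then apply Cauchy--Schwarz to the denominator in \Cref{lem:freqbound} using $\sum_j f_j = 1$. The paper's proof is literally the one line ``This follows immediately from the Cauchy--Schwarz Inequality,'' with the reduction to stateless strategies mentioned in the preceding paragraph; your added bookkeeping about the $f_j=0$ case is a harmless bit of extra care.
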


\begin{proof}
	This follows immediately from the Cauchy-Schwartz Inequality.
\end{proof}

\subsecput{randomballwithmanyballs}{\RB with \texorpdfstring{$m \geq n$}{m >= n}}

We show the following lower bound, which with \Cref{lem:upperbound}, shows
optimality when $m = \Omega(n)$.

\begin{lemma}\label{lem:random-ball-lower}
	\RB recycles at least $\frac{m}{\halfnormp}$ balls per round in
	expectation. 
\end{lemma}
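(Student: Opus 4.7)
The plan is to exploit the specific structure of \RB{} by combining its size-biased bin-selection with the Flow Equation (\Cref{lem:flow-equation}) and then closing the loop with Cauchy--Schwarz.

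First I would set up notation in the stationary distribution $\chi^{\textrm{RB},\p}$. Let $X_j$ denote the number of balls in bin $j$. Since \RB{} picks a ball uniformly at random from the $m$ balls and recycles its bin, the probability of selecting bin $j$ is exactly the size-biased quantity
\[
f_j \;=\; \frac{\E{X_j}}{m},
\]
and the conditional expected recycling count given that bin $j$ is selected is
\[
\mathcal{R}_j^{\textrm{RB}} \;=\; \frac{\E{X_j^2}}{\E{X_j}},
\]
because conditioning on ``the chosen ball lies in bin $j$'' biases the distribution of $X_j$ by its size.

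Next I would feed these two identities into the Flow Equation $\p_j \mathcal{R}^{\textrm{RB}} = f_j \mathcal{R}_j^{\textrm{RB}}$, which collapses to the clean identity
\[
\E{X_j^2} \;=\; m\,\p_j\,\mathcal{R}^{\textrm{RB}}.
\]
Now I apply Jensen's inequality, $\E{X_j} \leq \sqrt{\E{X_j^2}}$, to get
\[
\E{X_j} \;\leq\; \sqrt{m\,\p_j\,\mathcal{R}^{\textrm{RB}}}.
\]
Summing over $j$ and using $\sum_j \E{X_j} = m$ along with $\sum_j \sqrt{\p_j} = \sqrt{\halfnormp}$ gives
\[
m \;\leq\; \sqrt{m\,\mathcal{R}^{\textrm{RB}}}\,\sum_{j}\sqrt{\p_j} \;=\; \sqrt{m\,\mathcal{R}^{\textrm{RB}}\,\halfnormp},
\]
and squaring and rearranging yields $\mathcal{R}^{\textrm{RB}} \geq m/\halfnormp$, as desired.

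The only subtle step is the justification of the size-biased identities for $f_j$ and $\mathcal{R}_j^{\textrm{RB}}$; I would spell this out by noting that in each step \RB{} effectively samples a ball uniformly from the multiset of balls in the stationary configuration, so conditional on the chosen ball being in bin $j$, the configuration's law is the size-biased tilt in the $j$-th coordinate. Everything else is an algebraic consequence of the Flow Equation and Cauchy--Schwarz, so I do not expect any real obstacle beyond that bookkeeping.
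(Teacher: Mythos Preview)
Your proposal is correct and is essentially the same argument as the paper's: both derive the identity $\E{X_j^2}=m\,\p_j\,\mathcal{R}^{\textrm{RB}}$ from stationarity (you invoke the Flow Equation as a black box, the paper writes the flow balance inline), then apply Jensen $\E{X_j}\le\sqrt{\E{X_j^2}}$, sum over $j$, and square. The only cosmetic difference is that you factor through $f_j$ and $\mathcal{R}_j^{\textrm{RB}}$ explicitly before collapsing, whereas the paper jumps straight to the per-bin flow identity.
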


\begin{proof}
	Let $\chi^\mathrm{RB}=(\chi_i^\mathrm{RB})$ be the random variable of the
	number of balls in each bin in the stationary distribution of \RB.  \RB
	recycles bin $i$ with probability $\frac{\chi_i^\mathrm{RB}}{m}$, and
	therefore the expected number of balls recycled from bin $i$ per round is
	$\frac{\E{\left(\chi_i^\mathrm{RB}\right)^2}}{m}$. The number of balls that
	land in bin $i$ per round is
	$\p_i\sum_{j=1}^n\frac{\E{\left(\chi_j^\mathrm{RB}\right)^2}}{m}$.  Since
	$X$ is distributed stationarily, we must have

	\begin{equation*}
		\p_i\sum_{j=1}^n\frac{\E{\left(\chi_j^\mathrm{RB}\right)^2}}{m}
	= \frac{\E{\left(\chi_i^\mathrm{RB}\right)^2}}{m} \geq \frac{\E{\chi_i^\mathrm{RB}}^2}{m},
	\end{equation*}

	using Jensen's Inequality. Clearing denominators, taking square roots and
	summing across $i$, we have
	\begin{align*}
		\left(\sum_{j=1}^n \E{\left(\chi_j^\mathrm{RB}\right)^2}\right)^{\hspace*{-0.5ex}\frac{1}{2}} \hspace*{-0.5ex}\sum_{i=1}^n \p_i^\frac{1}{2}
		&= \sum_{i=1}^n \left( \p_i\sum_{j=1}^n {\E{\left(\chi_j^\mathrm{RB}\right)^2}} \right )^{\hspace*{-0.5ex}\frac{1}{2}} \\
		&\geq \sum_{i=1}^n \E{\chi_i^\mathrm{RB}} = m.
	\end{align*}
	Therefore the expected recycle rate is
	\[ \sum_{j=1}^n \frac{\E{\left(\chi_j^\mathrm{RB}\right)^2}}{m}
	\geq \frac{m}{\left(\sum_{i=1}^n \sqrt{\p_i}\right)^2}
	= \frac{m}{\halfnormp}. \]
\end{proof}

\begin{corollary}\label{cor:rboptmanyballs}
	Consider a ball-recycling game with $m$ balls and $n$ bins.  If $m =
	\Omega(n)$, then \RB is asymptotically optimal among recycling strategies. 
\end{corollary}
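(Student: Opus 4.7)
The plan is to simply combine the upper bound from \Cref{lem:upperbound} with the lower bound from \Cref{lem:random-ball-lower} and observe that when $m = \Omega(n)$ the two bounds agree up to a multiplicative constant. There is no further obstacle here; the substantive work has already been done in the two preceding lemmas, and this corollary is essentially a one-line consequence.

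Concretely, I would proceed as follows. First, I would invoke \Cref{lem:upperbound} to conclude that for every recycling strategy $A$,
\begin{equation*}
\mathcal{R}^A \;\leq\; \frac{2m + n - 1}{\halfnorm{\p}}.
\end{equation*}
In particular this bounds the optimal strategy $\opt$ promised by \Cref{cor:opt-unique}. Next, I would invoke \Cref{lem:random-ball-lower} to conclude that
\begin{equation*}
\mathcal{R}^{\mathrm{RB}} \;\geq\; \frac{m}{\halfnorm{\p}}.
\end{equation*}

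Finally, using the hypothesis $m = \Omega(n)$, fix a constant $c > 0$ with $m \geq cn$. Then
\begin{equation*}
2m + n - 1 \;\leq\; 2m + n \;\leq\; \left(2 + \tfrac{1}{c}\right) m,
\end{equation*}
so the ratio between the upper bound on $\mathcal{R}^{\opt}$ and the lower bound on $\mathcal{R}^{\mathrm{RB}}$ is at most $2 + 1/c = O(1)$. Hence $\mathcal{R}^{\mathrm{RB}} = \Theta(\mathcal{R}^{\opt})$, which is exactly the asymptotic optimality claim, and moreover gives the explicit rate $\mathcal{R}^{\mathrm{RB}} = \Theta(m / \halfnorm{\p})$ asserted in case (1) of \Cref{thm:random-opt}.
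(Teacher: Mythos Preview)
Your proposal is correct and follows exactly the approach intended by the paper: the corollary is stated without proof, and the sentence introducing \Cref{lem:random-ball-lower} already says that this lower bound ``with \Cref{lem:upperbound}, shows optimality when $m = \Omega(n)$.'' Your write-up simply makes that one-line inference explicit.
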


\subsecput{randomballaggroempty}{\AE is Optimal}

In this section, we investigate \AE strategies, which aggressively recycle
balls outside a given subset of bins.  An \AE strategy runs one strategy on a
fixed subset of bins, but always chooses to recycle a bin outside of this set
if there exists one which has any balls.  Specifically, we show that a
$\Theta(1)$-optimal strategy on a particular $O(m)$ subset of the bins can be
extended to an $\Theta(1)$-optimal strategy on the full ball-recycling game by
aggressively emptying the
rest.

Consider a ball-recycling game with $m$ balls, $n$ bins, and ball distribution
$\p$. Let $L$ be some subset of bins and $S$ be a strategy on the \defn{induced
ball-recycling game} of $L$, which is the ball-recycling game with $m$ balls,
$|L|$ bins, and ball distribution $\q$, where 
\begin{equation*}
	\q_i = \frac{\p_i}{\sum_{\ell \in L} \p_\ell}.
\end{equation*}
Therefore, $\q$ is $\p$'s conditional probability distribution on $L$. We
define $L,S$-\AE to be the strategy which empties the lowest weight non-empty
bin in the complement of $L$ if one exists and otherwise performs $S$ on $L$.
Note that all the balls will be in $L$ whenever $S$ is performed, so this is
well-defined.

We begin by showing that there exists an $L$ and $S$ such that $|L|=O(m)$, $L$
contains all bins with weight at least $\frac{1}{m}$, and $L,S$-\AE is
asymptotically optimal. Note that when $m = \Omega(n)$, this is trivial,
because we can take $L$ to be all the bins and $S$ to be a $\Theta(1)$-optimal
strategy; however, this section provides stronger bounds when $m = o(n)$.
Intuitively, the idea is that very low weight bins won't be able to effectively
accumulate balls, so strategies do better to recover any balls in them than to
wait for more balls to land there.

\begin{lemma}\label{lem:aggroopt}
	There exists an $L$ and $S$ such that $|L|=O(m)$, $L$ contains all bins of
	weight at least $\frac{1}{m}$ and $L,S$-\AE is asymptotically optimal.
\end{lemma}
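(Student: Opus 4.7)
The plan is to take $L = \{i : \p_i \geq 1/m\}$, which automatically satisfies $|L| \leq m$ since the probabilities sum to $1$. The induced ball-recycling game on $L$ has $m$ balls and $|L| \leq m$ bins, so by \Cref{cor:rboptmanyballs}, \RB (run on $L$'s induced distribution) is $\Theta(1)$-optimal; take $S$ to be this strategy. Writing $P_L = \sum_{\ell \in L}\p_\ell$ and $H_L = (\sum_{\ell \in L}\sqrt{\p_\ell})^2$, the induced distribution has half-norm $H_L/P_L$, so $S$ achieves rate $\Theta(m P_L/H_L)$ on the induced game.

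Next I would analyze $L,S$-\AE on the full game; let $A$ denote this strategy. Every bin in $L^c$ has weight below $1/m$, and the \AE rule empties such a bin whenever it is non-empty, so in the stationary distribution the expected number of balls recycled per outside pick is $\mathcal{R}_{L^c}^A = \Theta(1)$. Applying the flow equation (\Cref{lem:gen-flow-equation}) to $L^c$ gives $f_{L^c} = (1-P_L)\mathcal{R}^A / \mathcal{R}_{L^c}^A = \Theta((1-P_L)\mathcal{R}^A)$. The remaining $1 - f_{L^c}$ fraction of rounds runs $S$ on $L$; during those rounds the configuration of $L$ tracks the induced game's stationary distribution, so the $L$-contribution to $\mathcal{R}^A$ is $\Theta((1-f_{L^c}) \cdot m P_L / H_L)$. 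Adding the two contributions and solving for $\mathcal{R}^A$ yields a closed-form rate up to constants.

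I would then compare $\mathcal{R}^A$ to the upper bound $(2m+n-1)/\halfnormp$ from \Cref{lem:upperbound}. Because every bin outside $L$ has weight less than $1/m$, the low-weight tail satisfies $\sum_{i \notin L}\sqrt{\p_i} \leq \sqrt{m}\sum_{i \notin L}\p_i = \sqrt{m}(1-P_L)$. Squaring and combining with $H_L$ yields a two-sided control on $\halfnormp$ in terms of $H_L$ and $1-P_L$; a short case analysis on whether the $L$-contribution or the $L^c$-contribution dominates $\mathcal{R}^A$ then matches it to the upper bound up to constants, certifying asymptotic optimality of $L,S$-\AE.

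The main obstacle will be the cross-interference between $S$ on $L$ and the aggressive-emptying rule on $L^c$: when $S$ recycles an $L$-bin, the freshly thrown balls can seed $L^c$, and $A$ pauses to clean those up before returning to $S$. Formally justifying that the conditional stationary distribution of $L$ under $A$ agrees, up to constants in the rate, with the stationary distribution of the standalone induced game is the delicate coupling argument needed to import $S$'s rate bound from \Cref{cor:rboptmanyballs}. A secondary subtlety is ensuring the tail bound on $\halfnormp$ is tight enough across all parameter regimes, especially at the boundary between $P_L$ close to $1$ (so the $L^c$-contribution is negligible) and $1-P_L = \Omega(1)$.
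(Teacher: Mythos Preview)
Your approach has a genuine gap. You plan to certify optimality by showing that $\mathcal{R}^A$ matches the upper bound $(2m+n-1)/\halfnormp$ from \Cref{lem:upperbound} up to constants. But that bound is not tight in the regime $m<n$---which is precisely the regime where this lemma does any work (for $m\ge n$ one may take $L$ to be all bins and the statement is trivial). Concretely, take $m=2$, $n$ large, $\p_1=1-(n-1)\delta$ and $\p_i=\delta$ for $i>1$ with $\delta=n^{-3}$: then $\halfnormp\approx 1$, so the upper bound is $\Theta(n)$, while both $\opt$ and your $L,S$-\AE (with $L=\{1\}$) recycle $\approx 2$ balls per round. Your $A$ is indeed optimal here, but it does \emph{not} match the upper bound, so the proposed certification cannot go through. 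A smaller concrete slip: the tail inequality $\sum_{i\notin L}\sqrt{\p_i}\le\sqrt{m}\sum_{i\notin L}\p_i$ points the wrong way---for $\p_i<1/m$ one has $\sqrt{m\p_i}<1$, hence $\sqrt{\p_i}>\sqrt{m}\,\p_i$, not the reverse.

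The paper's argument is structurally different and avoids both issues by comparing directly to $\opt$ rather than to the half-norm bound. It first rewrites \Cref{lem:freqbound} via the flow equation as $\sum_i \mathcal{R}_i^{\opt}\le 2m+n$; since each $\mathcal{R}_i^{\opt}\ge 1$, at most $m$ bins have $\mathcal{R}_i^{\opt}\ge 3$, and $L$ is taken to be \emph{those} bins together with the (at most $m$) bins of weight $\ge 1/m$. The strategy $S$ then \emph{simulates} $\opt$ on a virtual configuration: it aggressively empties $L^c$, keeps a log of where the relocated balls went, and whenever $L^c$ is empty it asks $\opt$ which bin to recycle in the virtual state and mirrors that action. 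Because $\opt$ recycles on average at most $3$ balls per $L^c$-visit while $S$ recycles at least $1$, $t$ rounds of $\opt$ correspond to at most $3t$ rounds of $S$ recycling the same total number of balls, giving $1/3$-optimality directly---no rate computation on the induced game, no coupling of stationary distributions, and no appeal to \Cref{lem:upperbound}.
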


\begin{proof}
	By \Cref{lem:deterministic-opt}, there exists an optimal deterministic
	strategy $\opt$. Using the flow equation, \Cref{lem:freqbound} can be
	rewritten as:
	\[ \sum_{i=1}^n \mathcal{R}_i^\opt \leq 2m + n. \]
	Because $\opt$ will never recycle an empty bin, each $\mathcal{R}_i^\opt
	\geq 1$.  Therefore, there can be at most $m$ bins with average recycle
	rates at least 3. Let $L$ be this set of bins, together with any bins of
	weight at least $\frac{1}{m}$, and we will construct a strategy $S$ that
	aggressively empties the remaining bins into $L$.
	
	$S$ aggressively empties the complement of $L$, but also keeps a virtual
	configuration of where $\opt$ thinks the balls are, as well as a log of
	where $S$ has moved them. So when $S$ aggressively empties a bin, it also
	updates the log of each ball it throws, indicating where it landed. When
	$L^c$ is empty, it asks $\opt$ which bin to recycle based on the virtual
	configuration. If it says to recycle a bin in $L^c$, we use the logs to
	update where those balls will land in the virtual configuration. If it says
	to recycle a bin in $L$, we recycle those balls that are there in the
	virtual configuration, and leaving any others behind in that same bin.
	Thus $S$ performs $\opt$ but rushes ahead to recycle those balls outside of
	$L$.

	Now, consider $t$ rounds of $\opt$. For large enough $t$, $\opt$ will
	recycle on average at most 3 balls at a time from $L^c$. $S$ recycles at
	least 1 ball at a time from $L^c$ and exactly as many balls at a time from
	$L$.  Therefore for large $t$, $t$ rounds of $\opt$ will correspond to at
	most $3t$ rounds of $S$, and during this period $S$ will recycle the same
	number of balls. Thus $S$ is $1/3$-optimal.
\end{proof}

Next we compute the recycle rate of $L,S$-\AE as a function of the recycle rate
of $S$ on the induced ball-recycling game on $L$.

\begin{lemma}\label{lem:aggro-empty-bounds}
	If $\mathcal{R}^S$ is the recycle rate of $S$ (on $L$), and $q$ is the
	probability of a ball landing in $L^c$, then the recycle rate of $L,S$-\AE
	is
	\[\mathcal{R}^\textrm{AE} = \Theta\left(\frac{1}{(1 - q)/\mathcal{R}^S + q}\right)\]
\end{lemma}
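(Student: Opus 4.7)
The plan is to derive a closed form for $\mathcal{R}^\textrm{AE}$ via two flow-balance equations in the stationary distribution of $L,S$-\AE{}, and then reduce the $\Theta$-claim to bounding one averaged quantity. Let $\rho_E$ be the stationary fraction of rounds on which aggressive-empty fires, set $\rho_S = 1 - \rho_E$, and let $\mu$ denote the expected number of balls recycled per aggressive-empty round. Averaging the contributions from the two kinds of rounds gives
\[
\mathcal{R}^\textrm{AE} = \rho_S \mathcal{R}^S + \rho_E \mu,
\]
and, since every recycled ball independently lands in $L^c$ with probability $q$ while $L^c$ loses balls only during aggressive-empty rounds, flow conservation across the boundary of $L^c$ forces
\[
q \mathcal{R}^\textrm{AE} = \rho_E \mu.
\]
Eliminating $\rho_S$ and $\rho_E$ via $\rho_S + \rho_E = 1$ then yields
\[
\mathcal{R}^\textrm{AE} = \frac{1}{(1-q)/\mathcal{R}^S + q/\mu}.
\]

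One subtlety to verify en route is that the $\mathcal{R}^S$ appearing here is the same quantity as the recycle rate of $S$ on the induced game. When $S$ fires inside $L,S$-\AE{}, the $L^c$-bound balls it throws bounce through aggressive-empty and eventually settle in $L$ distributed exactly as $\q$, because each such ball restarts from $\p$ until it lands in $L$, which gives the conditional distribution $p_i/(1-q) = \q_i$. Thus $S$'s effect on $L$ is identical to its effect on the induced game, and each $S$-round really contributes $\mathcal{R}^S$ recycled balls in expectation.

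With the closed form in hand, the $\Theta$-claim reduces to showing $\mu = \Theta(1)$. The lower bound $\mu \ge 1$ is immediate, since each aggressive-empty round picks a non-empty bin and recycles all of its balls; this alone gives the $O(\cdot)$ direction $\mathcal{R}^\textrm{AE} \le 1/((1-q)/\mathcal{R}^S + q)$.

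The main obstacle is the matching upper bound $\mu = O(1)$, which yields the $\Omega(\cdot)$ direction. To prove it I would exploit the structural property, guaranteed by the construction of $L$, that every $i \in L^c$ has weight $p_i < 1/m$. Combined with the fact that at most $m$ balls are thrown in any single round, the expected number of arrivals at any fixed $i \in L^c$ per round is $p_i \mathcal{R}^\textrm{AE} \le p_i m < 1$. I would apply the per-bin flow equation $p_i \mathcal{R}^\textrm{AE} = f_i k_i$ from \Cref{lem:flow-equation}, with $f_i$ the stationary frequency of picking $i$ and $k_i$ its expected content when picked, together with a counting argument showing that the balls landing in $L^c$ tend to occupy distinct low-weight bins (each $L^c$-bin receives each $L^c$-bound ball with conditional probability at most $1/(qm)$). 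The conclusion is that the averaged conditional content $\mu = \sum_{i \in L^c} f_i k_i / \rho_E$ is $O(1)$. Substituting $\mu = \Theta(1)$ back into the closed form yields the claimed bound.
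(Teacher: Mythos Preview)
Your flow-balance setup is clean and the closed form
\[
\mathcal{R}^\textrm{AE} \;=\; \frac{1}{(1-q)/\mathcal{R}^S + q/\mu}
\]
is correct, as is your argument that the $\mathcal{R}^S$ appearing here coincides with the recycle rate of $S$ on the induced game on $L$. This framing is different from the paper's: the paper instead tracks a long run containing $t$ $S$-rounds, counts the total balls $b$ thrown from $L$ and the number $a$ landing in $L^c$, and directly estimates the number of \AE{}-rounds needed to clear those $a$ balls as $\Theta(a/(1-q))$. Your stationary-equation route is arguably tidier, but both routes must eventually appeal to the same structural fact about $L^c$.

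Two issues. First, the directions are reversed: since $\mu \ge 1$ gives $q/\mu \le q$, it \emph{shrinks} the denominator and yields $\mathcal{R}^\textrm{AE} \ge 1/((1-q)/\mathcal{R}^S + q)$, the $\Omega$ direction. It is $\mu = O(1)$ that delivers the $O$ direction.

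Second, and this is the real gap, your sketch for $\mu = O(1)$ does not close. Summing the per-bin flow equation $p_i\mathcal{R}^\textrm{AE} = f_i k_i$ over $i \in L^c$ simply reproduces $q\mathcal{R}^\textrm{AE} = \rho_E \mu$, so the identity $\mu = \sum_{i\in L^c} f_i k_i / \rho_E$ is a tautology and cannot bound $\mu$. The individual inequalities $f_i k_i = p_i \mathcal{R}^\textrm{AE} < 1$ do not control the weighted average $\sum f_i k_i / \sum f_i$ either, since small $f_i$ could accompany large $k_i$. What is actually needed is the occupancy fact you allude to but do not prove: because every $i \in L^c$ has $p_i < 1/m$ and at most $m$ balls are in play, the balls residing in $L^c$ occupy $\Omega(1)$ distinct bins per ball, so clearing $L^c$ costs \AE{} $\Omega(|\text{balls in }L^c|)$ rounds and each pick averages $O(1)$ balls. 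The paper supplies exactly this step, arguing that $k \le m$ balls thrown into bins of weight at most $1/m$ produce at least $(1-1/e)k$ non-empty bins in expectation; this is what pins the number of \AE{}-rounds between consecutive $S$-rounds to $\Theta(a/(1-q))$. You should replace the flow-equation maneuver with this collision bound.
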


\begin{proof}
	Consider a collection of recycling rounds of $L,S$-\AE where $t$ of those
	times $L,S$-\AE recycles a bin from $L$. Say $b$ balls are thrown from bins
	in $L$ and $a$ balls land in $L^c$. Now, if $m$ balls are thrown into bins
	of size at most $\frac{1}{m}$, then the expected number of empty bins is at
	most \[ m \left(1 - \frac{1}{m}\right)^m \leq \frac{m}{e}.\] Because fewer
	thrown balls will have fewer collisions, this means the expected number of
	non-empty bins when $k \leq m$ balls are thrown into $L^c$ is at least
	$\left(1-\frac{1}{e}\right)k$, requiring at least as many time steps to
	aggressively empty. Thus, for large $t$, the expected number of turns
	required to empty the $a$ balls out of $L^c$ is at least
	$\left(1-\frac{1}{e}\right)\frac{a}{1-q}$. Whereas even if the balls were
	recycled from $L^c$ one at a time this expected number of turns is at most
	$\frac{a}{1 - q}$ turns. The number of balls recycled during this period is
	$b + \frac{a}{1 - q}$, and we have shown the number of rounds $\rho$
	satisfies: \[\rho = \Theta\left(t + \frac{a}{1-q}\right).\]

	For large enough $t$, $b = \Theta\left(t\mathcal{R}^S\right)$ and
	$a = \Theta\left(tq\mathcal{R}^S\right)$, so the overall recycle rate
	$\mathcal{R}^\textrm{AE}$ therefore satisfies
	\begin{align*}
		\mathcal{R}^\textrm{AE} &= \Theta\left(\frac{t\mathcal{R}^S + tq\mathcal{R}^S/(1 - q)}{t + tq\mathcal{R}^S/(1 - q)}\right) \\
		&= \Theta\left(\frac{1}{(1 - q)/\mathcal{R}^S + q}\right)
	\end{align*}
\end{proof}

\subsecput{rboptproof}{\RB is Optimal}

In this section we will further examine the performance of \RB and show that it
is asymptotically optimal. We first describe a sufficient condition for
optimality of a strategy based on its recycle rate on $L$, then show that \RB
satisfies this criterion.

\begin{lemma}\label{lem:rl-opt}
	Let $L$ be a set of $O(m)$ bins for which there exists a strategy $T$ such
	that $L,T$-\AE is asymptotically optimal.  Let $\mathcal{R}^{\opt_L}$ be
	the recycle rate of the optimal strategy on the induced ball-recycling game
	of $L$.  For a given strategy $S$, let $\mathcal{R}_L^S$ be the conditional
	recycle rate of $S$ in the stationary distribution given that a ball in $L$
	is selected, and $q$ be the probability that a ball lands in $L^c$, i.e. $q
	= \sum_{k \in L^c} \p_k$. If either
	\[ \E{\mathcal{R}_L^S} = \Omega(\mathcal{R}^{\opt_L})
	\quad\textrm{or}\quad\E{\mathcal{R}_L^S} =
	\Omega\left(\frac{1}{q}\right),\]
	then $S$ is asymptotically optimal.
\end{lemma}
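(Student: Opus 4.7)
The plan is to express both $\mathcal{R}^S$ and an upper bound on $\mathcal{R}^\opt$ in the common form $1/((1-q)/r + q)$ for appropriate $r$ and then compare them in each of the two cases. The key structural observation is that the flow equation, applied separately to $L$ and to $L^c$, cleanly partitions the total recycle rate into its contributions from the two sides.

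First, I apply \Cref{lem:gen-flow-equation} to both $L$ and $L^c$ to obtain $f_L = (1-q)\mathcal{R}^S/\mathcal{R}_L^S$ and $f_{L^c} = q\mathcal{R}^S/\mathcal{R}_{L^c}^S$. Using $f_L + f_{L^c} = 1$ and rearranging yields
\[ \mathcal{R}^S = \frac{1}{(1-q)/\mathcal{R}_L^S \,+\, q/\mathcal{R}_{L^c}^S}. \]
Since $S$ only picks non-empty bins, $\mathcal{R}_{L^c}^S \geq 1$, which gives the lower bound
\[ \mathcal{R}^S \geq \frac{1}{(1-q)/\mathcal{R}_L^S \,+\, q}. \]

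Next, I derive a matching upper bound on $\mathcal{R}^\opt$. Since $L,T$-\AE is asymptotically optimal by hypothesis, \Cref{lem:aggro-empty-bounds} gives $\mathcal{R}^\opt = O(1/((1-q)/\mathcal{R}^T + q))$. Because $T$ is a strategy on the induced game on $L$ we have $\mathcal{R}^T \leq \mathcal{R}^{\opt_L}$, and the map $r \mapsto 1/((1-q)/r + q)$ is increasing in $r$, so this tightens to
\[ \mathcal{R}^\opt = O\!\left(\frac{1}{(1-q)/\mathcal{R}^{\opt_L} \,+\, q}\right). \]

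Finally, I compare the two bounds in the two cases. In Case 1, $\mathcal{R}_L^S = \Omega(\mathcal{R}^{\opt_L})$ gives $(1-q)/\mathcal{R}_L^S = O((1-q)/\mathcal{R}^{\opt_L})$, so the denominator in the lower bound on $\mathcal{R}^S$ is $O$ of the denominator in the upper bound on $\mathcal{R}^\opt$, whence $\mathcal{R}^S = \Omega(\mathcal{R}^\opt)$. In Case 2, $\mathcal{R}_L^S = \Omega(1/q)$ makes $(1-q)/\mathcal{R}_L^S + q = O(q)$, so $\mathcal{R}^S = \Omega(1/q)$; meanwhile the $+q$ term alone already gives $\mathcal{R}^\opt = O(1/q)$, so again $\mathcal{R}^S = \Omega(\mathcal{R}^\opt)$. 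The main obstacle is really just the first step: correctly deriving the closed-form identity for $\mathcal{R}^S$ via the generalized flow equation and verifying that $\mathcal{R}_{L^c}^S \geq 1$ always holds (requiring that $S$ never wastes a move on an empty $L^c$-bin, which we may assume without loss of generality). Once the identity is established, both cases are routine algebra on a shared functional form.
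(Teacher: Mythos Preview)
Your proof is correct and uses the same two ingredients as the paper---the generalized flow equation (\Cref{lem:gen-flow-equation}) and the \AE bound (\Cref{lem:aggro-empty-bounds})---but you package them more cleanly. The paper applies the flow equation only to $L$, solves for $f_L=\mathcal{R}_{L^c}^S/(q\mathcal{R}_L^S+\mathcal{R}_{L^c}^S)$, and then argues separately in each case: in the first it shows $f_L\ge 1/2$ (hence $\mathcal{R}^S=\Omega(\mathcal{R}_L^S)$) and bounds $\mathcal{R}^{\mathrm{AE}}=O(\mathcal{R}^T)=O(\mathcal{R}_L^S)$; in the second it rearranges to get $f_L\mathcal{R}_L^S=\Omega(1/q)$. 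Your version instead applies the flow equation to both $L$ and $L^c$, obtaining the single identity $\mathcal{R}^S=1/((1-q)/\mathcal{R}_L^S+q/\mathcal{R}_{L^c}^S)$, then uses $\mathcal{R}_{L^c}^S\ge 1$ to put $\mathcal{R}^S$ and the \AE upper bound on $\mathcal{R}^\opt$ into the same functional form $r\mapsto 1/((1-q)/r+q)$, after which both cases become routine monotonicity arguments. This is a tidier and more symmetric presentation of the same underlying argument.
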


\begin{proof}
	By applying \Cref{lem:gen-flow-equation}, the subset variant of the
	flow equation, to $L$, 
	\[ f_L \mathcal{R}_L^S = (1-q)(f_L \mathcal{R}_L^S + (1-f_L) \mathcal{R}_{L^c}^S), \]
	where $f_L$ is the stationary probability of $S$ picking a bin in $L$.
	Solving for $f_L$,
	\begin{align} 
		f_L = \frac{\mathcal{R}_{L^c}^S}{q\mathcal{R}_L^S +
          \mathcal{R}_{L^c}^S}. \label{eqn:fL}
	\end{align}

	Suppose $\mathcal{R}_L^S = o\left(\frac{1}{q}\right)$ and $\mathcal{R}_L^S$
	is $\Theta(1)$-optimal on $L$. If $\mathcal{R}_L^S\leq \frac{1}{q}$, then
	$f_L \geq \frac{1}{2}$, and so because $\mathcal{R}^S = f_L\mathcal{R}_L^S
	+ (1-f_L)\mathcal{R}_{L^c}^S$, we must have $\mathcal{R}^S =
	\Omega(\mathcal{R}_L^S)$.

	Now, using \Cref{lem:aggroopt}, let $L$ and $T$ be such that $L,T$-\AE is
	asymptotically optimal, and let $\mathcal{R}^\textrm{AE}$ be its expected
	recycle rate. By \Cref{lem:aggro-empty-bounds},
	\begin{align*} 
		\mathcal{R}^\mathrm{AE} &= \Theta\left( \frac{1}{(1-q)/\mathcal{R}^T + q}\right) \\
		&= O\left(\mathcal{R}^T\right) = O\left(\mathcal{R}_L^S\right) = O\left(\mathcal{R}^S\right),\label{eqn:RAE}
	\end{align*} 
	so $S$ must be asymptotically optimal.

	If $\mathcal{R}_L^S = \Omega\left(\frac{1}{q}\right)$, then
	$\mathcal{R}_L^S > \frac{\alpha}{q}$ for some $\alpha$. Rearranging
	Equation~\eqref{eqn:fL} and multiplying by
	$\mathcal{R}_L^S$ yields
	\[ f_L \mathcal{R}_L^S = \frac{1}{\frac{q}{\mathcal{R}_{L^c}^S} + \frac{1}{\mathcal{R}_L^S}}. \]
	Here $\frac{1}{\mathcal{R}_L^S} \leq \frac{q}{\mathcal{R}_{L^c}^S}$, so
	$f_L\mathcal{R}_L^S = \Omega\left(\frac{1}{q}\right)$, and thus
	$\mathcal{R}^S = \Omega\left(\frac{1}{q}\right)$ as well. Now we can
	compare to $L,T$-\AE as above:
	\[ 	\mathcal{R}^\mathrm{AE}
		= \Theta\left( \frac{1}{(1-q)/\mathcal{R}^T + q}\right)
		= O\left(\frac{1}{q}\right)
		= O\left(\mathcal{R}^S\right)
		\]
	so in this case $S$ is asymptotically optimal as well.
\end{proof}	

We can now prove \Cref{thm:random-opt}.

\begin{proof}[Proof of \Cref{thm:random-opt}]
	If $m=\Omega(n)$, then by \Cref{lem:upperbound,lem:random-ball-lower} we are done.

	Otherwise, let $L$ be a set of $O(m)$ bins for which there exists a
	strategy $T$ such that $L,T$-\AE is asymptotically optimal. We will prove
	the result for a slightly modified \RB that only recycles 1 ball outside of
	$L$ even if more are available; that is, it moves only one of the balls in
	the bin. Since this strategy is worse than \RB, this will be sufficient. We
	number the bins so that the first $|L|$ bins comprise $L$.
	
	If $\mathcal{R}_L \geq \frac{1-q}{q}$, then we are done by
	\Cref{lem:rl-opt}. Otherwise, in the stationary distribution, when a
	bin in $L$ is recycled, the expected number of balls which land in $L^c$ is
	$q\mathcal{R}_L < 1-q$. When a bin in $L^c$ is recycled, the expected
	number of balls which land in $L$ is $1-q$. Thus \RB must pick a bin in $L$
	more than half the time, and so the expected number of balls in $L$ must be
	more than $\frac{m}{2}$.

	Now analogously to the proof of \Cref{lem:random-ball-lower}, we have:
	\begin{align*}
		\p_i\left(\sum_{j=1}^{|L|} \E{\left(\chi_j^\mathrm{RB}\right)^2} + \sum_{j=|L|+1}^n \E{\chi_j^\mathrm{RB}}\right) &= \E{\left(\chi_i^\mathrm{RB}\right)^2} \\
		&\geq \E{\chi_i^\mathrm{RB}}^2.
	\end{align*}

	Thus,
	\[ \E{\chi_i^\mathrm{RB}} \leq \sqrt{\p_i}\left(\sum_{j=1}^{|L|} \E{\left(\chi_j^\mathrm{RB}\right)^2} + \frac{m}{2}\right)^{\frac{1}{2}}.\]
	Summing over $i \leq |L|$ yields
	\[ \E{\chi_L^\mathrm{RB}} \leq \left(\sum_{i=1}^{|L|} \sqrt{\p_i}\right)\left(\sum_{j=1}^{|L|} \E{\left(\chi_j^\mathrm{RB}\right)^2} + \frac{m}{2}\right)^\frac{1}{2}, \]
	where $\chi_L^\mathrm{RB}$ is the expected number of balls in $L$. Now,
	\begin{align*}
		\mathcal{R}_L^\mathrm{RB} &\geq \frac{1}{m}\sum_{j=1}^{|L|} \E{\left(\chi_j^\mathrm{RB}\right)^2} \\
								  &\geq \frac{\E{\chi_L^\mathrm{RB}}^2}{m\left(\sum_{i=1}^{|L|} \sqrt{\p_i}\right)^2} - \frac{1}{2} \\
								  &> \frac{m}{4\halfnorm{\mathbf{p}_L}} - \frac{1}{2},
	\end{align*}
	where $\p_L$ is the conditional probability distribution on $L$ obtained
	from $\mathbf{p}$. The last inequality holds because there are at least
	$\frac{m}{2}$ balls in $L$ in expectation.

	Thus by \Cref{lem:upperbound}, \RB is asymptotically optimal on the induced
	system of $L$, and therefore \RB is asymptotically optimal by
	\Cref{lem:rl-opt}.
\end{proof}



\secput{uniform}{The Uniform Case}

The results of \Cref{sec:nonuniform} hold for any distribution of the balls
into the bins. In this section we consider the special case where they are
uniformly distributed, which models insertion buffers as discussed in
\Cref{sec:motivation}.  We then show that \GG and \FB are optimal, up to
lower-order terms, in this setting, whereas \RB is at least $1/2$- and at most
$(1-\epsilon)$-optimal, for some constant $\epsilon>0$.

For a ball-recycling game with uniformly distributed balls,
\Cref{lem:upperbound} implies:
\begin{corollary}\label{cor:uniformupperbound}
	Consider a ball-recycling game with $m$ balls, $n$ bins and uniform
	distribution $\uni$. For any recycling strategy $A$, 
	\[\mathcal{R}^A \leq \frac{2m+n-1}{n} < 2\frac{m}{n} + 1.\]
\end{corollary}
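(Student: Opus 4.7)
The plan is to derive this as a direct specialization of \Cref{lem:upperbound}, which already gives the general bound $\mathcal{R}^A \leq (2m+n-1)/\halfnorm{\p}$ for arbitrary distributions $\p$. The only work is to evaluate the half quasi-norm on the uniform distribution.

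First I would compute $\halfnorm{\uni}$ explicitly: since $\uni_i = 1/n$ for every $i$, we have
\[
\halfnorm{\uni} = \left(\sum_{i=1}^n \sqrt{1/n}\right)^{\!2} = \left(n \cdot n^{-1/2}\right)^{\!2} = n.
\]
Next I would substitute this value into \Cref{lem:upperbound}, yielding $\mathcal{R}^A \leq (2m+n-1)/n$ for any recycling strategy $A$. Finally, a trivial arithmetic rewriting gives $(2m+n-1)/n = 2m/n + 1 - 1/n < 2m/n + 1$, which establishes the corollary.

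There is no real obstacle here: the work was already carried out in the proof of \Cref{lem:upperbound}, and the only ingredient specific to the uniform case is the identity $\halfnorm{\uni} = n$. The strict inequality at the end comes for free from the fact that $n \geq 1$.
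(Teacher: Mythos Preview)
Your proposal is correct and matches the paper's approach exactly: the paper simply states that \Cref{lem:upperbound} implies the corollary, and your computation $\halfnorm{\uni}=n$ followed by substitution is precisely the intended one-line derivation.
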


The average number of balls in a bin is $m/n$, so \Cref{cor:uniformupperbound}
suggests that any ``reasonable'' strategy will be at least $1/2$-optimal in the
uniform case.

We now show that \GG and \FB are within an additive constant of optimal on strictly
uniform distributions.

\begin{lemma}\label{lem:unilowerbound}
	\GG and \FB each recycle at least $2m/(n+1)$ balls per round in
	expectation.
\end{lemma}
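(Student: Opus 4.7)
The plan is to prove the lower bound $R \geq 2m/(n+1)$ separately for \GG and \FB, both leveraging the symmetry of $\uni$.

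For \GG I would first analyze a non-skipping variant $\text{GG}^\prime$ that deterministically visits bins in strict round-robin order $0,1,\ldots,n{-}1,0,\ldots$, recycling zero balls when the target bin happens to be empty. By the cyclic symmetry of $\uni$ and $\text{GG}^\prime$, in the stationary distribution the expected size of the bin last recycled $j$ rounds ago (for $j=1,\ldots,n$) is $jR/n$, where $R$ is the recycling rate: each of the $j$ subsequent rounds has contributed $R/n$ balls in expectation (including that bin's own self-throw at the start). Summing over all bins gives $m = \sum_{j=1}^{n} jR/n = R(n+1)/2$, hence $R^{\text{GG}^\prime} = 2m/(n+1)$. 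A coupling then transfers this lower bound to the actual (skipping) \GG: starting both processes at the same state with the same ball-throw randomness, skip-\GG recycles exactly the same sequence of non-trivial recyclings as $\text{GG}^\prime$ but in strictly fewer total rounds, since it omits $\text{GG}^\prime$'s zero-ball steps. Therefore $R^{\text{GG}} \geq R^{\text{GG}^\prime} = 2m/(n+1)$.

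For \FB I would start from the stationary invariance of the potential $Z(X)=\sum_i X_i^2$ established in an earlier lemma, which for $\uni$ specializes to $(n+1)E[R^2] = (2m+n-1)E[R]$ in \FB's stationary distribution. The Cauchy--Schwarz bound $E[R^2] \geq E[R]^2$ alone recovers only the matching upper bound $E[R] \leq (2m+n-1)/(n+1)$. For the lower bound my plan is to compare \FB with $\text{GG}^\prime$: because \FB always recycles the fullest bin, in any state its chosen bin is at least as large as the one $\text{GG}^\prime$ would pick; I would try to upgrade this pointwise inequality to a stationary-distribution comparison $E[R^{\text{FB}}] \geq E[R^{\text{GG}^\prime}] = 2m/(n+1)$.

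The main obstacle will be making this last step for \FB rigorous, since the stationary distributions of \FB and $\text{GG}^\prime$ differ and a naive coupling does not preserve pointwise inequalities as the state evolves. I expect to need either a majorization-preserving coupling on the sorted bin-size vectors that maintains a suitable dominance invariant throughout the trajectory, or a more delicate invariance argument that combines $(n+1)E[R^2] = (2m+n-1)E[R]$ with the pointwise bound $R = \max_i X_i \geq m/n$ (the max is at least the average) and a structural estimate such as a variance bound on $R$ that reflects \FB's continual depletion of the fullest bin. A quick check shows the direct invariance route with only $R \geq m/n$ yields only $E[R] \geq m/n$, falling short of the target, so the extra structural input is essential.
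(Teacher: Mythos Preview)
Your argument for \GG is correct and reaches the same bound as the paper; you compute expected bin sizes directly on the non-skipping round-robin variant and then couple to the skipping version, while the paper phrases the same accounting as a waiting-time/queue argument, but these are essentially dual formulations of one idea.

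For \FB you have a genuine gap, and you correctly diagnose it yourself: neither a pointwise comparison with the non-skipping round-robin nor the invariance identity $(n+1)\E{R^2}=(2m+n-1)\E{R}$ alone gets you to $2m/(n+1)$, and a majorization-preserving coupling is not straightforward to set up. The paper's resolution sidesteps coupling and second-moment control entirely by running the very same waiting-time argument used for \GG, but with the queue ordered by \emph{fullness} rather than cyclically. After the fullest bin is emptied (so it moves to the back of this queue), each re-thrown ball lands at a uniformly random queue position, hence at average position $(n-1)/2$. The one extra observation needed is that when the bins are then re-sorted by fullness, ``more balls are moved up the queue than down'': fuller bins advance toward the front, and they carry more balls with them (a rearrangement-inequality-type fact). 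Consequently a thrown ball waits on average at most $(n-1)/2$ rounds before being recycled, and the inequality $m-\E{S}\le \E{S}(n-1)/2$ you already have for \GG holds verbatim for \FB. That single monotonicity step---re-sorting by fullness can only shorten average waiting times---is the missing ingredient, and it replaces both of the heavier tools you were contemplating.
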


\begin{proof}
	Let $S$ be the random variable denoting the number of balls thrown in a
	given round with \GG. \GG will recycle the bins in order starting from the
	next one and cycling around. Therefore, we can consider the collection of
	bins to be a queue. After throwing the balls, the average place in the
	queue in which a ball lands is the $\left[(n-1)/s\right]$th bin, due to
	uniformity. Each ball thrown will therefore sit for an average of at most
	$(n-1)/2$ rounds before it is thrown again. Therefore,
	$m-\E{S}\leq\E{S}(n-1)/2$, and we have the result after solving for
	$\E{S}$.

	Let $T$ be the random variable denoting the number of balls thrown in a
	given round with \FB. If after removing the balls in the \FB, we list the
	bins in order of fullness, we can again think of the bins as a sort of
	queue.  When we throw the balls, the average place in the queue which a
	ball lands is the $\left[(n-1)/2\right]$th bin as above, due to uniformity.
	Now, we reorder the bins back into fullness order.  During the reordering
	more balls are moved up the queue than down, thus each ball thrown into the
	system will sit for an average of less than $(n-1)/2$ rounds before it is
	thrown again. Therefore, as above, $m-\E{S}\leq\E{S}(n-1)/2$, and we are
	done.
\end{proof}

\Cref{cor:uniformupperbound} and \Cref{lem:unilowerbound}
together prove \Cref{thm:fullestbin}. Despite these strong
performance bounds, recall that \FB can perform arbitrarily badly on
non-uniform $\p$. \RB on the other hand is always $\Theta(1)$-optimal.

\subsection{\RB in the Uniform Case}

However, \RB{} does not achieve this level of optimality on uniform
distributions. In this section we will show in \Cref{thm:rbuniform} that \RB
recycles at most $1+(2-\epsilon)m/n$ balls per round in expectation, for some
$\epsilon > 0$.  The upper bound is given in \Cref{lem:rbuni-upper} and
\Cref{cor:rbuni-upper-constants}, and the lower bound is given in
\Cref{lem:rbuni-lower}.

We begin with the following lemma:

\begin{lemma}\label{lem:pair-flow}
	Let $\chi^\textrm{RB}$ be the stationary distribution relative to \RB,
	$R^\textrm{RB}(X)$ the random variable of how many balls \RB recycles from
	ball configuration $X$, and $\mathcal{R}^\textrm{RB} =
	\E{R^\textrm{RB}\left(\chi^\textrm{RB}\right)}$ the expected recycle rate
	of \RB. Then,
	\[ \frac{\E{R^\textrm{RB}\left(\chi^\textrm{RB}\right)^2}}{\mathcal{R}^\textrm{RB}} = \frac{2m+n-1}{n+1} \leq 1 + \frac{2m}{n}. \]
\end{lemma}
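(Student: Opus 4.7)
The plan is to reuse the invariance argument from the proof of \Cref{lem:freqbound}: in the stationary distribution $\chi^\textrm{RB}$, the statistic $Z(X) = \sum_j X_j^2/\p_j$ is preserved in expectation under one step of \RB. In the uniform case $\p_j = 1/n$ for every $j$, which has two very convenient consequences. First, $Z(X) = n \sum_j X_j^2$ is literally a scaled second moment. Second, the per-step formula
\[ \E{Z(AX) \mid \text{pick }\ell,\, R^A(X)} = Z(X) - \bigl(1 + 1/\p_\ell\bigr)R^A(X)^2 + (2m+n-1)R^A(X) \]
proved earlier has coefficient $1 + 1/\p_\ell = n+1$ which no longer depends on the chosen bin $\ell$. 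So the identity becomes
\[ \E{Z(AX) \mid R^A(X)} = Z(X) - (n+1)R^A(X)^2 + (2m+n-1)R^A(X) \]
regardless of which bin the strategy (in this case \RB) selects.

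Next I would take total expectation over $X \sim \chi^\textrm{RB}$ together with the random choice \RB makes. Stationarity gives $\E{Z(A\chi^\textrm{RB})} = \E{Z(\chi^\textrm{RB})}$, so the $Z$ terms cancel and I am left with
\[ (n+1)\E{R^\textrm{RB}(\chi^\textrm{RB})^2} = (2m+n-1)\,\mathcal{R}^\textrm{RB}. \]
Dividing through by $(n+1)\,\mathcal{R}^\textrm{RB}$ (which is positive by \Cref{lem:fb-gg-rb-ergodic}) yields the claimed equality
\[ \frac{\E{R^\textrm{RB}(\chi^\textrm{RB})^2}}{\mathcal{R}^\textrm{RB}} = \frac{2m+n-1}{n+1}. \]

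Finally I would verify the numerical inequality $\tfrac{2m+n-1}{n+1} \le 1 + \tfrac{2m}{n}$ by clearing denominators: this reduces to $n(2m+n-1) \le (n+1)(2m+n)$, i.e.\ $2mn + n^2 - n \le 2mn + n^2 + 2m + n$, i.e.\ $0 \le 2m + 2n$, which is trivial.

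There is essentially no hard step here once one decides to reuse the $Z$-invariance machinery; the substantive observation is that uniformity of $\p$ makes the quadratic coefficient in the $Z$-evolution independent of the bin chosen, so the identity survives being averaged against the random bin choice of \RB. The only thing to be slightly careful about is that $R^\textrm{RB}(X)$ is itself a random variable (it depends on which bin \RB draws), so I would make sure to take the expectation in two stages — first conditional on the bin choice, then over the choice and the state — before invoking stationarity.
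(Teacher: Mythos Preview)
Your argument is correct. The paper takes a closely related but differently packaged route: rather than reusing the $Z$-invariance lemma, it tracks the number of unordered pairs of balls sharing a bin. One step of \RB{} destroys $\binom{R}{2}$ such pairs and creates $\sum_{k=0}^{R-1}(m-R+k)/n$ in expectation; equating these in stationarity and rearranging yields exactly $(n+1)\E{R^2}=(2m+n-1)\mathcal{R}$. Since in the uniform case the pair count equals $\tfrac{1}{2}\sum_j X_j^2 - \tfrac{m}{2} = \tfrac{1}{2n}Z(X) - \tfrac{m}{2}$, the two invariants are affine images of one another, and the two computations are the same in disguise. Your version is more economical because it recycles machinery already proved for \Cref{lem:freqbound}; the paper's version is self-contained and makes the combinatorics of pair creation and destruction explicit. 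One tiny quibble: the positivity of $\mathcal{R}^\textrm{RB}$ does not really come from \Cref{lem:fb-gg-rb-ergodic} but from the trivial fact that \RB{} always recycles at least one ball.
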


\begin{proof}
	Consider the random variable of the number of distinct unordered pairs of
	balls which are in the same bin in $\chi^\textrm{RB}$. In expectation, a
	round of \RB eliminates 
	\[\binom{\mathcal{R}^\textrm{RB}}{2}\]
	and creates
	\[\sum_{k=0}^{\mathcal{R} - 1} \frac{m - \mathcal{R}^\textrm{RB} + k}{n}\]
	such pairs. In the stationary distribution, these must be equal, so
	\begin{multline*}
		\frac{\E{R^\textrm{RB}\left(\chi^\textrm{RB}\right)^2}}{2} - \frac{\mathcal{R}^\textrm{RB}}{2} \\
		= \frac{(2m-1)\mathcal{R}^\textrm{RB}}{2n} - \frac{\E{R^\textrm{RB}\left(\chi^\textrm{RB}\right)^2}}{2n}.
	\end{multline*}
	After rearranging we have the result.
\end{proof}

\begin{lemma}\label{lem:rbuni-upper}
	There exists a constant $\alpha > 0$ such that \RB is at most
	$(1-\alpha)$-optimal.
\end{lemma}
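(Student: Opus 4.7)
The plan is to improve on the Jensen-type bound that follows directly from \Cref{lem:pair-flow}. That lemma gives the identity
\[
\frac{\E{R^2}}{\mathcal{R}^\textrm{RB}} = \frac{2m+n-1}{n+1},
\]
and Jensen's inequality $\E{R^2} \ge (\mathcal{R}^\textrm{RB})^2$ immediately yields $\mathcal{R}^\textrm{RB} \le (2m+n-1)/(n+1)$. Since this only matches the upper bound from \Cref{cor:uniformupperbound} asymptotically, it does not by itself separate \RB from OPT. My plan is therefore to establish a quantitative variance lower bound $\mathrm{Var}(R) \ge \delta\,(\mathcal{R}^\textrm{RB})^2$ for some absolute constant $\delta > 0$ (valid once $m$ is sufficiently large). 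Combined with the pair-flow identity this strengthens the bound to $\mathcal{R}^\textrm{RB} \le (2m+n-1)/((n+1)(1+\delta))$; dividing by $\mathcal{R}^\opt \ge 2m/(n+1)$ from \Cref{lem:unilowerbound} then gives $\mathcal{R}^\textrm{RB}/\mathcal{R}^\opt \to 1/(1+\delta)$ as $m \to \infty$, so the ratio is eventually at most $1-\alpha$ for any $\alpha < \delta/(1+\delta)$.

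The main obstacle is producing the variance lower bound. The key idea is to exploit \RB's rethrow dynamics: when a bin of size $a$ is recycled, its new size is binomially distributed with mean $a/n$, a factor of $n$ smaller than $a$. Such a recently-recycled bin regrows gradually as balls from subsequent recyclings happen to land in it, while older bins have had time to regain sizes near or above the mean. Because \RB picks size-biased, $R$ is therefore a mixture over contributions from bins at very different stages of regrowth, so its distribution should be spread across a range of width $\Theta(\mathcal{R}^\textrm{RB})$ rather than concentrated at its mean.

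To formalize this, I would track the time $\tau$ since the last recycling of the bin picked at the current step. The typical inter-recycle gap of a single bin is $\Theta(n)$, so a constant-probability stationary event is that $\tau$ is small (a bounded number of rounds); on this event the bin has barely regrown past its post-recycle size of $\mathrm{Bin}(\cdot,1/n)$, and $R$ is at most a small multiple of $\mathcal{R}^\textrm{RB}/n$. The contribution of this event to $\E{(R-\mathcal{R}^\textrm{RB})^2}$ is therefore $\Omega((\mathcal{R}^\textrm{RB})^2)$, yielding the desired positive $\delta$. The specific numerical constant $3/1000$ in \Cref{thm:rbuniform} then arises from a concrete but non-tight choice of $\delta$ in this argument. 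The principal difficulty is controlling the dependencies between bins introduced by the Markov dynamics, which is what forces the final constant to be small rather than sharp.
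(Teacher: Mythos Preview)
Your high-level plan---combine the pair-flow identity of \Cref{lem:pair-flow} with a variance lower bound $\Var{R}\ge\delta(\mathcal{R}^\textrm{RB})^2$---is exactly what the paper's Case~1 does. The difficulty is that the specific mechanism you propose for the variance bound does not work, and the paper in fact needs a second, entirely different argument (Case~2) precisely because the variance bound cannot be established unconditionally.

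Concretely, your claim that ``$\tau$ is a bounded number of rounds'' is a constant-probability event is false. By the flow equation with $\p=\uni$, every bin is selected with long-run frequency $1/n$, and a bin recycled within the last $C$ rounds has current size $O(C\mathcal{R}^\textrm{RB}/n)$; since \RB{} picks proportionally to size, the probability of re-picking one of these $C$ bins is $O\!\left(C^2\mathcal{R}^\textrm{RB}/(nm)\right)=O(C^2/n^2)$. For bounded $C$ this is $O(1/n^2)$, not a constant, so the contribution to $\Var{R}$ is only $O\!\left((\mathcal{R}^\textrm{RB})^2/n^2\right)$. If instead you take $C=\epsilon n$ to make the probability constant, the picked bin has regrown to size $\Theta(\epsilon\,\mathcal{R}^\textrm{RB})$, not $\Theta(\mathcal{R}^\textrm{RB}/n)$ as you assert; one might still hope to squeeze out a variance bound, but now the heuristic ``size $\approx$ age $\times\,\mathcal{R}^\textrm{RB}/n$'' must be made rigorous against the adversarial possibility that the stationary measure concentrates nearly all mass on bins of size close to $\mathcal{R}^\textrm{RB}$.

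That adversarial possibility is exactly what the paper isolates: it defines \emph{light} bins (those with at most $cm/n$ balls) and splits on whether the expected mass $\E{L}$ in light bins is at least $\delta m$. If so, your variance argument goes through (paper's Case~1). If not---and nothing you wrote rules this out---the variance of $R$ could genuinely be $o\!\left((\mathcal{R}^\textrm{RB})^2\right)$, and the paper instead tracks the expected flow of balls between light and heavy bins together with a second pair-counting potential restricted to light bins, deriving a contradiction from the stationary balance of $\E{L}$. Your proposal has no analogue of this step, so as written it does not close.
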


\begin{proof}
	Let $\chi^\textrm{RB}$ be the stationary distribution relative to \RB,
	$R^\textrm{RB}(X)$ the random variable of how many balls \RB recycles from
	ball configuration $X$, and $\mathcal{R}^\textrm{RB} =
	\E{R^\textrm{RB}\left(\chi^\textrm{RB}\right)}$ the expected recycle rate
	of \RB.  We will prove the result by contradiction, so assume that for all 
	constant $\epsilon > 0$
	\[\mathcal{R}^\textrm{RB} \geq 1 + \frac{(2-\epsilon)m}{n}.\]

	Let $c\in(1,2)$ be a constant to be determined later. We say a bin is
	\defn{light} if it contains at most $cm/n$ balls. Let $L$ be the random
	variable of the number of balls in light bins in the stationary
	distribution. Then the probability $q_L$ that \RB recycles a light bin in
	the stationary distribution is $\E{L}/m$. We proceed by cases.

{\bf Case 1.} Suppose $\E{L} \geq \delta m$ for some constant $\delta > 0$. Then $q_L \geq \delta$ and for $c \leq 2 - 2\epsilon$ and $\epsilon < 1/2$,
\begin{align*}
	\Var{R^\textrm{RB}\left(\chi^\textrm{RB}\right)} 
	&= \E{\left(R^\textrm{RB}\left(\chi^\textrm{RB}\right) - \mathcal{R}^\textrm{RB}\right)^2}  \\
	&\ge q_L \left(1+\frac{(2-\epsilon)m}{n}-\frac{cm}{n}\right)^2 \\
	&\ge \frac{\epsilon^2\delta}{4} \left(\frac{4m^2}{n^2}+\frac{4m}{n}+1\right) \\
	&\ge \frac{\epsilon^2 \delta}{4} \left(\mathcal{R}^\textrm{RB}\right)^2. \label{eqn:varFl}
\end{align*}

Thus by the definition of variance, we have
\begin{equation*}
	\E{R^\textrm{RB}\left(\chi^\textrm{RB}\right)^2} \ge \left(1+\frac{\epsilon^2\delta}{4}\right)\left(\mathcal{R}^\textrm{RB}\right)^2. \label{eqn:expF2l}
\end{equation*}
	Now by \Cref{lem:pair-flow},
\begin{equation*}
	\mathcal{R}^\textrm{RB} \le \left(1+\frac{\epsilon^2\delta}{4}\right)^{-1}\left(1+\frac{2m}{n}\right).
\end{equation*}
Since $\epsilon, \delta$ are constants greater than $0$, we have our contradiction for
	the first case. 

\medskip 

{\bf Case 2.} Otherwise, $\E{L} < \delta m$. Since $L \in [0, m]$, $\E{L^2} <
	\delta m^2$. \Cref{lem:pair-flow} implies $\E{R^\textrm{RB}\left(\chi^\textrm{RB}\right)^2} \le (1+2m/n)^2$. Together H\"{o}lder's inequality we have
\begin{align}
	\E{LR^\textrm{RB}\left(\chi^\textrm{RB}\right)}
	&\le \left(\E{L^2}\E{R^\textrm{RB}\left(\chi^\textrm{RB}\right)^2}\right)^{1/2} \nonumber \\
	&< \left(\delta m^2 \left(1+\frac{2m}{n}\right)^2\right)^{1/2} \nonumber\\
	&= \sqrt{\delta} m\left(1 + \frac{2m}{n}\right) \label{eqn:expLFu}
\end{align}

	Let $Y$ be the random variable of the number of balls in the stationary
	distribution which start in a light bins, but end up begin among the first
	$1 + cm/n$ balls in a heavy bin after an application of \RB.  Let $\Phi$
	be the random variable of the number of distinct unordered pairs of balls
	that are in the same light bin in the stationary distribution. Applying \RB
	in expectation creates at most
	\begin{multline*}
          \E{\sum_{k=0}^{\mathcal{R}^\textrm{RB}-1}\frac{L + k}{n}} \\
		  = \E{\frac{2LR^\textrm{RB}\left(\chi^\textrm{RB}\right) + R^\textrm{RB}\left(\chi^\textrm{RB}\right)^2 - R^\textrm{RB}\left(\chi^\textrm{RB}\right)}{2n}}
	\end{multline*}
	such pairs, and eliminates at least
	\[ \E{\frac{Y}{1+\frac{cm}{n}}\binom{1+\frac{cm}{n}}{2}}.\]
	In the stationary distribution these quantities must be equal, so
	rearranging together with Equation~\eqref{eqn:expLFu}, we have
	
\begin{align*}
	\E{Y} 
	&\le \frac{2\E{LR^\textrm{RB}\left(\chi^\textrm{RB}\right)} + \E{R^\textrm{RB}\left(\chi^\textrm{RB}\right)^2} - \mathcal{R}^\textrm{RB}}{cm} \nonumber \\
	&< \frac{2\sqrt{\delta}}{c}\left(1+\frac{2m}{n}\right) + \frac{\E{R^\textrm{RB}\left(\chi^\textrm{RB}\right)^2} - \mathcal{R}^\textrm{RB}}{cm} \\
	&<\left(1+\frac{2m}{n}\right)\left(\frac{2\sqrt{\delta}}{c}+\frac{2}{cn}\right),  
\end{align*}
where we have used \Cref{lem:pair-flow} for the last inequality.

We now compute the effect on $\E{L}$ of applying \RB to the stationary
distribution.  By Markov's inequality, there must be more than $(1-1/c)n$ light
bins, and so the probability that a ball is thrown into a light bin is more
than $1-1/c$. Therefore, at least $(1-1/c)\mathcal{R}^\textrm{RB}$ balls land
in light bins in expectation. We expect at most $\E{Y}$ balls to be in light
bins which turn into heavy bins. Finally, we recycle at most $cm/n$ balls from
a light bin $\E{L}/m$ of the time. Since the net change to $L$ must be $0$ in
expectation,
\[\left(1-\frac{1}{c}\right)\mathcal{R}^\textrm{RB} < \frac{c}{n}\E{L} + \E{Y}.\]

However, this is a contradiction. Indeed, the LHS is at least 
$$
\left(1-\frac{1}{c}\right)\left(1+\frac{(2-\epsilon)m}{n}\right),
$$
but the RHS is less than
$$
\delta c \frac{m}{n} + \left(1+\frac{2m}{n}\right)\left(\frac{2\sqrt{\delta}}{c}+\frac{2}{cn}\right).
$$
Thus, if we pick a sufficiently small $\delta > 0$, $\epsilon = 0.01$, $c=1.98$
and $n\geq 3$, we have a contradiction. For $n \le 2$, the contradiction follows
immediately from \Cref{lem:pair-flow}.
\end{proof}

\begin{corollary}\label{cor:rbuni-upper-constants}
	Setting
	\begin{equation*}
		(\epsilon, c, \delta) = (0.001, 1.456, 0.042)
	\end{equation*}
	in the proof of \Cref{thm:rbuniform}, we obtain 
	\begin{equation*}
		\mathcal{R}^\textrm{RB} < 1 + 1.994\frac{m}{n}.
	\end{equation*}
\end{corollary}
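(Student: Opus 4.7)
The plan is to re-execute the proof of \Cref{lem:rbuni-upper} with the specific numerical choice $(\epsilon,c,\delta)=(0.001,1.456,0.042)$ substituted throughout, and to read off a sharper quantitative bound from the resulting chain of inequalities than the lemma's qualitative conclusion supplies. The lemma proves $(1-\alpha)$-optimality by contradiction via a two-case split on whether the expected mass $\E{L}$ in light bins is at least $\delta m$; my aim is to solve the Case~2 inequality explicitly for $\mathcal{R}^\textrm{RB}$ rather than merely observing that it is inconsistent with the hypothesis, and thereby extract the constant $1.994$.

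First I would check that the Case~1 side condition $c\le 2-2\epsilon$ holds for the chosen constants (here $1.456\le 1.998$), so that the variance lower bound of the lemma and the resulting estimate $\mathcal{R}^\textrm{RB}\le (1+2m/n)/(1+\epsilon^2\delta/4)$ remain valid. Then, under the hypothesis $\mathcal{R}^\textrm{RB}\ge 1+(2-\epsilon)m/n$ with $\epsilon=0.001$, Case~1 yields a contradiction for sufficiently large $m$, eliminating that branch.

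Next, for Case~2 ($\E{L}<\delta m$), I would carry through the same sequence of estimates --- H\"older's inequality on $\E{LR^\textrm{RB}(\chi^\textrm{RB})}$, the pair-counting identity for $\E{Y}$ in terms of the flow between light and heavy bins, and the stationarity constraint that the expected net change to $L$ is zero --- but this time I would solve the final inequality for the leading coefficient of $m/n$ in $\mathcal{R}^\textrm{RB}$. That coefficient is essentially $(\delta c+4\sqrt{\delta}/c)/(1-1/c)$, and a direct evaluation at $(c,\delta)=(1.456,0.042)$ falls just below $1.994$ with only a small numerical margin. Combined with the Case~1 elimination, this yields the claimed bound for sufficiently large $m$.

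The main obstacle will be bookkeeping: the margin between the targeted $1.994$ and the exact coefficient that emerges from Case~2 is of order $10^{-3}$, and every use of the loose substitution $(2m+n-1)/(n+1)\le 1+2m/n$ from \Cref{lem:pair-flow} risks eating into that margin. I would keep $(2m+n-1)/(n+1)$ exact inside the H\"older step and the subsequent pair-flow bound for $\E{Y}$, passing to $1+2m/n$ only at the very end, and I would carry the $n\ge 3$ side condition from the lemma through unchanged. With this care, the numerical verification of both cases becomes routine and delivers the stated inequality.
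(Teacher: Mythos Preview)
Your treatment of Case~1 does not work. With $\epsilon=0.001$ and $\delta=0.042$ the factor $\epsilon^2\delta/4$ is of order $10^{-8}$, so the estimate $\mathcal{R}^\textrm{RB}\le(1+2m/n)/(1+\epsilon^2\delta/4)$ is essentially $1+2m/n$; it does \emph{not} contradict the hypothesis $\mathcal{R}^\textrm{RB}\ge 1+1.999\,m/n$ for any $m$. (Indeed, that displayed bound in the lemma's proof comes from the crude replacement $2-\epsilon-c\ge\epsilon$, which throws away almost all of the gap when $\epsilon$ is tiny.) And even if Case~1 did contradict $\mathcal{R}^\textrm{RB}\ge 1+1.999\,m/n$, your overall logic would only conclude $\mathcal{R}^\textrm{RB}<1+1.999\,m/n$, not $1+1.994\,m/n$: you extract $1.994$ only in the Case~2 branch, so the case split does not deliver the corollary.

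The fix is to do in Case~1 exactly what you already do in Case~2: solve the inequality for $\mathcal{R}^\textrm{RB}$ directly instead of invoking a contradiction. Keep the variance bound in the form $\Var{R}\ge\delta\bigl(\mathcal{R}^\textrm{RB}-c\,m/n\bigr)^2$, combine with the pair-flow identity $\E{R^2}=\mathcal{R}^\textrm{RB}(2m+n-1)/(n+1)$, and you obtain a quadratic $(1+\delta)r^2-\bigl(1+2(1+\delta c)M\bigr)r+\delta c^2M^2\le0$ with $r=\mathcal{R}^\textrm{RB}$ and $M=m/n$. Its larger root has leading coefficient $\bigl((1+\delta c)+\sqrt{(1+\delta c)^2-(1+\delta)\delta c^2}\bigr)/(1+\delta)\approx 1.994$ at $(c,\delta)=(1.456,0.042)$, matching your Case~2 coefficient $(\delta c+4\sqrt{\delta}/c)/(1-1/c)\approx 1.993$; the constants are chosen precisely so that both branches give the same leading term. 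The parameter $\epsilon=0.001$ is there only to keep the side conditions of the lemma valid, not to drive the numerics.
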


\begin{lemma}\label{lem:rbuni-lower} For all $c > 0$, there exists a
  $c'$ such that if $m \ge c' n\log n$, the uniform random ball policy
  has expected recycle rate at least   
$$
 	\left(1+\frac{1}{6^4} - c\right)\frac{m}{n}.
$$
\end{lemma}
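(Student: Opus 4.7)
Plan: The strategy combines \Cref{lem:pair-flow} with a Cauchy--Schwarz argument on the stationary marginal of a single bin and a fourth-moment bound coming from concentration. By symmetry of $\uni$, set $\mu = m/n$, $Y = X_1/\mu$, and $f = \E{Y^2}$, so that $\mathcal{R}^\textrm{RB} = f\mu$ and the lemma is equivalent to $f \geq 1 + 1/6^4 - c$. In these coordinates \Cref{lem:pair-flow} reads
\[
\E{Y^3} = \frac{2m+n-1}{(n+1)\mu}\,f = \left(2 - \frac{2}{n+1} - O(n/m)\right)f,
\]
so that $\mathrm{Cov}(Y, Y^2) = \E{Y^3} - \E{Y}\E{Y^2} = (1 - o(1))f$ under the hypothesis $m \geq c'n\log n$ for $c'$ large.

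Applying Cauchy--Schwarz to $\mathrm{Cov}(Y, Y^2)^2 \leq \Var{Y}\Var{Y^2}$ gives $(1 - o(1))^2 f^2 \leq (f - 1)(\E{Y^4} - f^2)$. Suppose we can establish a fourth-moment bound of the form $\E{Y^4} \leq K^2 \E{Y^2}$. Substituting and dividing by $f$ yields the quadratic
\[
f^2 - \bigl(K^2 + 1 - (1 - o(1))^2\bigr)f + K^2 \leq 0,
\]
whose lower root is $2K/(K + \sqrt{K^2 - 4}) + o(1) = 1 + 1/K^2 + O(1/K^4) + o(1)$. Choosing $K = 6^2 = 36$ makes the lower root equal to $1 + 1/6^4 + o(1)$, and taking $c'$ sufficiently large absorbs the $o(1)$ into the slack $c$. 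So the proof reduces to establishing $\E{Y^4} \leq 36^2\,\E{Y^2}$.

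For $n \leq 36$ this is immediate from the deterministic bound $Y \leq m/\mu = n \leq 36$, so that $\E{Y^4} \leq n^2 \E{Y^2} \leq 36^2\,\E{Y^2}$. For $n > 36$, we need concentration of the upper tail of $X_1$: the hypothesis $\mu \geq c'\log n$ makes the per-step binomial rescattering $B(R, 1/n)$ sharply concentrated around $R/n$ by a Chernoff bound, and a Markov-chain tail argument on bin $1$'s age (the time since its last recycling) yields $\Prob{X_1 \geq 36\mu}$ decaying super-polynomially in $n$, so the tail contribution to $\E{X_1^4}$ is absorbed into the $o(1)$ slack.

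The main obstacle is this stationary tail bound on $X_1$ for $n > 36$. The naive Markov estimate using $\E{X_1^2} \leq 2\mu^2$ from \Cref{thm:fullestbin} gives only polynomial decay, insufficient for controlling the fourth moment. The $c'n\log n$ hypothesis is exactly what lets the Chernoff bounds on the $\Omega(\log n)$-ball rescattering steps bite, and the linear-in-$X_1$ scaling of bin $1$'s recycling probability provides the self-correcting drift that forces the age (and hence the size) of any bin to concentrate in stationary. The appearance of the specific constant $6^4$ reflects the fact that $K = 6^2$ is the sharpest choice for which the Cauchy--Schwarz quadratic is forced to have lower root above $1$ by exactly $1/6^4$.
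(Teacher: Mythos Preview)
Your moment-based approach via \Cref{lem:pair-flow} and Cauchy--Schwarz is genuinely different from the paper's argument, and the reduction you describe is correct as far as it goes: if one grants $\E{Y^4}\le 36^2\,\E{Y^2}$ in stationarity (up to $o(1)$), the quadratic inequality indeed forces $f\ge 1+1/6^4-o(1)$. But the step you flag as ``the main obstacle'' is not merely an obstacle---it is the entire content of the lemma, and you have not proved it. The vague appeal to ``self-correcting drift'' and a ``Markov-chain tail argument on bin~1's age'' is not a proof: the per-step increment to $X_1$ is $B(R_t,1/n)$ with $R_t$ itself random and correlated with the whole configuration, and turning the hypothesis $\mu\ge c'\log n$ into a stationary super-polynomial tail bound on $X_1$ requires a genuine Lyapunov or coupling argument that you have not supplied. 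Your last sentence is also circular: you chose $K=36$ to make the answer come out to $1/6^4$, not because any concentration argument actually delivers that specific constant; if your tail bound were true with some $C<36$ you would get a strictly better result, and if only with some $C>36$ you would get a worse one.

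For comparison, the paper does not bound $\E{Y^4}$ at all. It works directly with $\Var{X_t}$ via a dichotomy: either $\Pr[X_t\le(1-\epsilon)\mu]\ge\delta$, which immediately gives $\mathcal{R}^{\mathrm{RB}}\ge(1+\epsilon^2\delta)\mu$, or almost all bins are ``heavy''. In the latter case the paper runs the chain for $n/2$ steps, uses a Chernoff bound (this is where $m\ge c'n\log n$ enters) to show that each of the $\Theta(n)$ heavy bins that are \emph{not} recycled in that window receives close to its fair share of the $\Theta(m)$ recycled balls, and concludes that too many balls pile up in too few bins---contradicting the heavy-bin hypothesis in stationarity. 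The constant $1/6^4$ then falls out of optimizing $\epsilon^2\delta$ subject to $\epsilon+3\delta<1/4$, giving $\epsilon=1/6$, $\delta\to 1/36$. So the paper's proof is a direct combinatorial contradiction over a window of $n/2$ steps, whereas your plan defers the same difficulty to an unproven stationary moment bound.
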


\begin{proof}
	Let $X_{t, k}$ be the random variable denoting the number of balls in the
	$k$th bin at the beginning of the $t$th round. Because of symmetry,
	$X_{t, k}$ follows the same distribution as $X_{t, \ell}$ for any $k \ne
	\ell$.  For simplicity, we let $X_t$ be a random variable that follows the
	same distribution as $X_{t, k}$ for all $k$. 

	We pick $t$ to be sufficiently large so that the system enters its
	stationary state after $t$ rounds. Thus, $X_t$ and $X_{t'}$ follows the
	same distribution for any $t' > t$.

	Let $Y_t$ be the random variable denoting the number of balls recycled in
	the $t$th round. By definition, we have
	$$ \E{Y_t} = \sum_{1 \le k \le n}\frac{\E{X_{t, k}^2}}{m} =
	\frac{n}{m} \left( \E{X_t}^2 + \Var{X_t} \right).  $$
	Note that $\E{X_t} = m/n$, and so $\E{Y_t} \ge m/n$. 

	To show $\E{Y_t}$ deviates from $m/n$, we derive a lower bound for
	$\Var{X_t}$.

	If $\Prob{X_t \le (1-\epsilon)m/n} \ge \delta$, then $\E{Y_t} \ge
	(1+\epsilon^2\delta)m/n$.

	Otherwise $\Prob{X_t > (1-\epsilon)m/n } > 1-\delta$. We will show that if
	$\delta$ is small enough, then this case does not exist. 
		
	We say a bin is {\em heavy} if it has more than $(1-\epsilon)m/n$ balls.
	Let $Z_t$ be the random variable denoting the number of heavy bins at the
	beginning of the $t$th round. We have
	\begin{equation*}
		\E{Z_t} = \sum_{1 \le k \le n} \E{\ind{X_{t, k} >
		(1-\epsilon)\frac{m}{n}}} > (1-\delta) n.
	\end{equation*}
	$Z_t$ is a non-negative variable in $[0, n]$ and has expected value more
	than $(1-\delta)n$. By Markov's inequality,
	\begin{equation*}
		\Pr[Z_t \le (1-2\delta)n] < \frac{1}{2} \mbox{ and } \Pr[Z_t >
		(1-2\delta)n] > \frac{1}{2}.
	\end{equation*}

	We compute $\E{Z_{t+n/2}}$ from the $Z_t$. If $Z_t > (1-2\delta)n$
	for some constant $\delta < 1/4$, the following hold during $P$, the time
	period between the $t$th round and the $(t+n/2)$th round:

	\begin{enumerate}
		\item At least $(1/2-2\delta)n$ bins in $H_t$ are recycled, 
		\item At least $(1/2-2\delta)(1-\epsilon)m$ balls are recycled,
		\item At least $(1/2-2\delta)n$ bins in $H_t$ are not recycled,
	\end{enumerate}
	where $H_t$ denotes the set of heavy bins at the beginning of the $t$th
	round.

	Given (c), we can find a subset $S_t \subset H_t$ that is composed of
	$(1/2-2\delta)n$ bins in $H_t$ not recycled during $P$.  Note that which
	bins are recycled and which are not depend on the random choices made by the
	system. Hence, $S_t$ varies.  
		
	Next, we derive a lower bound on the expected number of balls in any $S_t$.
	The balls which stay in $S_t$ come from two different sources. There are
	those that stay in $S_t$ at the beginning of the $t$th round, of which
	there are at least $|S_t|(1-\epsilon)m/n$.  There are also those which are
	recycled during $P$, of which there are at least
	$|S_t|(1-\epsilon')|B|/n$ by \Cref{lem:MinBin}, to follow.  Combining the
	two sources, the expected number of balls in $S_t$ is at least
	\begin{equation*}
		\Gamma = (1-\epsilon)\left( \left(\frac{1}{2}-2\delta\right) +
		\left(\frac{1}{2}-2\delta\right)^2(1-\epsilon')\right)m.
	\end{equation*}

	\begin{lemma}\label{lem:MinBin}
	Let $B$ be the multiset of the first $(1/2-2\delta)(1-\epsilon)m$ balls
	recycled during $P$. $B$ is well-defined thanks to 2.\ above. Let $L_i$ be
	the random variable denoting the number of balls in $B$ that land on the
	$i$th bin. For all $\epsilon' > 0$, there exists a $c'$ such
        that if $m \ge c' n \log n$
		\begin{equation*}
			\E{\min\{L_1, L_2, \ldots, L_n\}} \ge (1-\epsilon')|B|/n.
		\end{equation*}
	\end{lemma}
	\begin{proof}
		For $i \in [1,n]$, $\E{L_i} = |B|/n$.
		By Chernoff bounds, 
		$$
			\Pr[|L_i-\E{L_i}| \ge (\epsilon'/2) \E{L_i}] \le \frac{1}{n^2} 
		$$
		for some sufficiently large $c'$. Consequently, by the union bound,
		$$
			\Pr[\min\{L_1, L_2, \ldots, L_n\} \le (1-\epsilon'/2)|B|/n] \le \frac{1}{n}.
		$$
		Because the $L_i$'s are non-negative, we are done. 
	\end{proof}

	Given $\Gamma$, we obtain the following bound:
	\begin{multline*}
		\E{Z_{t+n/2} \given Z_t > (1-2\delta)n} \le |S_t| + \frac{m-\Gamma}{(1-\epsilon)m/n} \\
		\begin{aligned} &= \left(\frac{1}{1-\epsilon} - \left(\frac{1}{2}-\delta\right)^2(1-\epsilon')\right) n \\
		&\approx \left(\frac{3}{4}+\epsilon+\delta\right) n \end{aligned}
	\end{multline*}
	Together with the trivial bound $\E{Z_{t+n/2} \given Z_t \le (1-2\delta)n}
	\le n$,  $\E{Z_{t+n/2}}$ equals
	\begin{multline*}
		\Prob{Z_t \le (1-2\delta)}\E{Z_{t+n/2} \given Z_t \le (1-2\delta)} \\
		+ \Prob{Z_t > (1-2\delta)}\E{Z_{t+n/2} \given Z_t > (1-2\delta)} \\
		\begin{aligned} &< \frac{1}{2} \left( 1 + \left(\frac{3}{4}+\epsilon+\delta\right) \right) n\\
		&\approx \left(\frac{7}{8} + \frac{\epsilon+\delta}{2}\right) n \end{aligned}
	\end{multline*}
		
	This leads to a contradiction if $\epsilon+\delta$ is small enough.  This
	is because we have $\E{Z_t} > (1-\delta)n$ and $\E{Z_{t+n/2}} = \E{Z_t}$,
	because the system is stationary.  As a result, we have a contradiction if
	$\frac{\epsilon+3\delta}{2} < \frac{1}{8}$.

	Combining the results for the two cases, we wish to maximize
	$1+\epsilon^2\delta$ subject to $\epsilon+3\delta < \frac{1}{4}.$ Picking
	$\epsilon = 1/6$ yields the result.
\end{proof}

\Cref{thm:rbuniform} follows from
\Cref{cor:rbuni-upper-constants,lem:rbuni-lower,cor:uniformupperbound}.



\section{Database Experiments}\label{sec:experiments}
In this section, we consider insertion buffers as they are used in practice. We
demonstrate through simulations as well as experiments on real-world systems,
that the theoretical results in the prior sections hold and can be used to
improve performance.

\subsection{Insertion Buffers in Database Systems}
Many databases cache recently inserted items in RAM so that they can write
items to disk in batches.  Examples include
Azure~\cite{Azure16}, 
DB2~\cite{IBM17},  
Hbase~\cite{Xiang12}, 
Informix~\cite{Informix}, 
InnoDB~\cite{Callaghan11},  
NuDB~\cite{NuDB16},  
Oracle~\cite{Oracle17},  
SAP~\cite{SAP17},  and 
Vertica~\cite{Vertica17}.
They are also used to accelerate inserts in several research prototypes, such
as the buffered Bloom
filter~\cite{CanimLaMi10} and buffered quotient
filter~\cite{BenderFaJo12}.  
By batching updates to disk, these insertion buffers reduce the amortized
number of I/Os per insert, which can substantially improve insertion
throughput.  Facebook claims that the insertion buffer in InnoDB speeds up some
production workloads by a factor of $5$ to $16$, and accelerates some synthetic
benchmarks by up to a factor of $80$~\cite{Callaghan11}.

A motivating factor for the use of insertion buffers is that they can
significantly mitigate the precipitous performance drop that databases can
experience when the data set grows too large to fit in RAM.
\Cref{fig:introbufferinitial} shows the time per 1,000 insertions into a MySQL
database using the InnoDB backend, with and without InnoDB's insertion buffer
enabled.  For the first $200,000$ insertions, the entire database fits in RAM,
and so insertions are fast, even without the insertion buffer.

Once the database grows larger than RAM, insertion performance without the
insertion buffer falls off a cliff.  In fact, once the database reaches 1M
rows, it can perform only about 200 insertions per second, suggesting that the
throughput is limited by the random-I/O performance of the underlying disk.  In
the benchmark with the insertion buffer enabled, on the other hand, performance
degrades by only a small amount.

Based on the performance of the first 1M insertions, it appears that InnoDB's
insertion buffer effectively eliminates the performance cliff that can occur
when the database grows larger than RAM.  This improvement explains the
popularity of insertion buffers in database design.

However, in our experiment, as the database continues to grow, the efficacy of
the insertion buffer declines.  \Cref{fig:introbuffer} shows the time per
10,000 insertions as the database grows to 50M rows.  Although the performance
without the insertion buffer drops more quickly early on, it remains relatively
stable thereafter.  Performance with the insertion buffer, on the other hand,
slowly declines over the course of the benchmark until it is only about a third
faster than without the insertion buffer.  This is well below the $5-80\times$
speedups reported above.


\begin{figure*}
{\centering
\ref{intro_legend}
\subfloat[First million insertions]{
\tikzsetnextfilename{intro-buffer-initial}
\begin{tikzpicture}
	\begin{axis}[
		width=0.5\textwidth,
		xlabel=Rows inserted (in millions),
		ylabel=Seconds per thousand insertions,
		ylabel near ticks,
		xmin=0,
		xmax=1,
		ymin=0,
		mark repeat={2},
		legend to name=intro_legend,
		legend columns=2,
		]
		\addplot [mark=triangle*, YellowOrange, mark size=2] table [x expr=\thisrowno{0}/1000000, y expr=\thisrowno{1} / 10000, col sep=comma] {./data/intro/buffer_off/output.txt};
		\addlegendentry{Insertion buffer disabled}
		\addplot [mark=pentagon*, RoyalPurple, mark size=2] table [x expr=\thisrowno{0}/1000000, y expr=\thisrowno{1} / 10000, col sep=comma] {./data/intro/buffer_on/output.txt};
		\addlegendentry{Insertion buffer enabled}
	\end{axis}
\end{tikzpicture}
\label{fig:introbufferinitial}
}
\subfloat[All fifty million insertions] {
\tikzsetnextfilename{intro-buffer}
	\begin{tikzpicture}
	\begin{axis}[
		width=0.5\textwidth,
		xlabel=Rows inserted (in millions),
		ylabel=Seconds per thousand insertions,
		ylabel near ticks,
		xmin=0,
		xmax=50,
		ymin=0,
		mark repeat={5},
		]
		\addplot [mark=triangle*, YellowOrange, mark size=2, each nth point={10}] table [x expr=\thisrowno{0}/1000000, y expr=\thisrowno{1} / 10000, col sep=comma] {./data/intro/buffer_off/output.txt};
		\addplot [mark=pentagon*, RoyalPurple, mark size=2, each nth point={10}] table [x expr=\thisrowno{0}/1000000, y expr=\thisrowno{1} / 10000, col sep=comma] {./data/intro/buffer_on/output.txt};
	\end{axis}
	\end{tikzpicture}
\label{fig:introbuffer}
}
	\caption{The cost of inserting batches of rows into an empty
          table in \innodb{} with and without the insertion
          buffer. The rows are inserted in batches of 10,000 to avoid
          slowdown in parsing, and the keys are distributed
          uniformly. After 1M insertions, the buffered version takes
          $12.3\%$ as long as the unbuffered version (measured over
          50000 insertions); after 50M insertions, the advantage is
          reduced so that the buffered version takes $68.3\%$ of
          the time of the unbuffered. (Lower is better)}
}
\end{figure*}
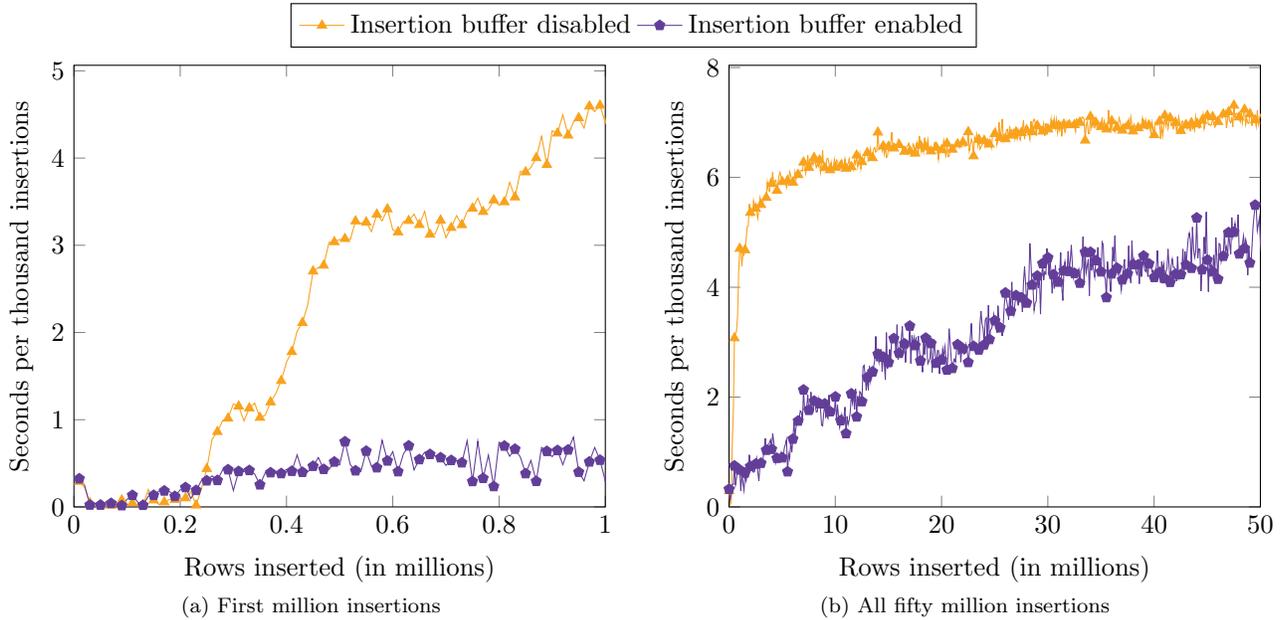


As these experiments show, it can be difficult to extrapolate from small
examples the performance gains that insertion buffers can provide for large
databases. Therefore, it is no wonder that reported speedups from insertion
buffers vary wildly from as little as $2\times$ to as high as
$80\times$~\cite{Callaghan11}.  Some have even suggested that insertion buffers
may provide many of the benefits of write-optimization~\cite{Callaghan10},
i.e., that insertion buffers can bring the performance of \btrees{} up to that
of LSM-trees~\cite{ONeilChGa96}, COLAs~\cite{BenderFaFi07}, Fractal
Trees~\cite{Tokutek14}, xDicts~\cite{BrodalDeFi10}, or
\betrees{}~\cite{BrodalFa03}.

\subsection{Experimental Validation}
Here we validate our theoretical study of insertion buffers by showing that our
analysis above can have a material impact on the performance of databases with
insertions buffers.  We simulated workloads of random insertions to a \btree{},
with varying distributions on the inserted keys.  We found that, as predicted,
the performance was independent of the input distribution and closely matched
the performance predicted by our theorems.

We then ran workloads of random insertions into InnoDB and measured the average
batch size of flushes from its insertion buffer.  InnoDB implements a variant
of the random-item flushing strategy.  We modified it to implement the
golden-gate flushing strategy.  Despite the additional complexities of InnoDB's
insertion buffer implementation, we found that performance closely tracked our
theoretical predictions and was independent of the distribution of inserted
keys.  We also found that the golden-gate flushing strategy improved InnoDB's
flushing rate by about 30\% over the course of our benchmark.

Our analysis explains why insertion buffers can provide dramatic speedups for
small databases, but only small gains are available as the database grows.  Our
results also provide useful guidance to implementers about which flushing
strategy will provide the most performance improvement.

Our results also show that insertion buffers cannot deliver the same asymptotic
performance improvements that are possible with write-optimized data
structures, such as LSM-trees and B$^\epsilon$-trees.

\subsecput{exp-deployed}{Insertion-Buffer Background}

This section describes insertion buffers are actually implemented and used in
deployed systems and recent research prototypes.

\paragraph{SAP:} The SAP IQ database supports an in-memory row-level versioning
(RLV) store, and insertions are performed to the RLV store and later merged
into the main on-disk store~\cite{SAP17}.  

\paragraph{NuDB:}  The NuDB SSD-based key-value store buffers all insertions in
memory, and later flushes it to SSD~\cite{NuDB16}.  Flushes occur at least once
per second, or more often if insertion activity causes the in-memory buffer to
fill.

\paragraph{Buffered Bloom and quotient filters:}  Bloom filters are known to
have poor locality for both inserts and lookups.  The buffered Bloom
filter~\cite{CanimLaMi10} improves the performance of insertions to a Bloom
filter on SSD by buffering the updates in RAM.  The on-disk Bloom filter is
divided into pages, and each page has a buffer of updates in RAM.  When a
page's buffer fills, the buffered changes are written to the page.  

The buffered quotient filter stores newly inserted items in an in-memory
quotient filter~\cite{BenderFaJo12,benderadaptivebloom}.  When the in-memory
quotient filter fills, its entire contents are flushed to the on-disk quotient
filter.

\paragraph{InnoDB:}  The InnoDB~\cite{Oracle17a} \btree{} implementation used
in the MySQL~\cite{Oracle17b} and MariaDB~\cite{Foundation17} relational
database systems includes an insertion buffer.

Our experiments in this paper focus on InnoDB as an archetypal and open-source
implementation of an insertion buffer, so we describe it in detail.

InnoDB structures its insertion buffer as a \btree{}.  When the insertion
buffer becomes full, it selects the items to be flushed by performing a random
walk from the root to a leaf.  The random walk is performed by selecting, at
each step, uniformly randomly from among the children of the current node.
Once it gets to a leaf, it it picks a single item to insert into the on-disk
\btree{}.  This item, along with any other items in the insertion buffer that
belong in that leaf, are inserted into the leaf and removed from the insertion
buffer.

InnoDB's insertion buffer is complicated in several ways.  First, the size of
the insertion buffer changes over time, as InnoDB allocates more or less space
to other buffers and caches.

InnoDB also has a leaf cache.  Whenever a leaf is brought into cache for any
reason, all inserts to that leaf that are currently in the insertion buffer are
immediately applied to the leaf, and any future inserts to that leaf also skip
the insertion buffer as long as the leaf remains in cache.

Finally, it performs some flushing when the buffer is not full. Roughly every
second, InnoDB performs a small amount of background flushing. Moreover, it
prematurely flushes its buffer to a leaf when it calculates that such a flush
will cause the leaf to split. We hypothesize that this feature exists to
simplify the transactional system.

\subsecput{exp-insertion}{Leaf Probabilities in \btrees{}}
In \Cref{sec:uniform}, we established that, on insertion, the leaf
probabilities are nearly uniform.  We empirically verify this uniformity
property by simulating insertions into the leaves of a \btree{}. We insert
real-valued keys i.i.d.\ according to uniform, Pareto (real-valued Zipfian) and
normal distributions; the leaves of the \btree{} split when they are full, and
we measure the ratio of the maximal weight leaf to $1/n$.
\Cref{lem:uniform-leaves} tells us that this ratio should be asymptotically at
most constant, but as \Cref{fig:simalpha} shows, our experimental analysis
shows further that this constant is generally less than 2. Because leaves
generally split in 2, this makes some intuitive sense.


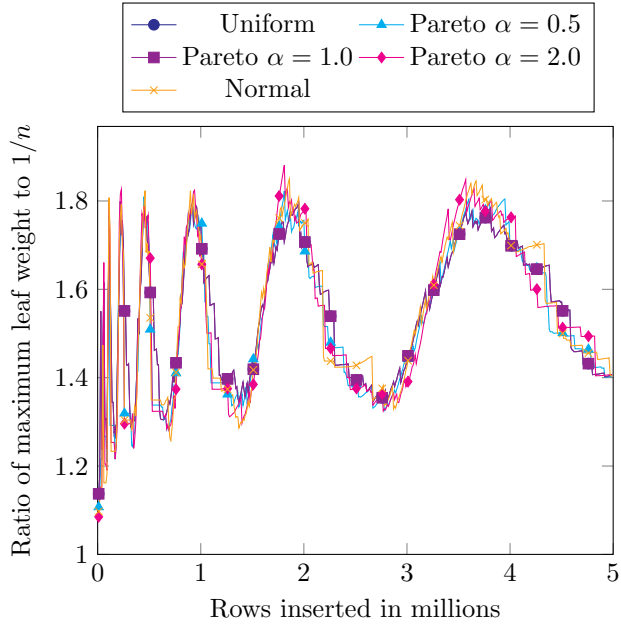
\begin{figure}[h!]
{\centering
\hspace*{33pt}\ref{full_distro_legend}\\
\tikzsetnextfilename{insertion-sim-max}
	\begin{tikzpicture}
		\begin{axis}[
			xlabel=Rows inserted in millions,
			xlabel near ticks,
			ylabel=Ratio of maximum leaf weight to $1/n$,
			ylabel near ticks,
			xmin=0,
			xmax=5,
			ymin=1,
			mark repeat={25},
			legend to name=full_distro_legend,
			legend columns=2,
			]
			\addplot [mark=*, Blue, mark size=2] table [x=insertions, y=max_alpha, col sep=comma] {./data/simulations/noop_uni.csv};
			\addlegendentry{Uniform}
			\addplot [mark=triangle*, Cyan, mark size=2] table [x=insertions, y=max_alpha, col sep=comma] {./data/simulations/noop_par5.csv};
			\addlegendentry{Pareto $\alpha=0.5$}
			\addplot [mark=square*, Plum, mark size=2] table [x=insertions, y=max_alpha, col sep=comma] {./data/simulations/noop_par1.csv};
			\addlegendentry{Pareto $\alpha=1.0$}
			\addplot [mark=diamond*, Magenta, mark size=2] table [x=insertions, y=max_alpha, col sep=comma] {./data/simulations/noop_par2.csv};
			\addlegendentry{Pareto $\alpha=2.0$}
			\addplot [mark=x, YellowOrange, mark size=2] table [x=insertions, y=max_alpha, col sep=comma] {./data/simulations/noop_norm.csv};
			\addlegendentry{Normal}
		\end{axis}
	\end{tikzpicture}
\caption{Deviation of the maximum weight leaf from uniform in simulation}
\label{fig:simalpha}
}
\end{figure}

\begin{figure}
{\centering
\hspace*{35pt}\ref{full_distro_legend}\\
\subfloat[Maximum leaf weight relative to uniform]{
\tikzsetnextfilename{insertion-inno-max}
\begin{tikzpicture}
\begin{axis}[
    width=0.47\textwidth,
	xlabel=Rows inserted in millions,
	xlabel near ticks,
	ylabel near ticks,
	ylabel=Ratio of maximum leaf weight to $1/n$,
	xmin=0,
	xmax=5,
	ymin=1,
	mark repeat={25},
	legend columns=2,
	legend to name=distro_legend
	]
	\addplot [mark=*, Blue, mark size=2] table [x=insertions, y=alpha, col sep=comma] {./data/innodb/randomball_uniform_1.0.csv};
	\addlegendentry{Uniform}
	\addplot [mark=triangle*, Cyan, mark size=2] table [x=insertions, y=alpha, col sep=comma] {./data/innodb/randomball_pareto_0.5.csv};
	\addlegendentry{Pareto $\alpha=0.5$}
	\addplot [mark=square*, Plum, mark size=2] table [x=insertions, y=alpha, col sep=comma] {./data/innodb/randomball_pareto_1.0.csv};
	\addlegendentry{Pareto $\alpha=1.0$}
	\addplot [mark=diamond*, Magenta, mark size=2] table [x=insertions, y=alpha, col sep=comma] {./data/innodb/randomball_normal_1000.0.csv};
	\addlegendentry{Pareto $\alpha=2.0$}
	\addplot [mark=x, YellowOrange, mark size=2] table [x=insertions, y=alpha, col sep=comma] {./data/innodb/randomball_normal_1000.0.csv};
	\addlegendentry{Normal}
\end{axis}
\end{tikzpicture}
\label{fig:alpha}
}\hfill
\subfloat[95th percentile leaf weight relative to uniform]{
\tikzsetnextfilename{insertion-inno-95}
\begin{tikzpicture}
\begin{axis}[
	width=0.47\textwidth,
	xlabel=Rows inserted in millions,
	xlabel near ticks,
	ylabel near ticks,
	ylabel=Ratio of 95th percentile leaf weight to $1/n$,
	xmin=0,
	xmax=5,
	ymin=1,
	mark repeat={25}
	]
	\addplot [mark=*, Blue, mark size=2] table [x=insertions, y=perc, col sep=comma] {./data/innodb/randomball_uniform_1.0.csv};
	\addplot [mark=triangle*, Cyan, mark size=2] table [x=insertions, y=perc, col sep=comma] {./data/innodb/randomball_pareto_0.5.csv};
	\addplot [mark=square*, Plum, mark size=2] table [x=insertions, y=perc, col sep=comma] {./data/innodb/randomball_pareto_1.0.csv};
	\addplot [mark=diamond*, Magenta, mark size=2] table [x=insertions, y=perc, col sep=comma] {./data/innodb/randomball_pareto_2.0.csv};
	\addplot [mark=x, YellowOrange, mark size=2] table [x=insertions, y=perc, col sep=comma] {./data/innodb/randomball_normal_1000.0.csv};
\end{axis}
\end{tikzpicture}
\label{fig:perc}
}
	\caption{Deviation of the maximum and 95th percentile weight leaves from uniform as observed in \innodb{}.}\label{fig:deviation}
}
\end{figure}
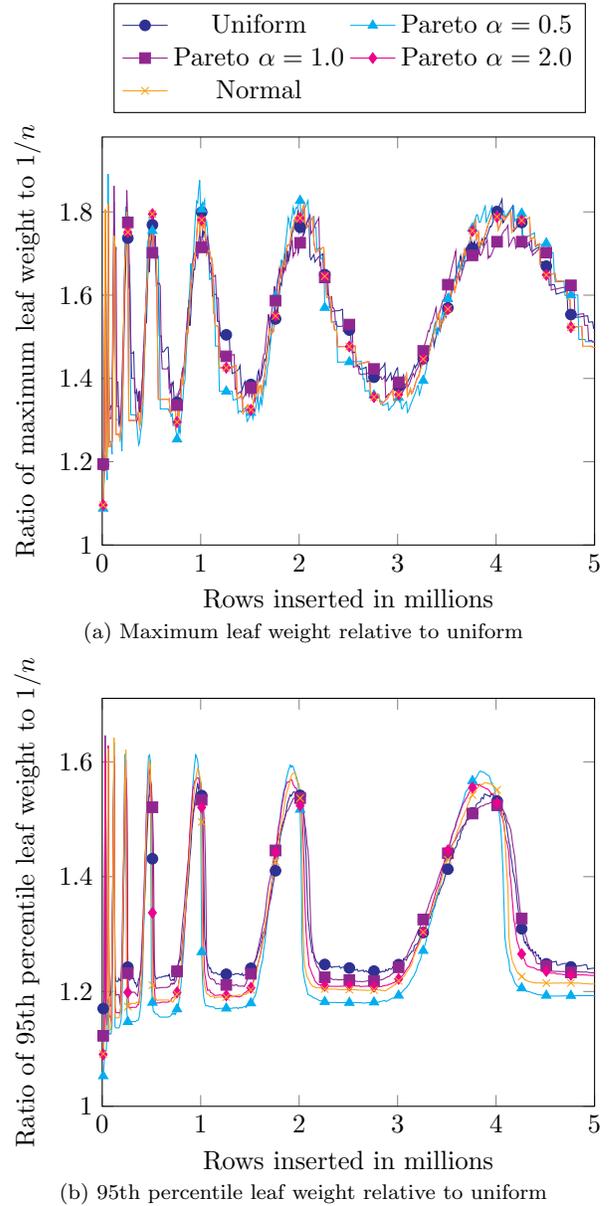


We also verify the these results using the \innodb{} storage engine. We insert
5 million rows into a database using uniform, Pareto and normal distributions
on the keys. the results are summarized in \cref{fig:alpha}. The maximum ratio
does not exceed 2.3, and the 95th percentile ratio does exceed 1.6. Thus the
distribution of the keys to the leaves is in fact almost uniform.

\subsecput{exp-synthetic}{Simulating Insertion Buffers}

The ball-and-bins models described above are based on a static leaf structure.
However, in practice inserting into a database causes the leaf structure (the
number and probability distribution of bins in the model) to change. However,
we can still perform the same strategies, and by simulating an insertion buffer
in front of a database, we can compare their efficiency as well as verify that
much of the static analysis empirically applies to the dynamic system.

We insert real-valued keys into the simulation according to one of several
distributions of varying skewness: uniform on $[0,1000]$, Pareto with parameter
$\alpha=\{0.5,1.0,2.0\}$, and uniform centered at 0, with standard deviation
$1000$. We have a buffer which stores 2,500 keys; when it fills we choose a
leaf according to the chosen strategy and flush all the buffered keys destined
to it. Initially we have one leaf, and the leaves split when they exceed 160
keys, as uniformly as possible. 

As shown in \Cref{fig:synth}, the key distribution doesn't affect the recycle
rate of the insertion buffer, and as the number of leaves gets larger, the
recycle rate decreases. Generally fullest bin does better than golden gate, and
golden gate does better than random ball. Demonstrated with the normal
distribution (all distributions perform very similarly), \Cref{fig:synth-ratio}
shows that golden gate initially outperforms random ball by about 30\%, which
then decreases as the number of bins grows.


\begin{figure*}[p]
{\centering
	\ref{synth-legend}\\
	\subfloat[Uniform key distribution]{
		\tikzsetnextfilename{synthetic-uniform}
		\begin{tikzpicture}
		\begin{axis}[
				width=0.48\textwidth,
				height=0.31\textheight,
				xlabel=Rows inserted in millions,
				xlabel near ticks,
				ylabel near ticks,
				ylabel style={align=center, text width=0.27\textheight},
				ylabel=Recycle rate (last {50,000} insertions),
				xmin=0.05,
				xmax=5,
				ymin=0,
				ymax=50,
				mark repeat={50},
				restrict y to domain=0:50,
				legend to name=synth-legend,
				legend columns = 3,
				]
				\addplot [mark=*, blue, mark size=2] table [x=insertions, y expr=50000.0/\thisrow{flushes}, col sep=comma] {./data/simulations/gg_uni.csv};
				\addlegendentry{Golden Gate}
				\addplot [mark=square*, red, mark size=2] table [x=insertions, y expr=50000.0/\thisrow{flushes}, col sep=comma] {./data/simulations/rb_uni.csv};
				\addlegendentry{Random Ball}
				\addplot [mark=triangle*, green, mark size=2] table [x=insertions, y expr=50000.0/\thisrow{flushes}, col sep=comma] {./data/simulations/fb_uni.csv};
				\addlegendentry{Fullest Bin}
			\end{axis}
		\end{tikzpicture}
		\label{fig:synth-uni}
	}\hfill
	\subfloat[Pareto-0.5 key distribution]{
		\tikzsetnextfilename{synthetic-pareto-5}
		\begin{tikzpicture}
			\begin{axis}[
				width=0.48\textwidth,
				height=0.31\textheight,
				xlabel=Rows inserted in millions,
				xlabel near ticks,
				ylabel near ticks,
				ylabel style={align=center, text width=0.27\textheight},
				ylabel=Recycle rate (last {50,000} insertions),
				xmin=0.05,
				xmax=5,
				ymin=0,
				ymax=50,
				mark repeat={50},
				restrict y to domain=0:50,
				]
				\addplot [mark=*, blue, mark size=2] table [x=insertions, y expr=50000.0/\thisrow{flushes}, col sep=comma] {./data/simulations/gg_par5.csv};
				\addplot [mark=square*, red, mark size=2] table [x=insertions, y expr=50000.0/\thisrow{flushes}, col sep=comma] {./data/simulations/rb_par5.csv};
				\addplot [mark=triangle*, green, mark size=2] table [x=insertions, y expr=50000.0/\thisrow{flushes}, col sep=comma] {./data/simulations/fb_par5.csv};
			\end{axis}
		\end{tikzpicture}
		\label{fig:synth-par5}
	}\\
	\subfloat[Pareto-1 key distribution]{
		\tikzsetnextfilename{synthetic-pareto-1}
		\begin{tikzpicture}
			\begin{axis}[
				width=0.48\textwidth,
				height=0.31\textheight,
				xlabel=Rows inserted in millions,
				xlabel near ticks,
				ylabel near ticks,
				ylabel style={align=center, text width=0.27\textheight},
				ylabel=Recycle rate (last {50,000} insertions),
				xmin=0.05,
				xmax=5,
				ymin=0,
				ymax=50,
				mark repeat={50},
				restrict y to domain=0:50,
				]
				\addplot [mark=*, blue, mark size=2] table [x=insertions, y expr=50000.0/\thisrow{flushes}, col sep=comma] {./data/simulations/gg_par1.csv};
				\addplot [mark=square*, red, mark size=2] table [x=insertions, y expr=50000.0/\thisrow{flushes}, col sep=comma] {./data/simulations/rb_par1.csv};
				\addplot [mark=triangle*, green, mark size=2] table [x=insertions, y expr=50000.0/\thisrow{flushes}, col sep=comma] {./data/simulations/fb_par1.csv};
			\end{axis}
		\end{tikzpicture}
	}\hfill
	\subfloat[Pareto-2 key distribution]{
		\tikzsetnextfilename{synthetic-pareto-2}
		\begin{tikzpicture}
			\begin{axis}[
				width=0.48\textwidth,
				height=0.31\textheight,
				xlabel=Rows inserted in millions,
				xlabel near ticks,
				ylabel near ticks,
				ylabel style={align=center, text width=0.27\textheight},
				ylabel=Recycle rate (last {50,000} insertions),
				xmin=0.05,
				xmax=5,
				ymin=0,
				ymax=50,
				mark repeat={50},
				restrict y to domain=0:50,
				]
				\addplot [mark=*, blue, mark size=2] table [x=insertions, y expr=50000.0/\thisrow{flushes}, col sep=comma] {./data/simulations/gg_par2.csv};
				\addplot [mark=square*, red, mark size=2] table [x=insertions, y expr=50000.0/\thisrow{flushes}, col sep=comma] {./data/simulations/rb_par2.csv};
				\addplot [mark=triangle*, green, mark size=2] table [x=insertions, y expr=50000.0/\thisrow{flushes}, col sep=comma] {./data/simulations/fb_par2.csv};
			\end{axis}
		\end{tikzpicture}
		\label{fig:synth-par2}
	}\\
	\vspace*{-0.015\textheight}
	\subfloat[Normal key distribution]{
		\tikzsetnextfilename{synthetic-normal}
		\begin{tikzpicture}
			\begin{axis}[
				width=0.48\textwidth,
				height=0.31\textheight,
				xlabel=Rows inserted in millions,
				xlabel near ticks,
				ylabel near ticks,
				ylabel style={align=center, text width=0.27\textheight},
				ylabel=Recycle rate (last {50,000} insertions),
				xmin=0.05,
				xmax=5,
				ymin=0,
				ymax=50,
				mark repeat={50},
				restrict y to domain=0:50,
				]
				\addplot [mark=*, blue, mark size=2] table [x=insertions, y expr=50000.0/\thisrow{flushes}, col sep=comma] {./data/simulations/gg_norm.csv};
				\addplot [mark=square*, red, mark size=2] table [x=insertions, y expr=50000.0/\thisrow{flushes}, col sep=comma] {./data/simulations/rb_norm.csv};
				\addplot [mark=triangle*, green, mark size=2] table [x=insertions, y expr=50000.0/\thisrow{flushes}, col sep=comma] {./data/simulations/fb_norm.csv};
			\end{axis}
		\end{tikzpicture}
		\label{fig:synth-norm}
	}\hfill
	\subfloat[Ratio of golden gate to random ball (normally distributed keys)]{
		\tikzsetnextfilename{synthetic-norm-ratio}
		\pgfplotstableread[col sep = comma]{./data/simulations/gg_norm.csv}\goldengate
		\pgfplotstableread[col sep = comma]{./data/simulations/rb_norm.csv}\randomball
		\pgfplotstablecreatecol[copy column from table={\randomball}{flushes}] {rb_flushes} {\goldengate}
		\begin{tikzpicture}
			\begin{axis}[
				width=0.48\textwidth,
				height=0.31\textheight,
				xlabel=Rows inserted in millions,
				xlabel near ticks,
				ylabel near ticks,
				ylabel style={align=center, text width=0.30\textheight},
				ylabel=Recycle rate ratio (last {5,000} insertions),
				xmin=0.05,
				xmax=5,
				ymin=0.8,
				ymax=2,
				mark repeat={50},
				]
				\addplot [mark=*, Plum, no marks] table [x=insertions, y expr=\thisrow{rb_flushes}/\thisrow{flushes}] \goldengate;
			\end{axis}
		\end{tikzpicture}
		\label{fig:synth-ratio}
	}
	\caption{\label{fig:synth}Simulated results with various key distributions and recycling strategies. (Higher is better)}
}
\end{figure*}
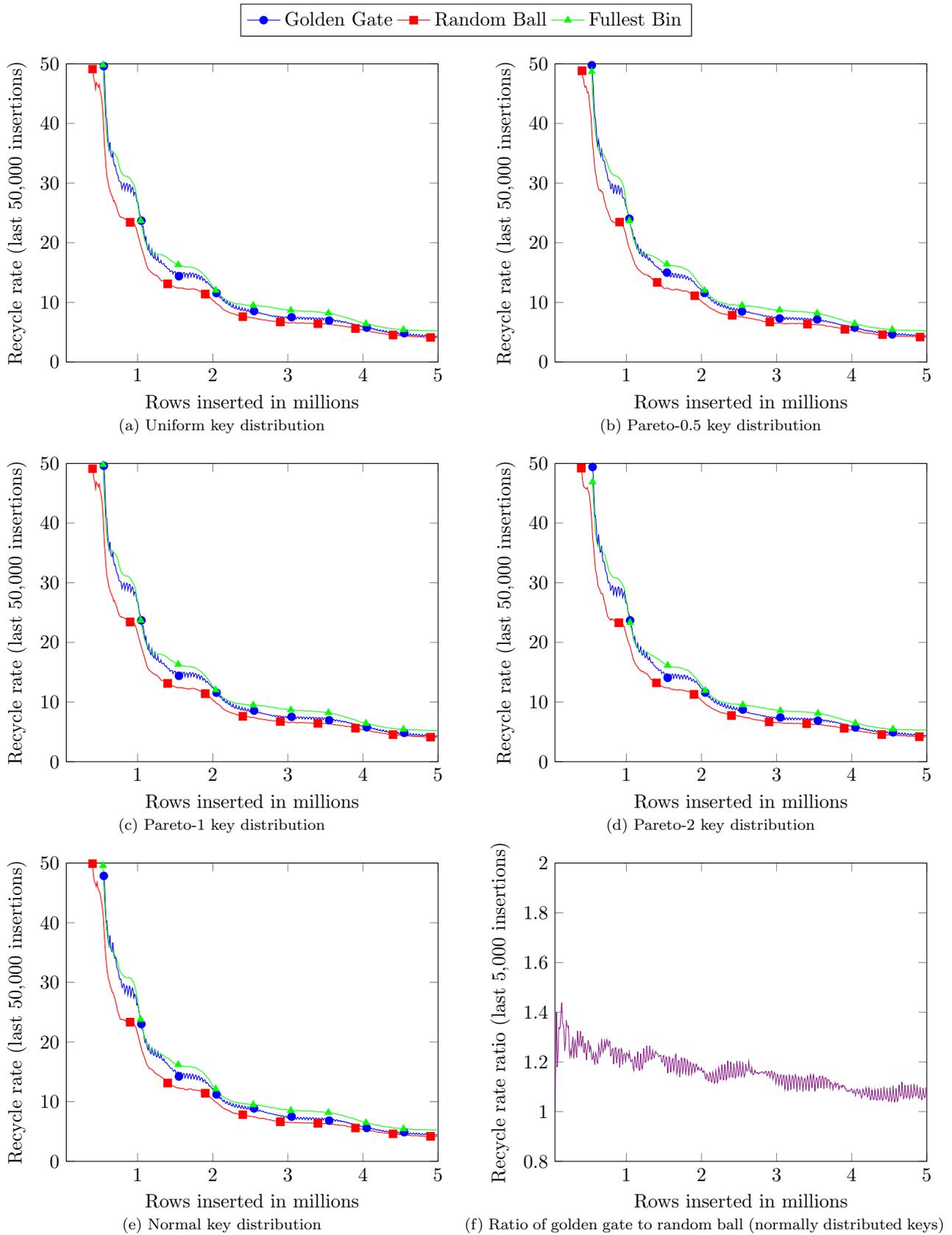


\subsecput{exp-innodb}{Real-World Performance (InnoDB)}

In this section, we empirically the performance of insertion buffers in
\innodb{}, the default storage engine in \mysql{}.

Analogously to the experiments in \secref{exp-synthetic}, we insert rows into
the \mysql{} database, and after every 10000 insertions, we check the ``merge
ratio'' reported by \innodb{}.  This is the number of rows merged into the
database from the buffer during each buffer flush, and corresponds to the
recycling rate in the balls and bins model. We also check the reported memory
allocated to the buffer, which allows us to control for memory usage. 


\begin{figure*}
{\centering
~\ref{innodb-legend}~\\
\subfloat[Uniform key distribution]{
\tikzsetnextfilename{innodb-uniform}
\begin{tikzpicture}
\begin{axis}[
	width=0.49\textwidth,
	height=0.31\textheight,
	xlabel=Rows inserted in millions,
	xlabel near ticks,
	ylabel near ticks,
	ylabel style={align=center, text width=0.27\textheight},
	ylabel=Recycle rate (last {10,000} insertions),
	xmin=0,
	xmax=5,
	ymin=0,
	ymax=50,
	mark repeat={50},
	legend to name=innodb-legend,
	legend columns=3,
	]
	\addplot [mark=*, blue, mark size=2] table [x=insertions, y=observed, col sep=comma] {./data/innodb/goldengate_uniform_1.0.csv};
	\addlegendentry{Golden Gate}
	\addplot [mark=square*, red, mark size=2] table [x=insertions, y=observed, col sep=comma] {./data/innodb/randomball_uniform_1.0.csv};
	\addlegendentry{Random Ball (default)}
\end{axis}
\end{tikzpicture}
\label{fig:innodbuni}
}\hfill
\subfloat[Pareto $\alpha=0.5$ key distribution]{
\tikzsetnextfilename{innodb-pareto5}
\begin{tikzpicture}
\begin{axis}[
	width=0.49\textwidth,
	height=0.31\textheight,
	xlabel=Rows inserted in millions,
	xlabel near ticks,
	ylabel near ticks,
	ylabel style={align=center, text width=0.27\textheight},
	ylabel=Recycle rate (last {10,000} insertions),
	xmin=0,
	xmax=5,
	ymin=0,
	ymax=50,
	mark repeat={50},
	]
	\addplot [mark=*, blue, mark size=2] table [x=insertions, y=observed, col sep=comma] {./data/innodb/goldengate_pareto_0.5.csv};
	\addplot [mark=square*, red, mark size=2] table [x=insertions, y=observed, col sep=comma] {./data/innodb/randomball_pareto_0.5.csv};
\end{axis}
\end{tikzpicture}
\label{fig:innodbpar0.5}
}\\
\subfloat[Pareto $\alpha=1$ key distribution]{
\tikzsetnextfilename{innodb-pareto1}
\begin{tikzpicture}
\begin{axis}[
	width=0.49\textwidth,
	height=0.31\textheight,
	xlabel=Rows inserted in millions,
	xlabel near ticks,
	ylabel near ticks,
	ylabel style={align=center, text width=0.27\textheight},
	ylabel=Recycle rate (last {10,000} insertions),
	xmin=0,
	xmax=5,
	ymin=0,
	ymax=50,
	mark repeat={50},
	]
	\addplot [mark=*, blue, mark size=2] table [x=insertions, y=observed, col sep=comma] {./data/innodb/goldengate_pareto_1.0.csv};
	\addplot [mark=square*, red, mark size=2] table [x=insertions, y=observed, col sep=comma] {./data/innodb/randomball_pareto_1.0.csv};
\end{axis}
\end{tikzpicture}
\label{fig:innodbpar1}
}\hfill
\subfloat[Pareto $\alpha=2$ key distribution]{
\tikzsetnextfilename{innodb-pareto2}
\begin{tikzpicture}
\begin{axis}[
	width=0.49\textwidth,
	height=0.31\textheight,
	xlabel=Rows inserted in millions,
	xlabel near ticks,
	ylabel near ticks,
	ylabel style={align=center, text width=0.27\textheight},
	ylabel=Recycle rate (last {10,000} insertions),
	xmin=0,
	xmax=5,
	ymin=0,
	ymax=50,
	mark repeat={50},
	]
	\addplot [mark=*, blue, mark size=2] table [x=insertions, y=observed, col sep=comma] {./data/innodb/goldengate_pareto_2.0.csv};
	\addplot [mark=square*, red, mark size=2] table [x=insertions, y=observed, col sep=comma] {./data/innodb/randomball_pareto_2.0.csv};
\end{axis}
\end{tikzpicture}
\label{fig:innodbpar2}
}\\
\subfloat[Normal key distribution]{
\tikzsetnextfilename{innodb-normal}
\begin{tikzpicture}
\begin{axis}[
	width=0.49\textwidth,
	height=0.31\textheight,
	xlabel=Rows inserted in millions,
	xlabel near ticks,
	ylabel near ticks,
	ylabel style={align=center, text width=0.27\textheight},
	ylabel=Recycle rate (last {10,000} insertions),
	xmin=0,
	xmax=5,
	ymin=0,
	ymax=50,
	mark repeat={50},
	]
	\addplot [mark=*, blue, mark size=2] table [x=insertions, y=observed, col sep=comma] {./data/innodb/goldengate_normal_1000.0.csv};
	\addplot [mark=square*, red, mark size=2] table [x=insertions, y=observed, col sep=comma] {./data/innodb/randomball_normal_1000.0.csv};
\end{axis}
\end{tikzpicture}
\label{fig:innodbnorm}
}\hfill
\subfloat[Buffer size and recycle rate]{
\tikzsetnextfilename{innodb-buffer-size}
\begin{tikzpicture}
\begin{axis}[
	width=0.49\textwidth,
	height=0.31\textheight,
	xlabel=Buffer size in MiB,
	xlabel near ticks,
	ylabel near ticks,
	ylabel style={align=center, text width=0.27\textheight},
	ylabel=Recycle rate (last million insertions),
	xmin=0,
	xmax=128,
	ymin=0,
	]
	\addplot [mark=*, Plum, mark size=2] table [x=Buffer, y=Rate, col sep=comma] {./data/innodb_buffer_size.csv};
\end{axis}
\end{tikzpicture}
\label{fig:innodbbuffer}
}

\caption{\innodb{} Insertion buffer recycle rates for various key distributions and memory sizes. (Higher is better)}}
\end{figure*}
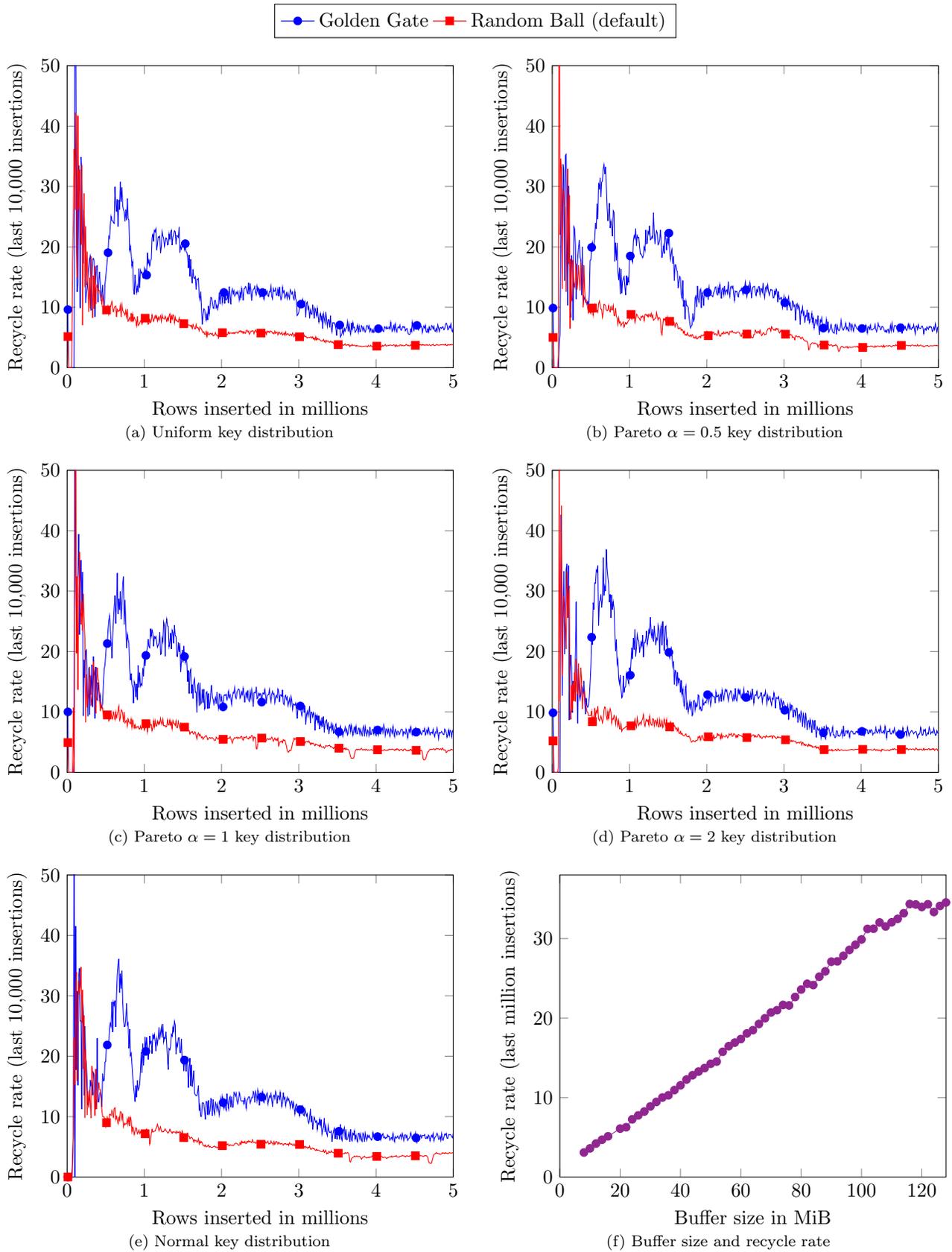


The keys of the rows are i.i.d.\ according to the same real-valued probability
distributions as in \secref{exp-synthetic}: uniform on $[0,1000000]$, Pareto
with parameter $\alpha = \{0.5,1,2\}$, and normal centered at 0 with standard
deviation $1000$. The results for the different distributions are shown in
\Cref{fig:innodbuni,fig:innodbpar0.5,fig:innodbpar1,fig:innodbpar2,fig:innodbnorm}.
The structure of the plot generally does not depend on the key distribution,
and while there is more noise, the overall picture is similar to the plots in
\Cref{fig:synth}. 

If we were to hold the number of leaves roughly constant and change the buffer
size, \Cref{lem:unilowerbound} suggests that the relationship with recycle rate
would be roughly linear. To test this, we ran the above experiment with buffer
sizes from 8mb to 128mb in 2mb increments. We performed 11 million insertions
with uniformly distributed keys each time, and then took the average recycle
rate for the last million rows. As demonstrated in \Cref{fig:innodbbuffer}, the
resulting plot is approximately linear.



\secput{related}{Related Work}

Balls-and-bins games are one of the most studied models in all of computer
science, so it would be impossible to do justice to the entire literature here.
Rather, we focus on prior work on \defn{dynamic} balls-and-bins games.  In
dynamic balls-and-bins games, balls are added and removed from bins according
to some rules, and the goal is to understand the long-term behavior of the
system.  Thus, ball-recycling games are instances of this general class,
although previously studied dynamic balls-and-bins games are quite different,
and are typically used to study load-balancing problems, rather than
throughput.

Previously studied models differ from ball-recycling games in several
ways: 
\begin{itemize}
\item The process for removing balls is assumed to be random, e.g. a
  random ball is removed, or a random bin is selected for emptying.
  Prior research has assumed these events are determined by some
  external process, so they have not studied algorithms for
  controlling this process.  As a result, although the theory of
  Markov chains has played a major role in the study of balls-and-bins
  games, the theory of Markov decision processes, which we use
  extensively, has not shown up at all.
\item The balls are thrown uniformly randomly.  This is a natural
  assumption for hashing and randomized load-balancing problems, but
  is not appropriate when studying updates to existing keys in a
  database.
\item The objective is to analyze the occupancy of the fullest bin or,
  in some models, to analyze the amount of time that balls wait in a
  queue.  Our objective is to analyze the number of balls recycled in
  each time step.
\item The number of balls in the system is not fixed.  Prior models
  were used for load balancing, in which balls correspond to tasks and
  bins resources, so it makes sense to model new balls entering the
  system asynchronously.  In our setting, the balls are the
  resource (i.e. they correspond to slots in a buffer), so
  they are fixed.
\item The study of dynamic balls-and-bins games was introduced
  simultaneously with the study of the power of multiple
  choices~\cite{AzarBrKa94}, and most past work on dynamic
  balls-and-bins games has been in the same model.  In our model,
  balls have only a single choice (i.e. their on-disk location),
  although it may be possible to extend our work to systems in which
  each ball has multiple choices (e.g. for an insertion buffer for an
  on-disk cuckoo hash table).
\end{itemize}

Azar, et al.~\cite{AzarBrKa94} introduced both the power of two
choices and dynamic balls and bins games.  They showed that if, at
each time step, a random ball is rethrown with $d$ uniformly random
choices, then, in $n^3$ time steps, the fullest bin has $\ln \ln n /
\ln d + O(1)$ balls w.h.p.

In his dissertation~\cite{Mitzenmacher96}, Mitzenmacher studied several dynamic
load-balancing problems.  This included several variants on the supermarket
model, in which balls arrive according to a Poisson process and enqueue
themselves in the shortest of $d$ queues that they select uniformly randomly
from $n$ queues.  Mitzenmacher showed that $d>1$ exponentially reduced the
average time a ball spent in a queue.  He also studied a variant in which, at
each time step, one ball was removed from one queue and was immediately
re-enqueued according to the above procedure.  Adler, et al.~\cite{AdlerBeSc98}
studied a variant of the supermarket model in which balls arrived in batches of
size $m$ and chose their queues in parallel, and showed that the average
waiting time remains $O(\ln \ln n)$ as long as $m$ is sufficiently smaller than
$n$.

Cole, et al.~\cite{ColeFrMa98} studied a model in which balls are recycled
one-at-a-time according to a recycling plan chosen in advance.  In their model,
there are an infinite number of labeled balls, and the adversary specifies in
advance two ball IDs for each time step: the first ID specifies a ball to be
removed from the system, and the second ID specifies a ball to be inserted.
Thus the number of balls currently in the system is always $n$.  The first time
a ball is inserted, it chooses $d$ bins uniformly at random and picks the least
loaded. From then on, whenever that ball reenters the system, it always goes
into that bin.  In their model, the adversary cannot examine the state of the
current system when deciding which ball to recycle, as in our model.  Given
this restriction, they show that the fullest bin has roughly $\ln \ln n / \ln
d$ balls.  V\"{o}cking~\cite{Voecking03} showed the shocking result that, by
choosing bins non-uniformly and breaking ties asymmetrically, the max load
could be reduced to $\ln \ln n / d\ln\phi_d+O(1)$ w.h.p.

Cole, et al.~\cite{ColeMaMe98} extended their results to routing through a
network: an adversary specified in advance the start time, end time, source,
and destination of flows, and the system used a power of two choices variant of
Valiant's randomized routing paradigm~\cite{ValiantBr81} to limit congestion to
$O(\ln \ln n)$ w.h.p.

Czumaj and Stemann~\cite{CzumajSt97} study a load balancing problem in which,
at each time step, a random ball is removed from the system and a new ball is
thrown using $d$ choices.  After the new ball is inserted, the $d$ bins
examined during its insertion are rebalanced.  Surprisingly, the max load is
still $O(\ln \ln n)$.


\bibliographystyle{abbrv}
\bibliography{bibliography}

\end{document}